\newcommand{\Qed}[1]{\hfill\ensuremath{\qed_{\textnormal{~#1}}}\renewcommand{\qedsymbol}{}}
\renewcommand*{\backref}[1]{}
\renewcommand*{\backrefalt}[4]{{%
    \ifcase #1 Not cited.%
          \or Cited on page~#2.%
          \else Cited on pages #2.%
    \fi%
    }}
\theoremstyle{plain}
\newtheorem{theorem}{Theorem}
\newtheorem{lemma}[theorem]{Lemma}
\newtheorem{proposition}[lemma]{Proposition}
\newtheorem{problem}{Problem}
\newtheorem{claim}[theorem]{Claim}
\newenvironment{theoremref}[1]{%
  \theoremrefinner
}{\endtheoremrefinner}
\theoremstyle{definition}
\newtheorem{definition}[theorem]{Definition} 
\newtheorem{remark}[theorem]{Remark}
\newtheorem{assumption}[theorem]{Assumption}
\newtheorem*{remark*}{Remark}
\crefname{claim}{claim}{claims}
\crefname{assumption}{assumption}{assumptions}
\title{Space Optimal Vertex Cover in Dynamic Streams}
\author[1]{Kheeran K. Naidu}
\author[2]{Vihan Shah}
\affil[1]{Department of Computer Science, University of Bristol, \texttt{\href{mailto:kheeran.naidu@bristol.ac.uk}{kheeran.naidu@bristol.ac.uk}}}
\affil[2]{Department of Computer Science, Rutgers University, \texttt{\href{mailto:vihan.shah98@rutgers.edu}{vihan.shah98@rutgers.edu}}}
\date{}
\newcommand{\eps}{\ensuremath{\varepsilon}}
\newcommand{\Paren}[1]{\Big(#1\Big)}
\newcommand{\bracket}[1]{\left[#1\right]}
\newcommand{\paren}[1]{\ensuremath{\left(#1\right)}\xspace}
\newcommand{\card}[1]{\left\vert{#1}\right\vert}
\newcommand{\prob}[1]{\Pr\paren{#1}}
\newcommand{\expect}[1]{\Exp\bracket{#1}}
\newcommand{\var}[1]{\textnormal{Var}\bracket{#1}}
\newcommand{\cov}[1]{\textnormal{Cov}\bracket{#1}}
\newcommand{\set}[1]{\ensuremath{\left\{ #1 \right\}}}
\newcommand{\poly}{\mbox{\rm poly}}
\newcommand{\alg}{\ensuremath{\mathcal{A}}\xspace}
\DeclareMathOperator*{\Exp}{\ensuremath{{\mathbb{E}}}}
\DeclareMathOperator*{\Prob}{\ensuremath{\textnormal{Pr}}}
\renewcommand{\Pr}{\Prob}
\newenvironment{tbox}{\begin{tcolorbox}[
		enlarge top by=5pt,
		enlarge bottom by=5pt,
		 breakable,
		 boxsep=0pt,
                  left=4pt,
                  right=4pt,
                  top=10pt,
                  arc=0pt,
                  boxrule=1pt,toprule=1pt,
                  colback=white
                  ]
	}
{\end{tcolorbox}}
\newcommand{\II}{\ensuremath{\mathbb{I}}}
\newcommand{\mireal}[1][]{
  \ifx\relax#1\relax%
    \II(\mione \,; \mitwo)%
  \else%
    \II(\mione \,; \mitwo\mid #1)%
  \fi
}
\renewcommand{\poly}{\mathop{\mathrm{poly}}\nolimits}
\newcommand{\Vopt}{\ensuremath{V^*}\xspace}
\newcommand{\sizeopt}{\textnormal{\ensuremath{\mbox{opt}}}\xspace}
\newtheorem{mdalg}{Algorithm}
\definecolor{ForestGreen}{rgb}{0.1333,0.5451,0.1333}
\newcommand{\vect}{\textnormal{\textbf{vec}}\xspace}
\newcommand{\NES}{\textnormal{\texttt{NE-Sampler}}\xspace}
\newcommand{\snes}{s_{\textnormal{\texttt{NES}}}}
\newcommand{\Measy}{\ensuremath{M_{\textnormal{easy}}}\xspace}
\newcommand{\Geasy}{\ensuremath{{G}_{R}}\xspace}
\newcommand{\ME}[1]{\ensuremath{M_{#1}}\xspace}
\newcommand{\GE}[1]{\ensuremath{{G}_{R}^{#1}}\xspace}
\newcommand{\High}[1]{\ensuremath{\textnormal{High}_{#1}}\xspace}
\newcommand{\Med}[1]{\ensuremath{\textnormal{Med}_{#1}}\xspace}
\newcommand{\Low}[1]{\ensuremath{\textnormal{Low}_{#1}}\xspace}
\newcommand{\degG}[2]{\ensuremath{\textnormal{deg}_{#2}(#1)}\xspace}
\newcommand{\eventmatch}{\ensuremath{\mathcal{E}_{\textnormal{M}}}\xspace}
\newcommand{\eventsparsify}{\ensuremath{\mathcal{E}_{\textnormal{S}}}\xspace}
\newcommand{\com}{\ensuremath{\textnormal{com}}\xspace}
\newcommand{\Com}{\ensuremath{\textnormal{Com}}\xspace}
\newcommand{\NL}{\ensuremath{N\!L}}
\newcommand{\Veasy}{\ensuremath{V(\Measy)}\xspace}
\newcommand{\I}{\ensuremath{\mathcal{I}}\xspace}
\newcommand{\GR}{\ensuremath{G_R}\xspace}
\newcommand{\algms}{\ensuremath{\mathcal{ALG}_\textit{MS}}\xspace}
\newcommand{\algnt}{\ensuremath{\mathcal{ALG}_\textit{NT}}\xspace}
\newcommand{\gvc}{\ensuremath{V_C'}\xspace}
\newcommand{\greedy}{\ensuremath{\textsc{Greedy}}\xspace}
\newcommand{\parms}{\ensuremath{\beta}\xspace}
\newcommand{\mors}{\ensuremath{\textsf{Match-or-Sparsify}_{\parms}}\xspace}
\newcommand{\morsn}{\ensuremath{\textsf{Match-or-Sparsify}_{n}}\xspace}
\newcommand{\morsm}{\ensuremath{\textsf{Match-or-Sparsify}_{\mu}}\xspace}
\newcommand{\mvc}{$\alpha$\textsf{MVC}\xspace}
\newcommand{\mm}{$\alpha$\textsf{MM}\xspace}
\begin{document}
\maketitle
\begin{abstract}
    We optimally resolve the space complexity for the problem of finding an \textsf{$\alpha$-approximate minimum vertex cover} (\mvc) in dynamic graph streams. We give a randomised algorithm for \mvc which uses $O(n^2/\alpha^2)$ bits of space matching Dark and Konrad's lower bound [CCC~2020] up to constant factors.
    By computing a random greedy matching, we identify `easy' instances of the problem which can trivially be solved by returning the entire vertex set. The remaining `hard' instances, then have sparse induced subgraphs which we exploit to get our space savings and solve \mvc.
    
    Achieving this type of optimality result is crucial for providing a complete understanding of a problem, and it has been gaining interest within the dynamic graph streaming community.
    For connectivity, Nelson and Yu [SODA~2019] improved the lower bound showing that $\Omega(n \log^3 n)$ bits of space is necessary while Ahn, Guha, and McGregor [SODA~2012] have shown that $O(n \log^3 n)$ bits is sufficient. 
    For finding an \textsf{$\alpha$-approximate maximum matching}, the upper bound was improved by Assadi and Shah [ITCS~2022] showing that $O(n^2/\alpha^3)$ bits is sufficient while Dark and Konrad [CCC~2020] have shown that $\Omega(n^2/\alpha^3)$ bits is necessary. The space complexity, however, remains unresolved for many other dynamic graph streaming problems where further improvements can still be made.
    \end{abstract}
    
\clearpage

\section{Introduction}
\textit{Graph streaming} is a setting in which a graph is specified by a sequence of edges, typically in arbitrary order.
It is particularly useful for processing massive graphs where having random access to the edges of the graph is either impossible or computationally infeasible.

Research in this area began with \emph{insertion-only streams}, where the stream is made up of a sequence of edge insertions only. 
In their seminal work, Feigenbaum, Kannan, McGregor, Suri, and Zhang \cite{fkmsz04} showed that for many problems including minimum spanning tree, connectivity, and bipartiteness, $\Omega(n)$ bits of space is necessary and $O(n \log n)$ bits is sufficient for any $n$-vertex graph. 
This logarithmic gap was often overlooked and deemed not important when proving optimality for graph problems, but it left unresolved the question of whether the logarithmic factor was required for simply storing edges or if other techniques could remove it.
About a decade later, Sun and Woodruff \cite{sun2015tight} showed that the logarithmic factor was indeed necessary by improving the lower bounds to $\Omega(n \log n)$ bits, asymptotically matching the upper bounds up to constant factors.


\textit{Dynamic graph streams}, which allow for sequences of both edge insertions and deletions, prove to be more difficult. 
Edges that arrive in the stream are not necessarily in the final graph as they may later be deleted. 
In fact, it is well-known in the community that it is impossible to deterministically return a single edge of a dense graph without storing all of its edges.
As a result, almost all dynamic graph streaming algorithms rely on counters which use $O(\log n)$ bits of space or they rely on $L_0$-sampling which optimally uses $\Theta(\log^3 n)$ bits of space\footnote{
This optimal space bound applies when the probability of success is at least $1 -\frac{1}{\poly(n)}$.
} \cite{jst11,KapralovNPWWY17}.
In essence, counters are used to solve the problem of determining whether an edge is present in an edge induced subgraph \cite{dk20} (see also \cite{ccehmmv16}), whereas $L_0$-sampling also returns the identity of a uniform random edge if one is present \cite{agm12,agm12b,k15,ccehmmv16,akly16, assadi2019sublinear,konrad2021frequent,kk22,as22}.
A notable exception includes spectral sparsification \cite{klmms14,kmmmnst20} which relies on $L_2$-heavy-hitters (non-uniform sampling).


Resolving the space complexity up to constant factors for dynamic graph streaming problems has continued to be an elusive task. Ahn, Guha, and McGregor \cite{agm12} gave an algorithm for connectivity using $O(n \log^3 n)$ bits of space, and for several years, the best known lower bound was the insertion-only bound of $\Omega(n \log n)$ bits \cite{sun2015tight}. However, in 2019, Nelson and Yu \cite{nelson2019optimal} improved the lower bound to $\Omega(n \log^3 n)$ bits in the dynamic graph streaming setting. 
To the best of our knowledge, this is the only problem in this setting which has space bounds that prove the necessity of the $\Theta(\log^3 n)$ overhead of randomly sampling an edge (using $L_0$-sampling).
The approximate minimum cut problem which has a $\Omega(n \log^3 n)$ bit lower bound \cite{nelson2019optimal} (and a $O(n \log^4 n)$ bit upper bound \cite{agm12}) similarly shows that logarithmic factors are necessary.
A perhaps more surprising result was the recent progress on \textsf{$\alpha$-approximate maximum matching} (\mm).
The lower bound of $\Omega(\nicefrac{n^2}{\alpha^3})$ bits \cite{dk20} (see also \cite{akly16}) and the previous upper bound of $O(\nicefrac{n^2}{\alpha^3} \cdot \log^4 n)$ bits \cite{akly16,ccehmmv16} seem to indicate that the logarithmic overhead of sampling an edge is required. However, Assadi and Shah \cite{as22} improved the upper bound to $O(\nicefrac{n^2}{\alpha^3})$ bits showing that this is not the case.
On the other hand, for problems such as vertex cover \cite{dk20}, dominating set \cite{kk22}, and spectral 
sparsification \cite{kmmmnst20}, their space bounds have a gap of logarithmic factors, and therefore further improvements can still be made.

\subparagraph{Our Results.}
In this work, we optimally resolve the space complexity up to constant factors for the problem of finding an \textsf{$\alpha$-approximate minimum vertex cover} (\mvc) in a dynamic graph stream. In particular, we improve the upper bound to $O(\nicefrac{n^2}{\alpha^2})$ bits, matching the $\Omega(\nicefrac{n^2}{\alpha^2})$ bits lower bound \cite{dk20} and showing that the logarithmic overhead is not required. Our main result is the following:

\begin{theorem}\label{thm:vcalg}
There exists a randomised dynamic graph streaming algorithm for \mvc that succeeds with high probability and uses $O(\nicefrac{n^2}{\alpha^2})$ bits of space for any $\alpha \leq n^{1 - \delta}$ where $\delta > 0$.
\end{theorem}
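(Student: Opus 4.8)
The plan is to reduce the vertex cover problem to \emph{finding a maximal matching} (or certifying that one would be large), using the classical bounds $\mu \le \mathrm{opt} \le 2\mu$ for the maximum matching size $\mu$, together with the fact that for \emph{any} maximal matching $M$ the set $V(M)$ is a vertex cover of size $2|M|\le 2\mu\le 2\,\mathrm{opt}$. Thus, whenever $\alpha\ge 2$, returning the endpoints of a maximal matching is already an $\alpha$-approximation; the only catch is that a \emph{large} maximal matching is too expensive to write down, so I would threshold at $k=\Theta(n/\alpha)$: if $\mu\ge k$ the algorithm returns all of $V$, which is valid since $n\le\alpha k\le\alpha\,\mathrm{opt}$; otherwise it must actually produce the (now small) maximal matching. (For $\alpha=O(1)$ the budget $O(n^2/\alpha^2)$ already lets us keep the whole graph via one parity bit per potential edge and solve the problem offline, so we may assume $\alpha\ge 2$.)

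The heart of the construction is a \emph{Match-or-Sparsify} primitive (\mors) maintained as a linear sketch of the edge vector. At the end of the stream it must do one of two things: (a) \emph{Match} --- extract a matching of size at least $k$ in $G$, which certifies $\mathrm{opt}\ge\mu\ge k$ and triggers the ``return $V$'' branch; or (b) \emph{Sparsify} --- certify, with high probability, that $\mu=O(k)$ while simultaneously handing back a small witness subgraph $G_R$ obtained by exactly recovering the induced subgraph $G[R]$ on a random vertex subset $R$ of size $O(k)$. The crucial point, and the source of the improvement over the $O(n^2/\alpha^2\cdot\polylog n)$ bound, is that since $|R|=O(k)$ we can afford to maintain $G[R]$ by storing just one $\mathbb{F}_2$ bit per pair in $\binom{R}{2}$ (a pair is present iff it has been toggled an odd number of times), costing $O(k^2)=O(n^2/\alpha^2)$ bits with \emph{no} logarithmic overhead; this replaces the $\Theta(\log^3 n)$-bit \Lsampler and the $\Theta(\log n)$-bit sparse-recovery counters of prior work by raw parity bookkeeping on a small vertex set, and by \NES-style non-empty-subgraph sampling where an actual edge identity is needed.

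In the ``hard'' regime I would then reconstruct a maximal matching --- equivalently a vertex cover of size $O(k)$ --- from the recovered data. Here the decomposition into high-, medium-, and low-degree vertices is used: when $\mu=O(k)$, every vertex of degree $\gg k$ must lie in every vertex cover of size $O(k)$, so there are only $O(k)$ such vertices and they can be pinned down from the sketch and placed into the cover outright; the residual graph is then low-degree, so its maximal matching has small support and is faithfully captured by $G_R$ at the appropriate sampling rate. Running the greedy routines $\greedym$ and $\greedyvc$ on the recovered structure and adding the high-degree vertices yields a vertex cover $\gvc$ with $|\gvc|=O(n/\alpha)$; one then checks via the sketch that $\gvc$ leaves no edge uncovered and argues $|\gvc|\le 2\mu\le\alpha\,\mathrm{opt}$, so it is a valid $\alpha$-approximation. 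Rescaling $k$ by a constant fixes the approximation constant to exactly $\alpha$.

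I expect the main obstacle to be two-fold. First, a random greedy matching cannot be produced in a single pass over a \emph{dynamic} stream; it has to be simulated through the linear sketch, and one must prove that the sampled subgraph $G_R$ retains a \emph{maximal} matching of $G$ --- not merely a large one, which would only suffice for approximate matching --- and do so \emph{without} any iterative peeling, since peeling over $O(\log n)$ rounds or $O(\log n)$ sampling levels would reintroduce exactly the logarithmic factors we are trying to remove. Second, squeezing the space all the way to $O(n^2/\alpha^2)$ bits forces every ingredient --- the large-matching detector, the recovery of $G_R$, the degree bookkeeping, and even the random seeds, which must be generated with bounded independence or a Nisan-type generator --- to be implemented with no $\polylog n$ slack while the failure-probability analysis still delivers high probability. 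Once these are in place the remaining work is assembling the branches and doing the space accounting uniformly over the whole range $\alpha\le n^{1-\delta}$, in particular the regime where $n^2/\alpha^2=o(n)$ and even storing $\Theta(n)$ counters is forbidden --- which is precisely why the ``small witness $G_R$'' viewpoint is essential.
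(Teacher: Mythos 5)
Your high-level decomposition --- threshold at $k = \Theta(n/\alpha)$, return $V$ when the matching is large, and avoid the $\Theta(\log^3 n)$-per-edge overhead in the sparse case --- does match the paper's, and the Match-or-Sparsify primitive you invoke is the paper's first ingredient (an $\alpha$-rescaled variant of Assadi--Shah's algorithm run once with $\beta=n$). The gap is in your sparsify branch. You propose recovering $G[R]$ for a \emph{random vertex subset} $R$ of size $O(n/\alpha)$ and storing it via $\binom{|R|}{2}$ parity bits. But $G[R]$ records only edges with both endpoints in $R$; edges from $R$ to $V\setminus R$ and edges entirely outside $R$ are invisible to this witness, and a vertex cover of $G$ must cover all of them. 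No degree bookkeeping on $G[R]$ alone can certify that. The paper's mechanism is crucially different: it randomly partitions \emph{all} of $V$ into $n/\alpha$ groups of size $\Theta(\alpha)$ via a $(10\log n)$-wise independent hash, so that every edge of $G$ lies between some pair of groups, and then maintains edge counts between every pair of groups \emph{modulo a constant} $c=\Theta(1/\delta)$. The reason the modular counters do not lie is the paper's key structural claim: after discarding ``simple'' groups (touching the greedy matching or containing an internal edge) and ``residual'' groups (residual neighbourhood at least $n^{\delta/2}$) --- of which there are $O(\mathrm{opt})$ each --- the remaining ``clean'' groups have, w.h.p., fewer than $c$ residual edges between any pair, so the $O(1)$-bit counter is exact wherever it is consulted.

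Two further points. Your route requires extracting a \emph{maximal} matching of $G$ from the sketch, which is strictly stronger than what Match-or-Sparsify delivers (a large matching, or a small matching with a sparse residual), and you correctly flag that achieving maximality without iterative peeling is the danger --- but you never resolve it. The paper sidesteps maximality entirely: it computes a $2$-approximate vertex cover greedily at the \emph{group level} on the contracted multigraph defined by the mod-$c$ counters, which needs only that the counters are correct on clean-clean pairs. Likewise, your final step ``check via the sketch that $\gvc$ leaves no edge uncovered'' has no supporting data structure in your design --- nothing you store can verify global coverage --- whereas the paper proves validity unconditionally by the case analysis over simple/residual/clean groups, never verifying it from the sketch.
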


\subparagraph{Previous Work.} 
It has been shown by Dark and Konrad \cite{dk20} that $\Omega(\nicefrac{n^2}{\alpha^2})$ bits is necessary for \mvc. They also gave a simple deterministic algorithm which uses $O(\nicefrac{n^2}{\alpha^2} \cdot \log \alpha)$ bits of space, matching the lower bound up to logarithmic factors. 
Their algorithm arbitrarily partitions the vertex set into $\nicefrac{n}{\alpha}$ groups of size $\alpha$ and uses counters, which introduce the logarithmic overhead, to maintain the number of edges between each of the $\Theta(\nicefrac{n^2}{\alpha^2})$ pairs of vertex groups. 
The solution follows by computing a group-level minimum vertex cover, and then returning the vertices of the covering groups.



\subparagraph{Main Techniques.}
We improve the approach of Dark and Konrad \cite{dk20} by additionally computing a supporting random \greedy matching and randomly partitioning the vertex set into $\nicefrac{n}{\alpha}$ groups, effectively using randomisation to reduce the space required.
The random \greedy matching returned is either large enough to imply a trivial solution for \mvc (`easy' case) or implies sparseness properties of the residual subgraph induced by the unmatched vertices (`hard' case).
To solve the `hard' cases, we use the sparseness properties and the random partitioning to argue that there are only $O(1)$ many edges between each pair of vertex groups in the residual subgraph. Therefore, storing edge counters for each of the $\Theta(\nicefrac{n^2}{\alpha^2})$ many pairs, as done by Dark and Konrad \cite{dk20}, now requires only $O(\nicefrac{n^2}{\alpha^2})$ bits of space in total.

\subparagraph{Sampling Strategies.}
The sparseness properties (of the residual subgraph) implied are reliant on the method of randomly sampling edges from the graph. 
Uniformly sampling from the edge set only implies sparseness properties sufficient for a small range of $\alpha$ since it is skewed to sampling high degree vertices.
On the other hand, non-uniform sampling -- sampling from the neighbourhood of a random set of vertices, coined {\em neighbourhood edge sampling} by Assadi and Shah \cite{as22} -- is less biased towards high degree vertices and implies the necessary sparseness properties for the full range of $\alpha$.
Indeed, Assadi and Shah \cite{as22} also use the approach of computing a \greedy matching on non-uniformly sampled edges to identify the `easy' and `hard' instances of \mm.
However, for \mvc, our `easy' and `hard' instances differ from those of \mm, so we require different guarantees. Furthermore, we use different techniques for solving the `hard' instances.


\subparagraph{Further Related Work.}
Resolving the space complexity up to constant factors has also been achieved for non-graph problems in the general data streaming setting.
For instance, Braverman, Katzman, Seidell, and Vorsanger \cite{braverman2014optimal} gave an upper bound for finding a constant factor approximation to the $k$-th frequency moment in constantly many passes that matches the lower bound of Woodruff and Zhang \cite{woodruff2012tight}.
Price and Woodruff \cite{price2013lower} showed a lower bound for any adaptive sparse recovery scheme that matches the upper bound of Indyk, Price, and Woodruff \cite{indyk2011power}.
Graph problems in other streaming settings have also been studied. For example, the settings which allow multiple passes over the stream \cite{kmm12,k18,ack19a,aksy20,kn21,a22,ajjst22}, have a random arrival order \cite{kmm12,abbms19,b20}, or have highly structured deletions via a sliding window \cite{cms13,cs14,bbm21} have been considered.
See the work by McGregor \cite{m14} for an excellent survey on graph streaming algorithms.

\paragraph{Outline.}
We begin in \Cref{sec:prelim} with some important notation and tools which we will later use.
In \Cref{sec:randomgreedy}, we discuss the guarantees required from a random \greedy matching for \mvc.
In \Cref{sec:main-result}, we present and analyse our algorithm that proves \Cref{thm:vcalg}.
Then, we conclude in \Cref{sec:concl}.

\section{Preliminaries} \label{sec:prelim}

For any $n$-vertex graph $G=(V,E)$, let $\mu(G)$ be the size of the maximum matching of the graph, let $\Vopt(G)$ be a minimum vertex cover, and let $\sizeopt(G)$ be its size. 
We will simply use $\mu$, \Vopt or \sizeopt if the graph is clear from context. 
For any subset of edges $F \subseteq E$, we denote the set of their endpoints by $V(F)$. For any subgraph $H$ of $G$ and vertex $v \in V$, we use $N_H(v)$ to denote the neighbourhood of $v$ in $H$ .

The graph $G$ may be specified as a dynamic graph stream\footnote{
A dynamic graph stream is a special case of the strict turnstile data streaming model \cite{m05} where we consider only bit-vectors which represent the edges of a graph.
} 
$\sigma = (\sigma_1, \sigma_2, ..., \sigma_N)$ such that $\sigma_j = (i_j, \Delta_j)$ where $i_j \in [m]$ for $m = \binom{n}{2}$ and $\Delta_j \in \set{1,-1}$ (insertions or deletions).
Note that edges may only be deleted if they have previously been inserted.
Additionally, the stream must produce a vector $\textbf{vec}(E) \in \set{0,1}^m$ that defines the edge set $E$, i.e., the $i^{th}$ entry of the vector indicates the presence of the edge indexed by $i \in [m]$.

In our work, we will rely on limited independence hash functions to reduce the space complexity of our algorithm.
Roughly speaking, a hash function sampled from a family of $k$-wise independent hash functions behaves like a totally random function when considering at most $k$ elements.
For simplicity, when we mention a $k$-wise independent hash function, we will mean a hash function sampled from a family of $k$-wise independent hash functions.
We use the following standard result for $k$-wise independent hash functions.
\begin{proposition}[\cite{MotwaniR95}]\label{prop:k-wise}
	For all integers $n,m,k \geq 2$, there is a family of $k$-wise independent hash functions $\mathcal{H} = \set{h: [n] \rightarrow [m]}$ such that sampling and storing a function $h \in \mathcal{H}$ takes $O(k \cdot (\log n + \log m))$ bits of space.
\end{proposition}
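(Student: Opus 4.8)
The plan is to exhibit the standard polynomial-hashing family over a finite field and to verify the two assertions of the statement: that it is $k$-wise independent, and that a single member is described by $O(k(\log n + \log m))$ bits. First I would fix a prime $p$ with $\max\set{n,m} \le p < 2\max\set{n,m}$, which exists by Bertrand's postulate, and identify $[n]$ with $\set{0,1,\dots,n-1} \subseteq \mathbb{F}_p$. For a coefficient vector $\mathbf{a} = (a_0,\dots,a_{k-1}) \in \mathbb{F}_p^{\,k}$, define
\[
    h_{\mathbf{a}}(x) \;=\; \paren{\textstyle\sum_{i=0}^{k-1} a_i x^{i} \bmod p}\bmod m ,
\]
and take $\mathcal{H} = \set{h_{\mathbf{a}} : \mathbf{a} \in \mathbb{F}_p^{\,k}}$, with $\mathbf{a}$ drawn uniformly at random.

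The heart of the argument is to show that the intermediate map $g_{\mathbf{a}}(x) = \sum_{i=0}^{k-1} a_i x^{i} \bmod p$, regarded as a function $[n] \to \mathbb{F}_p$ with $\mathbf{a}$ uniform, is $k$-wise independent with uniform marginals on $\mathbb{F}_p$. Fix any distinct $x_1,\dots,x_k \in [n] \subseteq \mathbb{F}_p$ and any targets $y_1,\dots,y_k \in \mathbb{F}_p$. The constraints $g_{\mathbf{a}}(x_j) = y_j$ form a linear system in $a_0,\dots,a_{k-1}$ whose matrix is the $k \times k$ Vandermonde matrix of the distinct points $x_1,\dots,x_k$; over a field this matrix is invertible, so exactly one $\mathbf{a}$ satisfies all $k$ constraints. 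Hence $\Pr_{\mathbf{a}}\bracket{g_{\mathbf{a}}(x_1)=y_1,\dots,g_{\mathbf{a}}(x_k)=y_k} = p^{-k}$, which simultaneously gives uniform marginals (sum over the remaining coordinates) and mutual independence of the $k$ evaluations. Post-composing every $g_{\mathbf{a}}$ with the single fixed map $z \mapsto z \bmod m$ cannot destroy the mutual independence of any $k$ evaluations, so $\mathcal{H}$ is $k$-wise independent.

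The one place that needs care — and the only real obstacle — is the uniformity of the final reduction $z \mapsto z \bmod m$: it maps the uniform law on $\set{0,\dots,p-1}$ exactly to the uniform law on $[m]$ only when $m \mid p$. This holds automatically in the common case where the codomain is a prime power (e.g.\ bucketing into $2^{j}$ cells, taking $p$ the corresponding power); for general $m$ one either tolerates the resulting $O(m/p)$ bias, which is driven below any inverse polynomial by choosing $p$ polynomially larger, or keeps the codomain $[p] \supseteq [m]$, which is still of size $O(\max\set{n,m})$. Any of these conventions suffices for the uses in this paper, so I would adopt the one that makes the reduction exact and state the proposition accordingly.

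Finally, the space bound is immediate. Given $n$ and $m$, the prime $p$ is fixed canonically (or stored in $O(\log n + \log m)$ bits), and a member of $\mathcal{H}$ is then specified by the $k$ field elements $a_0,\dots,a_{k-1} \in \mathbb{F}_p$, each of $\ceil{\log_2 p} = O(\log n + \log m)$ bits; sampling amounts to drawing these $k$ elements uniformly and independently. The total is $O(k(\log n + \log m))$ bits, as claimed.
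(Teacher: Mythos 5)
The paper does not prove this proposition; it is cited directly from Motwani and Raghavan's textbook, so there is no in-paper argument to compare against. Your blind proof reproduces the standard construction that underlies that citation: polynomial hashing of degree $k-1$ over a prime field $\mathbb{F}_p$ with $p = \Theta(\max\{n,m\})$, the Vandermonde invertibility argument for exact $k$-wise independence of the pre-reduction map $g_{\mathbf{a}}$, and the observation that a single member is described by its $k$ coefficients in $O(k\log p) = O(k(\log n + \log m))$ bits. That is all correct, and you were right to flag the one real subtlety — the final reduction $z \mapsto z \bmod m$ is exactly uniform only when $m \mid p$, and otherwise introduces an $O(m/p)$ relative bias. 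Your three proposed remedies (choose $p$ a multiple of $m$ when $m$ is a prime power, choose $p$ polynomially larger to drive the bias down, or keep the codomain $[p]$) are each standard, and any one of them closes the gap; a fully clean statement would commit to one (e.g.\ work over $\mathbb{F}_{2^\ell}$ with $\ell = \Theta(\log n + \log m)$ when $m$ is a power of two, which covers the bucketing applications in this paper, or simply accept the approximate version with inverse-polynomial bias). With that small commitment made explicit, the argument is complete.
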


We shall also use the following concentration result on an extension of Chernoff-Hoeffding bounds for $k$-wise independent hash functions. 
\begin{proposition}[\cite{SchmidtSS95}]\label{prop:chernoff-limited}
	Suppose $h$ is a $k$-wise independent hash function and $X_1,\ldots,X_m$ are $m$ random variables in $\set{0,1}$ where $X_i = 1$ iff $h(i) = 1$. 
   Let $X := \sum_{i=1}^{m} X_i$. 
   Then, for any $\eps > 0$, 
   \[
	   \prob{\card{X - \expect{X}} \geq \eps \cdot \expect{X}} \leq \exp\paren{-\min\set{\frac{k}{2},\frac{\eps^2}{4+2\eps} \cdot \expect{X}}}. 
   \] 
\end{proposition}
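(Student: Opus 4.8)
The plan is to prove this by the \emph{moment method}: with only $k$-wise independence we cannot control the full moment generating function $\Exp[e^{tX}]$ (a product over all $m$ variables) as in the textbook Chernoff argument, so instead we fix a positive integer $\ell$ with $2\ell \le k$ and apply Markov's inequality to the non-negative random variable $(X - \expect{X})^{2\ell}$:
\[
  \prob{\card{X - \expect{X}} \ge \eps\,\expect{X}} = \prob{(X-\expect{X})^{2\ell} \ge (\eps\,\expect{X})^{2\ell}} \le \frac{\Exp[(X-\expect{X})^{2\ell}]}{(\eps\,\expect{X})^{2\ell}}.
\]
Everything then reduces to (i) a good upper bound on the central moment $\Exp[(X - \expect{X})^{2\ell}]$, which is where $2\ell$-wise (hence $k$-wise) independence is used, and (ii) choosing $\ell$ optimally.

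For the central moment, write $Y_i := X_i - p_i$ with $p_i := \expect{X_i}$, so $\expect{Y_i} = 0$, $\card{Y_i}\le 1$, $\expect{Y_i^2} = p_i(1-p_i) \le p_i$, and $\sum_i p_i = \expect{X} =: \mu$. Expand $(X-\mu)^{2\ell} = \big(\sum_i Y_i\big)^{2\ell}$ into monomials and take expectations: each monomial $\Exp[\prod_{s=1}^{2\ell} Y_{i_s}]$ involves at most $2\ell$ distinct variables, so $2\ell$-wise independence lets us factor it according to the partition of the $2\ell$ positions induced by the index map. Any monomial in which some index occurs exactly once contains a factor $\expect{Y_i}=0$ (that $Y_i$ being independent of the remaining $\le 2\ell - 1$ factors), hence vanishes; so only partitions of $[2\ell]$ into blocks of size $\ge 2$ survive, and for each such partition into $r \le \ell$ blocks the surviving contribution is at most $\sum_{i_1,\dots,i_r} \prod_q \Exp[Y_{i_q}^{\,|B_q|}] \le \sum_{i_1,\dots,i_r}\prod_q p_{i_q} \le \mu^{r}$, using $\card{Y_i}\le 1$ and $\sum_i p_i = \mu$. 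Multiplying by the number of such partitions (a purely combinatorial quantity depending on $\ell$ and $r$) and summing over $r$ yields an explicit bound $\Exp[(X-\mu)^{2\ell}] \le F(\ell,\mu)$; the standard combinatorial estimate — this is the technical heart of \cite{SchmidtSS95} — shows that for $2\ell \le k$ it behaves, up to an absolute constant, like $\big(c\,\ell\,(\mu+\ell)\big)^{\ell}$.

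Finally I would substitute this into the Markov bound, obtaining (schematically) $\big(\tfrac{c\,\ell(\mu+\ell)}{\eps^2\mu^2}\big)^{\ell}$, and optimise over $\ell \le \lfloor k/2\rfloor$, splitting into the two regimes that produce the two terms of the $\min$. When $k$ is the binding quantity ($k/2 \le \tfrac{\eps^2}{4+2\eps}\mu$, forcing $\mu$ to be large) I take $\ell$ essentially equal to $k/2$, so that $(\eps\mu)^{2\ell}$ beats $F(\ell,\mu)$ by a factor at least $e^{k/2}$; when $\mu$ is binding I instead take $\ell \approx \tfrac{\eps}{2+\eps}\mu$ — which satisfies $2\ell \le k$ precisely because we are in the complementary regime — and this choice collapses the ratio to $e^{-\eps^2\mu/(4+2\eps)}$ once the constants are collected. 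The main obstacle is the combinatorial moment estimate in the previous paragraph: counting partitions of $[2\ell]$ into blocks of size at least two tightly enough that, after the optimisation, the exponent comes out as exactly the Chernoff-type quantity $\tfrac{\eps^2}{4+2\eps}\mu$ rather than something weaker; the rest is bookkeeping. (Since the statement is a verbatim restatement of a theorem of Schmidt, Siegel and Srinivasan, one may of course also simply invoke \cite{SchmidtSS95}.)
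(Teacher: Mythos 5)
The paper does not prove \Cref{prop:chernoff-limited} at all: it is stated as a black-box citation to Schmidt, Siegel, and Srinivasan \cite{SchmidtSS95}, so there is no in-paper argument to compare against. Your sketch of the moment method---Markov on $(X-\mu)^{2\ell}$ with $2\ell\le k$, expansion into monomials, killing anything with a singleton index via $\Exp[Y_i]=0$, bounding surviving partitions by $\mu^r$, and optimising $\ell$---is indeed the approach taken in that reference, so the two ``proofs'' agree in the trivial sense that both ultimately lean on \cite{SchmidtSS95}.

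One local point worth flagging if you ever want to carry the sketch through: your claim that in the regime $\frac{\eps^2}{4+2\eps}\mu\le k/2$ the choice $\ell\approx\frac{\eps}{2+\eps}\mu$ automatically satisfies $2\ell\le k$ is not quite right. From the regime condition you only get $\frac{\eps^2}{2+\eps}\mu\le k$, whereas $2\ell\approx\frac{2\eps}{2+\eps}\mu$, and these differ by a factor $2/\eps$; so for $\eps<2$ your $\ell$ can exceed $k/2$. The standard fix is to take $\ell$ on the order of the target exponent itself (roughly $\ell\approx\frac{\eps^2}{4+2\eps}\mu$, capped at $\lfloor k/2\rfloor$), which keeps you inside the $k$-wise window by construction, at the cost of a slightly more delicate optimisation; this is exactly the bookkeeping \cite{SchmidtSS95} carries out to land on the stated constant $\frac{\eps^2}{4+2\eps}$ rather than the weaker $\frac{\eps^2}{2e}$ that the naive moment bound $(c\ell\mu)^\ell$ would give.
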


Finally, we will use the following sketching tool for dynamic graph streams to test the size of the neighbourhood of a subset of vertices.
\begin{proposition}[{\cite{as22}}] \label{prop:Ntester}
Let $a \geq b \geq 2$ be known integers. 
Consider a $n$-vertex graph $G = (V,E)$ specified in a dynamic stream and let $S \subseteq V$ be a known set. 
Then, given a set $T \subseteq V$ of size at most $a$ at the end of the stream, there exists a randomised algorithm that returns ``Yes'' if $|N_G(S) \backslash T| \geq b$ or ``No'' if $|N_G(S) \backslash T| \leq \frac{1}{2} \cdot b$, uses $O(\frac{a}{b} \cdot \log^3 n)$ bits of space, and succeeds with probability at least $1 - n^{-3}$. We denote one such algorithm as $\algnt(S; a,b)$.
\end{proposition}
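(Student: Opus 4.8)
The plan is to recognise this as a subsampled $L_0$-estimation problem in disguise. Since the set $S$ is fixed before the stream, the \emph{$S$-degree vector} $y\in\IIZ^{V}$ defined by $y_v := |\{u\in S : \{u,v\}\in E\}|$ is a fixed linear image of the edge-indicator vector $\vect(E)$, so any linear sketch of $y$ can be maintained under both insertions and deletions (an edge update simply adds or subtracts a fixed column of the composed sketch matrix). The point is that $\supp{y} = N_G(S)$, hence $|N_G(S)\setminus T| = |\supp{y}\setminus T|$ for every query set $T$; the only feature beyond a textbook distinct-elements estimate is that $T$ is revealed only at the end, so it must be handled when \emph{decoding} the sketch rather than folded into it.

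Concretely, I would keep two things. First, a $(1\pm\tfrac1{10})$-multiplicative $L_0$-estimator of $y$, using $O(\log^2 n)$ bits, returning an estimate $\widehat{\nu}$ of $\nu := |N_G(S)|$. Second, $m := \Theta(\tfrac ab\log n)$ independent $L_0$-samplers of $y$, each of the \emph{constant}-success-probability variety (failure probability a small constant, and garbage-free with probability $1 - n^{-\Omega(1)}$), which costs $O(\log^2 n)$ bits each, for a total of $O(\tfrac ab\log^3 n)$ bits; when it succeeds, sampler $i$ returns a (near-)uniformly random element $R_i \in N_G(S)$, otherwise \textsc{Fail}. At the end, given $T$ with $|T|\le a$, let $m'$ be the number of samplers that succeeded and $Z := |\{i : R_i \notin T\}|$; answer ``Yes'' iff $\widehat{\nu}\cdot Z/m' \ge \tfrac34 b$, and ``No'' if $m' = 0$ or $N_G(S) = \emptyset$.

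For correctness, condition on the (high-probability) events that no sampler returns garbage, that $m'\ge m/4$ (Chernoff, since each sampler succeeds with constant probability), and that $\widehat{\nu} = (1\pm\tfrac1{10})\nu$. Given $m'$, each returned element is (near-)uniform in $N_G(S)$ and they are independent, so $Z$ is a sum of roughly-$\mathrm{Bernoulli}(p)$ variables with $p := |N_G(S)\setminus T|/\nu$. The crucial estimate is that in a ``Yes'' instance $\nu \le |N_G(S)\setminus T| + |T| \le |N_G(S)\setminus T| + a$, and since $a\ge b$ this gives $p \ge \frac{|N_G(S)\setminus T|}{|N_G(S)\setminus T|+a} \ge \frac{b}{b+a}\ge\frac{b}{2a}$, so $\expect{Z\mid m'} = p\,m' \ge \frac{b}{2a}\cdot\frac m4 = \Omega(\log n)$ --- exactly the slack that $m=\Theta(\tfrac ab\log n)$ samplers buy. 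A Chernoff bound (via \Cref{prop:chernoff-limited}, after replacing full independence of the sampler seeds by $\Theta(\log n)$-wise independence, stored cheaply by \Cref{prop:k-wise}) then gives $Z/m' = (1\pm\tfrac1{10})p$ with probability $1 - n^{-\Omega(1)}$; combining with $\widehat{\nu}$ and $p = |N_G(S)\setminus T|/\nu$, the quantity $\widehat{\nu}\cdot Z/m'$ lies within a factor $\tfrac43$ of $|N_G(S)\setminus T|$, so it exceeds $\tfrac34 b$ when $|N_G(S)\setminus T|\ge b$ and is at most $\tfrac23 b < \tfrac34 b$ when $|N_G(S)\setminus T|\le b/2$ (in the latter case one may also note $\nu = O(a)$, so all sub-sketches behave well). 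A union bound over the $O(\tfrac ab\log n)$ sketches (each correct with probability $1-n^{-\Omega(1)}$) and the $O(1)$ concentration events keeps the total failure probability below $n^{-3}$ after fixing constants.

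The step I expect to need the most care is the chain $p \ge b/(2a) \Rightarrow \expect{Z} = \Omega(\log n)$: it is what forces $\Theta(\tfrac ab\log n)$ samplers (not merely $\Theta(a/b)$), hence what makes the $O(\tfrac ab\log^3 n)$ bound tight, and it relies essentially on $a \ge b$ together with the ``$-T$'' costing only an additive $a$. Secondary obstacles are justifying the $O(\log^2 n)$-bit constant-success $L_0$-sampler (that the standard level-plus-fingerprint construction stays garbage-free with probability $1-n^{-\Omega(1)}$ even without the usual $\log n$-factor amplification) and the bookkeeping ensuring that $T$, unknown during the stream, enters only through the post-decoding count $Z$ --- which is why the argument is organised as ``sample from $N_G(S)$, then filter out $T$'' rather than as a direct streaming count of $N_G(S)\setminus T$. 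An equivalent route replaces the $m$ samplers by $O(\log n)$ geometrically-subsampled sparse-recovery sketches of $y$ with threshold $\Theta(\tfrac ab\log n)$, decoding the shallowest one that succeeds; the same estimate $p\ge b/(2a)$ is what makes that level shallow enough for concentration.
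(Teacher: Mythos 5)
The paper does not prove this proposition; it imports it verbatim from \cite{as22}, so there is no in-paper argument to compare against. That said, your reconstruction is sound and follows essentially the standard route for neighbourhood testing in dynamic streams: sketch the $S$-degree vector $y$ (a fixed linear image of $\vect(E)$), take $\Theta(\tfrac{a}{b}\log n)$ independent $L_0$-samples of $\supp{y} = N_G(S)$, then filter out $T$ at decoding time. The two numerical facts you single out are indeed the load-bearing ones: in a ``Yes'' instance $\nu \le |N_G(S)\setminus T| + a$ gives $p \ge b/(b+a) \ge b/2a$, so $\Exp[Z]=\Omega(\log n)$; and in a ``No'' instance $\nu \le b/2 + a \le \tfrac32 a$ guarantees the \emph{absolute} threshold $\tfrac{3bm'}{4\widehat\nu} = \Omega(\log n)$ even when $\Exp[Z]$ is $o(1)$, which is what rescues the upper-tail Chernoff when $|N_G(S)\setminus T|$ is tiny. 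You mention the second point only in passing (``$\nu = O(a)$, so all sub-sketches behave well''), and it deserves to be spelled out since multiplicative concentration of $Z/m'$ around $p$ genuinely fails there; what holds instead is an absolute tail bound $\prob{Z \ge t}$ with $t = \Omega(\log n)$ and $t \ge 1.3\,\Exp[Z]$.

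Two smaller points. First, you invoke constant-success, garbage-free-w.h.p.\ $L_0$-samplers at $O(\log^2 n)$ bits each; this is correct (the failure mode one amplifies against is FAIL, not garbage, and the fingerprint check already has polynomially small error without the $\log n$-fold repetition), but it is exactly the kind of folklore fact that should be cited or argued, since the more commonly quoted $\Theta(\log^3 n)$ bound for $L_0$-sampling is for $1/\poly(n)$ \emph{total} failure probability and would push your space to $O(\tfrac{a}{b}\log^4 n)$. Second, the appeal to $\Theta(\log n)$-wise independence of the sampler seeds is unnecessary: the samplers use fresh randomness anyway within your $O(\tfrac{a}{b}\log^3 n)$ budget, so the $R_i$ are genuinely independent and ordinary Chernoff applies; limited independence would matter only if you were sharing a short seed across samplers, which you are not.
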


\section{Sampling Strategies for Random Greedy Matchings }\label{sec:randomgreedy}
In this section, we discuss and present the tool that we use to either find a large matching or show that the residual subgraph induced by the unmatched vertices is sparse. 

This approach was also used in Assadi and Shah's recent work for \mm \cite{as22} to identify `easy' and `hard' instances of the problem.
For \mvc, the `easy' case is finding a large enough matching to imply that we can trivially return the entire vertex set to solve the problem.
The `hard' case is when we get a sparse residual subgraph, which is where our main savings in space come from. 
Identifying these cases can be accomplished by computing a \greedy matching on randomly sampled edges of the graph.

Uniformly sampling as many edges as possible from a $n$-vertex graph (using $L_0$-sampling) without exceeding $O(\nicefrac{n^2}{\alpha^2})$ bits of space followed by computing a \greedy matching implies sparseness properties based on an already known maximum degree bound of the residual subgraph induced by the unmatched vertices \cite{ahn2015correlation,konrad2018mis,gkms18ext}. 
Intuitively, uniform sampling is skewed towards sampling edges incident to high degree vertices. Hence, a \greedy matching either matches these high degree vertices or matches many of its neighbours (decreasing their residual degree), and regardless of the size of the matching found, this gives a $\poly(\alpha)$ max degree bound in the residual graph.
Furthermore, we can show that this also bounds the average degree (even when a small matching is found) since a worst-case instance\footnote{
Consider a graph with a large clique on $\Theta(\nicefrac{n}{\alpha})$ vertices where most the edges are sampled from, and many smaller cliques which assert the guaranteed max degree bounds.
}
practically has all vertices in the residual subgraph with max degree.
This degree bound, however, is only sufficient for solving \mvc for any $\alpha \ll n^{\frac{1}{3.5}}$.

Non-uniformly sampling the edges using neighbourhood edge sampling followed by $\greedy$, as done by Assadi and Shah \cite{as22}, proves to give better sparseness properties, and thus a better average degree bound\footnote{
Having an average degree bound is more difficult to work with, but in this case, the bound on the average degree is much smaller than the bound on the max degree in the uniform case.
}.
The benefit of neighbourhood edge sampling is that it biases away from sampling high degree vertices. Furthermore, when a small \greedy matching is found, the implication is that the residual subgraph is sparse.
Therefore, the average degree bound is sufficient for solving the `hard' case of \mvc for the full range of $\alpha$.

As previously mentioned, Assadi and Shah's algorithm called Match-or-Sparsify \cite{as22}, does exactly this, although its guarantees are not sufficient for our purposes. Hence, we first discuss their algorithm, and then explain the alterations we make.

\subparagraph{Match-or-Sparsify.} 
For some parameter $\parms \leq n$, Assadi and Shah's \mors algorithm non-uniformly samples edges using space $O(\nicefrac{\parms^2}{\alpha^3})$ bits, and then computes a \greedy matching from them.
They give an intricate analysis to show that their algorithm either finds a large matching of size at least $\nicefrac{\parms}{8\alpha}$ or implies that the residual subgraph has at most $20 \cdot \parms \cdot \log^4 n$ edges \cite[Lemma 16]{as22}.
Unlike uniform sampling, the residual properties (sufficiently) only hold when the matching is small -- a key property exploited in their analysis.
Additionally, in order for the guarantees to hold, they rely on the assumption that $\parms \geq \alpha^2 \cdot n^\delta$. 
Informally, when $\parms$ is set as the size of the maximum matching $\mu$, \morsm finds a large matching in `easy' graph cases and a sparse residual subgraph in the `hard' graph cases. However, $\mu$ is not known, so they find a setting of $\parms$ close to $\mu$ by running \mors in parallel with $\parms$ as all powers of 2 between $1$ and $n$. 



\subparagraph{Our Alterations.}
The first thing to note is that the `easy' and `hard' instances for \mm and \mvc are not the same. 
Consider \mors when a large \greedy matching is found. 
Since at least one endpoint of each matching edge must be in a vertex cover, it implies that $\sizeopt \geq \frac{\parms}{8\alpha}$.
However, returning a solution to \mvc at this stage can only be of size at most $\Theta(\parms)$, which would not be a trivial solution (the entire vertex set) with $\parms \ll n$.
Furthermore, we have no guaranteed sparseness properties since the matching found is large. Hence, instead of needing $\beta \approx \mu$, which requires $\log n$ many runs to find, we only need a single run of \morsn (with the parameter $\parms$ fixed as $n$). 
Secondly, their assumption that $\parms \geq \alpha^2 \cdot n^\delta$ implies that $\alpha \leq n^{\frac{1-\delta}{2}}$, but we require it to hold for any $\alpha \leq n^{1-\delta}$. 
Since we have an additional $\alpha$ factor of space (see \cite{dk20}), we can increase the number of non-uniformly sampled edges to use $O(\nicefrac{n^2}{\alpha^2})$ bits instead, which allows us to remove the assumption.
Finally, the increase in the number of samples also allows us to increase the sparseness guarantees of the residual subgraph by an $\alpha$ factor.
Therefore, this altered \morsn algorithm, denoted by \algms, gives us the following lemma (full proof given in \Cref{app:m-or-s} for completeness).

\begin{restatable}{lemma}{ms}\label{lem:ms}
There is a linear sketch for dynamic graph streams that, given any graph $G=(V,E)$ specified via $\vect(E)$, uses $O(\nicefrac{n^2}{\alpha^2})$ bits of space and with high probability outputs a matching $\Measy$ 
that satisfies \underline{at least one} of the following conditions for any $\alpha \leq n^{1-\delta}$ and $\delta > 0$:  
\begin{itemize}
\item \textbf{Match-case:} The matching $\Measy$ has at least $\frac{n}{8\alpha}$ edges; 
\item \textbf{Sparsify-case:} The induced subgraph of $G$ on vertices not matched by $\Measy$, denoted by $\Geasy$,  has at most $20 \cdot \frac{n}{\alpha} \cdot \log^4 n$ edges.
\end{itemize} 
\end{restatable}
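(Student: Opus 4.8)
The plan is to follow the template of Assadi and Shah's Match-or-Sparsify analysis \cite[Lemma 16]{as22}, instantiated with parameter $\parms = n$ and with the sampling budget inflated by a factor of $\alpha$, and to push the bookkeeping through the wider regime $\alpha \le n^{1-\delta}$. Concretely, I would reuse the neighbourhood–edge–sampling sketch underlying \morsn: using a $k$-wise independent hash function (\Cref{prop:k-wise}), mark a random subset of ``centre'' vertices, and for each marked centre keep a collection of $L_0$-sampler / sparse-recovery sketches applied to geometrically subsampled copies of its incidence vector, so that a $\poly\log n$ number of incident edges can be recovered at the end of the stream regardless of the centre's degree. Linearity of $L_0$-samplers and of subsampling makes this a linear sketch, as the lemma demands. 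With $\parms = n$, Assadi–Shah's budget of $O(\parms^2/\alpha^3)$ bits becomes $O(n^2/\alpha^3)$; raising the number of samplers kept per centre (equivalently, the total number of sampled edges) by a factor of $\alpha$ brings the space to $O(n^2/\alpha^2)$ bits and — this is the purpose of the extra factor — lowers by the same factor $\alpha$ the edge-density of the residual graph below which the sketch still recovers a residual edge with high probability. At the end of the stream, recover all stored edges and compute a \greedy maximal matching $\Measy$ on them.

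\textbf{Match-case and the reduction.} If $\card{\Measy} \ge \frac{n}{8\alpha}$ we output $\Measy$ and are done. Otherwise $\card{V(\Measy)} < \frac{n}{4\alpha}$, and since $\Measy$ is maximal on the recovered edge set, \emph{no} recovered edge has both endpoints outside $V(\Measy)$; that is, no recovered edge lies in $\Geasy := G[V \setminus V(\Measy)]$. So it suffices to show that, conditioned on being in this case, $\Geasy$ has at most $20 \cdot \frac{n}{\alpha} \cdot \log^4 n$ edges with high probability, for otherwise the sketch would (w.h.p.) have recovered an edge inside $\Geasy$, contradicting maximality of $\Measy$.

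\textbf{Sparsify-case density argument.} Assume for contradiction $\card{E(\Geasy)} > 20 \cdot \frac{n}{\alpha} \cdot \log^4 n$. Partition the vertices of $\Geasy$ into $O(\log n)$ classes by their degree within $\Geasy$; by averaging, one class $V^*$ with representative degree $d$ is incident to an $\Omega(1/\log n)$ fraction of $E(\Geasy)$, so $\card{V^*} = \Omega\!\left(\frac{n}{\alpha}\log^3 n / d\right)$. Running the same survival/recovery estimate as in \cite[Lemma 16]{as22}, but with the boosted sampling rate, the probability that some marked centre of $V^*$ has one of its $\Geasy$-edges survive the subsampling at the scale matched to $d$ and then get recovered is $1 - n^{-\omega(1)}$; a union bound over the $O(\log n)$ scales keeps this bound high. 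Any such recovered edge lies in $\Geasy$, the desired contradiction. Finally, a union bound over all $L_0$-sampler / sparse-recovery failure events and over the limited-independence concentration events (\Cref{prop:chernoff-limited}) delivers the ``with high probability'' qualifier for the whole sketch, and \Cref{prop:k-wise} gives the $O(n^2/\alpha^2)$-bit bound for storing the hash functions.

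\textbf{Main obstacle.} The crux is the concentration step in the Sparsify-case: re-deriving the Assadi–Shah recovery probability with the rescaled parameters and checking two things simultaneously — that the extra factor $\alpha$ of samples buys exactly the factor-$\alpha$ improvement in the residual edge bound, and that it lets us discard their working hypothesis $\parms \ge \alpha^2 n^\delta$ (which for $\parms = n$ only yields $\alpha \le n^{(1-\delta)/2}$) in favour of $\alpha \le n^{1-\delta}$. A secondary nuisance is verifying that the per-centre sampler count stays $n^{\Omega(\delta)}$ across the whole range of $\alpha$, so that all failure probabilities remain $n^{-\omega(1)}$ and survive the union bounds; this is exactly where the relaxed assumption must be used carefully.
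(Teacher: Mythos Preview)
Your high-level plan—set $\parms=n$, inflate the sampling budget by a factor $\alpha$ to $O(n^2/\alpha^2)$ bits, and argue this simultaneously tightens the sparsify bound by $\alpha$ and eliminates the hypothesis $\parms\ge\alpha^2 n^\delta$—is exactly what the paper does. But your description of both the algorithm and the analysis diverges from, and underestimates, what is actually needed.

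On the algorithm: the paper does not mark a single set of centres and attach per-vertex $L_0$-samplers to geometrically subsampled incidence vectors. It runs $2s=\Theta(n^2/(\alpha^2\log^3 n))$ \emph{independent} instances of \NES, each with its \emph{own} pairwise-hashed random set $V_i\subseteq V$ of expected size $\approx\alpha$, and each returning a \emph{single} edge $e_i$ from $V_i$ to a uniformly chosen vertex of $N(V_i)$; the greedy matching is then fed $e_1,e_2,\ldots$ in order. This per-step fresh randomness is what makes the analysis go through. On the analysis: your one-shot contradiction (``if $\Geasy$ is dense then w.h.p.\ some recovered edge lies inside it'') has an adaptivity problem—$\Geasy$ is determined by the recovered edges, so you cannot treat it as fixed when computing recovery probabilities, and a union bound over all possible $\Geasy$ is hopeless. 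The paper instead argues step by step: conditioning on all randomness prior to step~$i$ (which fixes $\ME{i}$ and $\GE{i}$) and on neither case having yet been triggered, the fresh pair $(V_i,e_i)$ enlarges $\ME{i}$ with probability $\Omega(\alpha\log^3 n/n)$; Chernoff over $s$ steps then forces the match-case. That per-step bound in turn requires a \emph{two-batch} structure you omit: the first $s$ steps are used solely to guarantee that every vertex of residual degree at least $n/(8\alpha)$ gets matched, and only once these high-degree vertices are gone does the degree-bucketing argument (your $V^*$) work—specifically, the Chebyshev/variance bound needed in the low-degree (``$N(V_i)$-step'') case fails if high-degree vertices are still present.
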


\section{Main Result}\label{sec:main-result}

In this section, we give a dynamic graph streaming algorithm for \mvc for any $n$-vertex graph which implies our main result:

\begin{theoremref}{\ref*{thm:vcalg}} \label{thm:vcalg-full}
There exists a randomised dynamic graph streaming algorithm for \mvc that succeeds with high probability and uses $O(\nicefrac{n^2}{\alpha^2})$ bits of space for any $\alpha \leq n^{1 - \delta}$ where $\delta > 0$.
\end{theoremref}

Before proceeding, we give the following standard assumption (with reason) which simplifies what we need to prove.

\begin{assumption}\label{asm:optapprox}
A randomised dynamic graph streaming $\Theta(\alpha)$-approximation algorithm that uses $O(\nicefrac{n^2}{\alpha^2})$ bits of space and succeeds on graphs where
$\sizeopt \geq \frac{n}{\alpha \cdot \log^2 n}$
is sufficient to prove \Cref{thm:vcalg-full}.
\end{assumption}

\begin{proof}[Reason]
Let $\alg$ be an algorithm that returns a $(c \cdot \alpha)$-approximation using $O(\nicefrac{n^2}{\alpha^2})$ bits of space. 
Run $\alg$ with parameter $\nicefrac{\alpha}{c}$ to get an $\alpha$-approximation which similarly uses $O(\nicefrac{n^2}{\alpha^2})$ bits.

Then, since we can run $\Theta(1)$ many algorithms which use $O(\nicefrac{n^2}{\alpha^2})$ bits of space in parallel without asymptotically increasing the space, we run an additional algorithm which detects and outputs a solution for graphs with small \sizeopt.

\subparagraph{Algorithm for small $\sizeopt$.}
We use the well-known algorithm for finding an exact minimum vertex cover in dynamic graph streams with probability at least $1 - \frac{1}{\poly(k)}$ given the promise that $\sizeopt \leq k$ with $k=\frac{n}{\alpha \cdot \log^2 n} \geq n^{\nicefrac{\delta}{2}}$ \cite{ccehmmv16}. Note that the $\poly(k)$ can be made a function of $\delta$ to get a success probability of at least $1 - {k^{\nicefrac{-20}{\delta}}} \geq 1 - n^{-10}$.
If $\sizeopt < k$, then we get an optimal solution; otherwise, we get a set of vertices of size $k$ which are not necessarily a solution.
Thus, we can detect this case by the size of the returned vertex cover being smaller than $k$.
The space taken by the algorithm is $O(k^2 \cdot \log^4 n) = O(\nicefrac{n^2}{\alpha^2})$ bits and it works for all $\alpha = \omega(1)$ 
(for $\alpha=\Theta(1)$ we can store the entire graph).
\end{proof}


\begin{algorithm*}[ht]
\caption{Optimal Dynamic Vertex Cover} 
\label{alg:optvc}
\textbf{Input:} A dynamic graph stream $\sigma$ for a $n$-vertex graph $G =(V,E)$, a small 
constant $\delta > 0$, and a positive integer $\alpha \leq n^{1-\delta}$ 

\textbf{Output:} A vertex cover $V_C$ of $G$
\vspace{2mm} 
\\ 
\textbf{Pre-processing:}
\begin{algorithmic}[1]
\State Initialise $\mathbf{M}$ to be an instance of \algms (\Cref{lem:ms})
\State Randomly partition $V$ into groups $V_1, V_2, ..., V_{\frac{n}{\alpha}}$ having size in 
$[\alpha/2, 2\alpha]$ \label{optvc:parti}
\State For each group $V_i$, initialise $\mathbf{N}_i$ to be an instance of $\algnt(V_i;a,b)$ (\Cref{prop:Ntester}) with $a = \nicefrac{n}{\alpha}$ and $b = n^{\nicefrac{\delta}{2}}$
\State Set $c = \nicefrac{15}{\delta}$
\label{optvc:countend1}
\end{algorithmic}
\vspace{2mm}
\textbf{Processing the stream:}
\begin{algorithmic}[1]
\setcounter{ALG@line}{\getrefnumber{optvc:countend1}}
\State Update $\mathbf{M}$ and each $\mathbf{N}_i$ using $\sigma$
\State For every pair of groups $V_i$ and $V_j$, store a counter $C_{i,j}$ for the number of edges between them modulo $c$
\State For every group $V_i$, store a counter $C_i$ for the number of internal edges
\label{optvc:countend2}
\end{algorithmic}
\vspace{2mm}
\textbf{Post-processing:}
\begin{algorithmic}[1]
\setcounter{ALG@line}{\getrefnumber{optvc:countend2}}
\State Let $\Measy$ be the matching returned by $\textbf{M}$
\If{$\Measy$ has at least $\frac{n}{8 \cdot \alpha}$ edges} \textbf{return} $V$ \label{optvc:match-case} \EndIf 
\State Let $V_C$ be the union of all groups $V_i$ containing a vertex of $\Measy$ or with $C_i > 0$ \label{optvc:simple}
\State Add to $V_C$ all remaining vertex groups $V_i$ where $\mathbf{N}_i$ returns ``Yes'' when $T = \Veasy$ \label{optvc:residual}
\State Consider the multi-graph $G'$ obtained by contracting the vertices of each remaining vertex group $V_i$ into a single vertex $v_i$ where $C_{i,j}$ represents the multiplicity modulo $c$ of each edge $(v_i,v_j)$ in $G'$
\State Greedily compute a vertex cover $\gvc$ of $G'$
\State For all $v_i \in \gvc$, add vertex group $V_i$ to $V_C$ 
\State \textbf{return} $V_C$
\end{algorithmic}
\end{algorithm*}

\subparagraph{Algorithm Description.} 
Let $G = (V,E)$ be specified by a dynamic graph stream, $\delta > 0$,
and $\alpha \leq n^{1 - \delta}$ be the inputs to \Cref{alg:optvc}. 
The algorithm, in its pre-processing step, partitions $V$ into $\nicefrac{n}{\alpha}$ groups using a $(10 \cdot \log n)$-wise independent hash function (when the space allows, i.e., for small $\alpha$, we do this using a uniform random permutation instead), and we later show that all their sizes lie between $\alpha/2$ and $2\alpha$ with high probability. 
During the stream, it maintains counters modulo some constant for the number of edges between each pair of 
groups and (standard) counters for the number of internal edges of each group. 
In parallel, it computes a random matching $\Measy$ using an instance of \algms (\Cref{lem:ms}) and maintains residual
neighbourhood size testers for each vertex group using instances of \algnt (\Cref{prop:Ntester}). 
In the post-processing step, if the matching is of size at least $\frac{n}{8 \cdot \alpha}$, then the entire 
vertex set is returned. 
Otherwise, the vertex groups containing any vertex of the matching or any internal edges are entirely picked in the solution -- we call these \textit{simple vertex groups}. 
Next, the remaining vertex groups $V_i$ whose residual neighbourhood is large, $|N_{\GR}(V_i)| = |N_G(V_i) \backslash 
\Veasy| \geq n^{\delta/2}$ where $\GR = G[V \backslash \Veasy]$, are added to the solution -- we call these \textit{residual vertex groups}. 
Finally, among the leftover \textit{clean vertex groups}, the algorithm uses the counters modulo 
some constant to perform a group-level vertex cover, and then further adds the covering groups to the solution 
before returning it.

\begin{definition}[Simple Vertex Groups]\label{defn:simple}
We say that a vertex group $V_i$ is \textit{simple} if any of its vertices are matched by $\Measy$ or 
it has at least one internal edge, i.e., $|V_i \cap \Veasy| > 0$ or $C_i > 0$.
\end{definition}

\begin{definition}[Residual Vertex Groups]\label{defn:residual}
We say that a vertex group $V_i$ is \textit{residual} if it is not simple and has a large residual 
neighbourhood, i.e., $|N_{\GR}(V_i)| \geq n^{\delta/2}$.
\end{definition} 

\begin{definition}[Clean Vertex Groups]\label{defn:clean}
We say that a vertex group $V_i$ is \textit{clean} if it is not simple or residual, i.e., $|V_i \cap \Veasy| = 0$, $C_i = 0$ and $|N_{\GR}(V_i)| < n^{\delta/2}$.
\end{definition}

Note that throughout the subsequent analysis of \Cref{alg:optvc}, all results succeed with high probability. 
Hence, at any point, we can do a simple union bound to show that they all hold with high probability.
As such, we condition on this event here to avoid explicitly doing so during the analysis.


Let $G$ be the input graph of the algorithm. We begin the analysis with the following observation: If $G$ contains a matching of size at least 
$\frac{n}{8 \cdot \alpha}$, then $V$ is a valid $(8 \cdot \alpha)$-approximation of a minimum vertex cover $\Vopt$ since at least one endpoint of a matching edge must be in a valid vertex cover.
Therefore, if the condition of \Cref{optvc:match-case} is satisfied, the 
algorithm terminates and the solution is a valid $\Theta(\alpha)$-approximation (`easy' graph instances). 
Otherwise, the 
algorithm progresses with $|\Measy| < \frac{n}{8 \cdot \alpha}$, i.e., the sparsify-case of \Cref{lem:ms} (`hard' graph instances). 
This implies that the residual subgraph $\GR$ is sparse with at most $20 \cdot \frac{n}{{\alpha}} \cdot \log^4 n$ many edges.
As such, we need to prove that we also get a $\Theta(\alpha)$-approximation in the sparsify-case. 

We highlight here that the algorithm adds vertex groups to the solution for various reasons, which are determined by whether it is a simple, residual, or clean vertex group (see \Cref{defn:simple,defn:residual,defn:clean}). 
Hence, we proceed with the analysis of the sparsify-case by considering these different types of vertex groups separately.

\subparagraph{Simple Vertex Groups.} 
Let $\I_s$ be the index set of the simple vertex groups. We argue that there are not too many of these, so we can add all of them to the solution.
\begin{claim} \label{clm:simple}
The number of simple vertex groups $|\I_s|$ is at most $2 \cdot \sizeopt(G)$.
\end{claim}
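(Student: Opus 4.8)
The plan is to build a charging map $\phi$ from the simple vertex groups to the vertices of a fixed minimum vertex cover $\Vopt$ of $G$, and then to show that no vertex of $\Vopt$ is the image of more than two groups; this immediately gives $|\I_s| \le 2\cdot\sizeopt(G)$. Recall from \Cref{defn:simple} that a group $V_i$ is simple precisely when $|V_i \cap \Veasy| > 0$ (it contains an endpoint of an \Measy-edge) or $C_i > 0$ (it contains an internal edge of $G$); the argument in fact does not use the sparsify-case hypothesis at all.

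To define $\phi$ on a simple group $V_i$, first suppose $V_i$ has an internal edge $e$, i.e.\ both endpoints of $e$ lie in $V_i$. Since $e \in E$, the cover $\Vopt$ contains an endpoint of $e$; pick one such endpoint as $\phi(V_i)$, so $\phi(V_i) \in \Vopt \cap V_i$. Otherwise $V_i$ has no internal edge but contains a vertex $v$ matched by $\Measy$, say to $u$; because $V_i$ has no internal edge, $u \notin V_i$. The edge $(u,v)$ lies in $E$, so $\Vopt$ contains $u$ or $v$: if $v \in \Vopt$ set $\phi(V_i) = v \in \Vopt \cap V_i$, and otherwise set $\phi(V_i) = u \in \Vopt \setminus V_i$. (If $V_i$ qualifies as simple in more than one way, just fix any one of these rules.)

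It remains to bound, for each $w \in \Vopt$, the number of groups mapped to $w$. Call a group $V_i$ with $\phi(V_i) = w$ a \emph{local} witness if $w \in V_i$ (this covers the first rule and the $v \in \Vopt$ subcase) and a \emph{remote} witness if $w \notin V_i$ (the $u \in \Vopt$ subcase). Since the groups $V_1,\dots,V_{n/\alpha}$ are pairwise disjoint, at most one of them contains $w$, so $w$ has at most one local witness. For a remote witness $V_i$, the rule forces $w$ to be matched by \Measy to some vertex $v \in V_i$; as \Measy is a matching, $w$ has at most one partner $v$, and $v$ lies in a single group, so $w$ has at most one remote witness. Hence $|\phi^{-1}(w)| \le 2$ for every $w \in \Vopt$, and summing over $w$ gives $|\I_s| \le 2|\Vopt| = 2\cdot\sizeopt(G)$.

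The only real subtlety — and the step I would be most careful about — is the bookkeeping that keeps the constant at $2$ rather than $3$: one must notice that a group with both an internal edge and a matched vertex is still a single group (so the crude bound ``\#groups-with-matched-vertex $+$ \#groups-with-internal-edge'' would cost a factor of $3$, not $2$), and that the remote subcase genuinely needs $u \notin V_i$, which is exactly why it is invoked only when $V_i$ has no internal edge. Once the local/remote split is set up correctly, the rest is immediate.
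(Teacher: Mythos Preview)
Your proof is correct and follows essentially the same idea as the paper's: charge each simple group to a vertex of $\Vopt$ (an endpoint of an internal edge, or an endpoint of the relevant $\Measy$-edge) and argue the charging is at most $2$-to-$1$. The paper phrases this more loosely (``for every two groups classified as simple [via a matching edge] there is at least one vertex of $\Vopt$ in their union'' together with ``each internal-edge group contains a $\Vopt$ vertex''), whereas your explicit local/remote split makes the factor~$2$ watertight; but the underlying argument is the same.
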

\begin{proof}
Each edge of the matching \Measy can cause up to two vertex groups to be classified as simple; however, they must have at least one vertex of \Vopt since at least one endpoint of every matching edge must be in \Vopt. 
Therefore, for every two groups classified as simple in this way, there is at least one vertex of \Vopt in their union. 
On the other hand, a group could also be classified as simple if it contains an internal edge, where one of its endpoints must be in \Vopt. 
Hence, for each group classified as simple in this way, there is at least one vertex of \Vopt in it.
Then, it follows that the number of simple vertex groups must be at most $2 \cdot\card{\Vopt} =  2\cdot\sizeopt$.
\end{proof}

\subparagraph{Residual Vertex Groups.} 
Let $\I_r$ be the index set of the residual vertex groups. 
Recall that any residual vertex group must have at least $n^{\nicefrac{\delta}{2}}$ many 
residual neighbours. 
We note, however, that due to the guarantees of the neighbourhood size tester algorithm $\algnt$ (see \Cref{prop:Ntester}), there are some misclassifications, so some residual vertex groups are also of size between $\frac{1}{2} \cdot n^{\nicefrac{\delta}{2}}$ and $n^{\nicefrac{\delta}{2}}$.
This will not be an issue, and moving forward, when we mention residual vertex groups, we assume that this includes the misclassifications. 
Now, we argue that there are not too many residual vertex groups, so we can add them all to the solution.

\begin{claim} \label{clm:residual}
The number of residual vertex groups $|\I_r|$ is at most $\sizeopt(G)$ with high probability.
\end{claim}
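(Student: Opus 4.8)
The plan is to charge each residual vertex group to a distinct chunk of the vertex cover $\Vopt(G)$, using the fact that residual vertex groups see many residual neighbours but the residual subgraph $\GR$ is globally sparse. Recall that a residual group $V_i$ (including the misclassified ones) satisfies $|N_{\GR}(V_i)| \geq \frac{1}{2} \cdot n^{\delta/2}$, and that $\GR$ has at most $20 \cdot \frac{n}{\alpha} \cdot \log^4 n$ edges by the sparsify-case of \Cref{lem:ms}. I would first observe that every residual group $V_i$ has at least $\frac{1}{2} n^{\delta/2}$ edges of $\GR$ incident to it (one per residual neighbour, at least), and that these edges have no endpoint inside any simple group (clean and residual groups are disjoint from $\Veasy$). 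Summing over the residual groups, and noting that each edge of $\GR$ is counted at most twice, gives $|\I_r| \cdot \frac{1}{2} n^{\delta/2} \leq 2 \cdot 20 \cdot \frac{n}{\alpha} \cdot \log^4 n$, so $|\I_r| = O(\frac{n}{\alpha \cdot n^{\delta/2} } \cdot \log^4 n)$.

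The next step is to convert this into a bound in terms of $\sizeopt(G)$ rather than $n$. Here I would invoke \Cref{asm:optapprox}: we only need to handle graphs with $\sizeopt(G) \geq \frac{n}{\alpha \cdot \log^2 n}$, equivalently $\frac{n}{\alpha} \leq \sizeopt(G) \cdot \log^2 n$. Substituting, $|\I_r| = O\big(\sizeopt(G) \cdot \frac{\log^6 n}{n^{\delta/2}}\big)$, and since $n^{\delta/2}$ dominates any fixed power of $\log n$ for $n$ large, this is at most $\sizeopt(G)$ for $n$ sufficiently large (for small $n$ we store the whole graph, per the preliminaries). The high-probability qualifier comes from conditioning on the events already assumed to hold: correctness of \Cref{lem:ms} (so $\GR$ really is sparse), correctness of all the $\algnt$ instances (so the only misclassified groups are those with residual neighbourhood between $\frac12 n^{\delta/2}$ and $n^{\delta/2}$, still $\geq \frac12 n^{\delta/2}$), and the group sizes lying in $[\alpha/2, 2\alpha]$.

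The main obstacle is getting the edge-counting argument clean: one must be careful that the $\frac{1}{2} n^{\delta/2}$ residual neighbours of $V_i$ genuinely correspond to $\frac{1}{2}n^{\delta/2}$ distinct edges of $\GR$ incident to $V_i$ (true, since $N_{\GR}(V_i) = N_G(V_i) \setminus \Veasy$ and each such neighbour contributes at least one edge into $V_i$ within $\GR$), and that no double-counting across groups inflates things by more than a factor of $2$ (each $\GR$-edge has two endpoints, hence lies in at most two groups). Once that bookkeeping is done, the rest is plugging in the sparsity bound from \Cref{lem:ms} and the lower bound on $\sizeopt$ from \Cref{asm:optapprox}, and absorbing the polylogarithmic factors into the $n^{\delta/2}$ slack. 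I do not expect any difficulty with the probabilistic part, since all the relevant high-probability events have already been collected into the single conditioning event stated just before the Simple Vertex Groups paragraph.
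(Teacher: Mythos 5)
Your proof is correct, but it takes a genuinely different route from the paper's. The paper argues probabilistically over the random partition: it bounds the \emph{average} degree of a vertex in $\GR$ by $\frac{40 \log^4 n}{\alpha}$, concludes $\Exp\bracket{\card{N_{\GR}(V_i)}} \leq 80 \log^4 n$ for a non-simple group, applies Markov's inequality to bound $\prob{V_i \text{ residual}}$ by $\frac{\log^5 n}{n^{\delta/2}}$, sums to get $\Exp\bracket{\card{\I_r}} \leq \frac{n \log^5 n}{\alpha n^{\delta/2}}$, and then applies Markov a second time against $\sizeopt \geq \frac{n}{\alpha \log^2 n}$. This chain only yields failure probability $n^{-\delta/4}$, which is not high probability, so the paper has to repair it by running $\Theta(1/\delta)$ parallel copies and detecting failures. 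Your argument sidesteps all of this: once the sparsify-case of \Cref{lem:ms} holds (which is already conditioned on), the bound $\card{\I_r} \cdot \tfrac12 n^{\delta/2} \leq 2\card{E(\GR)} \leq 40 \tfrac{n}{\alpha}\log^4 n$ is a \emph{deterministic} double-counting consequence — every residual group has at least $\tfrac12 n^{\delta/2}$ distinct $\GR$-edges incident to it, and each $\GR$-edge lies in at most two groups — and then $\card{\I_r} \leq 80\tfrac{n}{\alpha}\tfrac{\log^4 n}{n^{\delta/2}} \leq \tfrac{n}{\alpha\log^2 n} \leq \sizeopt$ for $n$ large enough. This is simpler, makes no further use of the randomness of the partition, and eliminates the parallel-repetition patch the paper needs; the only randomness it inherits is the events you list (sparsity of $\GR$ and correctness of the $\algnt$ instances), which are exactly the ones already conditioned on in the surrounding analysis. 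One cosmetic note: your opening sentence says you will ``charge each residual vertex group to a distinct chunk of $\Vopt$,'' but what you actually do is charge residual groups to edges of $\GR$ and only invoke $\sizeopt$ at the very end via \Cref{asm:optapprox}; the argument is sound, just not a charging-to-$\Vopt$ scheme in the sense that \Cref{clm:simple} is.
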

\begin{proof}
We have that $\card{\Veasy}$ is at most $\frac{n}{4\cdot\alpha}$ and \GR has at most $20 \cdot \frac{n}{{\alpha}} \cdot \log^4 n$ many edges. 
As such, \GR has $n - \card{\Veasy} \geq \frac{n}{2}$ vertices, and the average degree of a vertex in $\GR$ is at most $20 \cdot \frac{n}{{\alpha}} \cdot \log^4 n \cdot \frac{2}{n}=\frac{40 \log^4 n}{\alpha}$.
Since each non-simple vertex group $V_i$ is fully contained in \GR and has at most $2\alpha$ vertices, we have that $\expect{\card{N_{\GR}(V_i)}} \leq \frac{40 \log^4 n}{\alpha} \cdot 2 \alpha = 80 \log^4 n$. Then, it follows by Markov's inequality that
\begin{align}
\begin{split}
\prob{\textrm{$V_i$ is residual} \mid \textrm{$V_i$ is non-simple}}
&\leq \prob{|N_{\GR}(V_i)| \geq \frac{1}{2} \cdot n^{\nicefrac{\delta}{2}}} \\
&\leq \frac{2 \cdot 80 \log^4 n}{n^{\nicefrac{\delta}{2}}} 
\leq \frac{\log^5 n}{n^{\nicefrac{\delta}{2}}}.
\end{split} \label{eq:residualN}
\end{align}

Let $X_i$ be the indicator random variable that a non-simple vertex 
group $V_i$ is a residual vertex group, then $R = \sum_{i \in [\frac{n}{\alpha}] \backslash \I_s} 
X_i$ is the number of residual vertex groups. 
By \Cref{eq:residualN}, we have the following:
\begin{align*}
\mathbb{E}[R] 
= \sum_{i \in [\frac{n}{\alpha}] \backslash \I_s} \prob{X_i} 
\leq \sum_{i \in [\frac{n}{\alpha}]} \frac{\log^5 n}{n^{\nicefrac{\delta}{2}}} 
= \frac{n \cdot \log^5 n}{\alpha \cdot n^{\nicefrac{\delta}{2}}}.
\end{align*}
Finally, since $\sizeopt(G) \geq \frac{n}{\alpha \cdot \log^2 n}$ (\Cref{asm:optapprox}), a further application of Markov's inequality implies the result:
\begin{align*}
\prob{|\I_r| > \sizeopt} \leq \prob{R > \frac{n}{\alpha \cdot \log^2 n}}
\leq \frac{n \cdot \log^5 n}{\alpha \cdot n^{\nicefrac{\delta}{2}}} \cdot \frac{\alpha \log^2 n}{n} 
\leq  n^{-\nicefrac{\delta}{4}}.
\end{align*}

Note that we can easily increase the success probability by running the algorithm in parallel $40/\delta$ times and detecting failures when the number of residual groups is more than $\nicefrac{n}{\alpha \cdot \log^2 n}$. 
Then, with probability at least $1-n^{-10}$, one of the runs will succeed.
This only increases the space of the algorithm by a constant factor since $40/\delta = \Theta(1)$.
\end{proof}

\subparagraph{Clean Vertex Groups.} 
Let $\I_c$ be the index set of the clean vertex groups and let $\I_c^+$ be the ones added to the solution, which also corresponds to the group-level vertex cover $\gvc$ in \Cref{alg:optvc}.

Before analysing the group-level vertex cover, we note that the relevant counters are stored modulo $c$. 
This means that if the number of edges between clean vertex groups is some multiple of $c$, the corresponding counter would be $0$ and the group-level vertex cover would be incorrect. 
Hence, we want the number of edges between clean vertex groups to be less than $c$ with high probability.

\begin{claim} \label{clm:groupsedges}
For all pairs of clean vertex groups $V_i$ and $V_j$, with high probability, \[|N_G(V_i) \cap V_j| < c.\]
\end{claim}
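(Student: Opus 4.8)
The plan is to use the cleanness hypothesis to restrict attention to the residual subgraph induced on low-degree vertices, bound the first moment there, upgrade this to a tail bound via a $c$-th-moment argument organised by shared witnesses, and conclude with a union bound over all $\binom{n/\alpha}{2}$ pairs of groups.

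\textbf{Reduction to low-degree vertices.} Since $V_i$ and $V_j$ are clean they are non-simple (\Cref{defn:clean}), hence disjoint from $\Veasy$, so every $G$-edge between them lies in the residual subgraph $G_R := G[V\setminus\Veasy]$ and $N_G(V_i)\cap V_j = N_{G_R}(V_i)\cap V_j$. The match-case of \Cref{lem:ms} was already disposed of at \Cref{optvc:match-case}, so I may assume the sparsify-case, in which $G_R$ has at most $E_R := 20\cdot\nicefrac{n}{\alpha}\cdot\log^4 n$ edges. Cleanness also forces every $v\in V_i$ to satisfy $\deg_{G_R}(v)\le\card{N_{G_R}(V_i)}<n^{\delta/2}$, and likewise for $V_j$; so, writing $L := \set{v\in V : \deg_{G_R}(v)<n^{\delta/2}}$ and $H := G_R[L]$, we have $N_{G_R}(V_i)\cap V_j = N_H(V_i)\cap V_j$ with $H$ having at most $E_R$ edges, maximum degree below $n^{\delta/2}$, and at most $2E_R$ non-isolated vertices. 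Finally, the partition $V_1,\dots,V_{n/\alpha}$ is produced by the $(10\log n)$-wise independent hash function $h\colon V\to[\nicefrac{n}{\alpha}]$ of the pre-processing step (or, for $\alpha$ small enough, a uniformly random balanced partition, which is only more concentrated), and $h$ uses fresh randomness independent of the sketch $\mathbf M$, hence of $G_R$; so I would fix an arbitrary such $H$ and argue over $h$ alone, in fact bounding $\card{N_H(V_i)\cap V_j}$ for every ordered pair $i\ne j$.

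\textbf{First moment and tail.} A union bound over neighbours and $2$-wise independence gives $\expect{\card{N_H(V_i)\cap V_j}}\le(\nicefrac{\alpha}{n})^2\sum_{w}\deg_H(w)=\nicefrac{2\alpha^2E_R}{n^2}=\nicefrac{40\,\alpha\log^4 n}{n}\le 40\,n^{-\delta}\log^4 n$ since $\alpha\le n^{1-\delta}$ — far below $c=\nicefrac{15}{\delta}$. For the tail, observe that $\card{N_H(V_i)\cap V_j}\ge c$ forces a set $W$ of $c$ distinct vertices of $L$, each in $V_j$ and each with a neighbour in $V_i$; fixing a witness neighbour $\phi(w)\in N_H(w)$ for each $w\in W$, I would union over $W$ and $\phi$, grouped according to the partition of $W$ into the fibres of $\phi$, of sizes $s_1,\dots,s_d$ with $\sum_t s_t=c$ and $1\le d\le c$. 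For such a pattern, the common value $u_t$ of $\phi$ on the $t$-th fibre has $\deg_H(u_t)\ge s_t$, so it can be chosen in at most $\nicefrac{2E_R}{s_t}$ ways (a handshake bound) and its $s_t$ neighbours forming the fibre in at most $\binom{n^{\delta/2}}{s_t}$ ways, while the associated event — placing $W$ inside $V_j$ and the $d$ distinct $u_t$'s inside $V_i$ — has probability $(\nicefrac{\alpha}{n})^{c+d}$ by $(c+d)$-wise independence (valid as $2c\le 10\log n$ for $n$ large). A short calculation using $\nicefrac{n^{\delta/2}\alpha}{n}\le n^{-\delta/2}$ and $\nicefrac{2E_R\alpha}{n}=40\log^4 n$ bounds each pattern's contribution by $(40\log^4 n)^{d}\cdot n^{-\delta c/2}$; summing over the at most $2^c$ patterns,
\[
\prob{\card{N_H(V_i)\cap V_j}\ge c}\;\le\;\bigl(80\log^4 n\bigr)^{c}\cdot n^{-\delta c/2}.
\]

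\textbf{Conclusion and the main obstacle.} As $c=\nicefrac{15}{\delta}=\Theta(1)$, the factor $(80\log^4 n)^{c}$ is $\polylog(n)=n^{o(1)}$ while $n^{-\delta c/2}=n^{-15/2}$, so the bound is at most $n^{-7}$ once $n$ is large in terms of $\delta$; a union bound over the at most $(\nicefrac{n}{\alpha})^2\le n^2$ ordered pairs then gives $\card{N_G(V_i)\cap V_j}<c$ for all (clean) pairs with probability at least $1-n^{-5}$, which folds into the global high-probability conditioning of the analysis. The step I expect to be the crux is this tail bound, and the essential use of the cleanness hypothesis is exactly the passage to $H$: without a degree cap, a single high-degree vertex of $G_R$ placed in $V_i$ would drag $\ge c$ of its neighbours into $V_j$ with probability too large to survive the union over pairs. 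With the cap $n^{\delta/2}$ available, the delicate point is to organise the union over $c$-subsets by the number $d$ of distinct witnesses and to charge, per witness of fibre size $s$, both a degree cost $\binom{n^{\delta/2}}{s}(\nicefrac{\alpha}{n})^{s}$ and a handshake cost $\nicefrac{2E_R}{s}\cdot\nicefrac{\alpha}{n}$; the constant $c=\nicefrac{15}{\delta}$ is chosen precisely so that $n^{-\delta c/2}$ overpowers the $n^2$-sized union bound (any $c>4/\delta$ would do). The only property of $\Measy$ that enters is the sparsify-case edge bound of \Cref{lem:ms}.
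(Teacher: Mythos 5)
Your argument is correct, but it takes a genuinely different combinatorial route from the paper's. The paper exploits cleanness at the \emph{group} level: non-residuality of $V_i$ gives $|N_{G_R}(V_i)| \le n^{\delta/2}$, and one then union-bounds over the $\binom{n^{\delta/2}}{c}$ choices of a $c$-subset of this neighbourhood, each landing entirely in $V_j$ with probability at most $(2\alpha/n)^c$ — a two-line computation that never touches the sparsify-case edge count. You instead convert cleanness into a \emph{per-vertex} degree cap $\deg_{G_R}(v) < n^{\delta/2}$, pass to the low-degree subgraph $H = G_R[L]$, and run a more intricate union organised by the fibre structure of a witness map $\phi : W \to V_i$, charging each distinct witness a handshake cost $2E_R/s_t$ (so the edge bound from \Cref{lem:ms} does enter) and each fibre a $\binom{n^{\delta/2}}{s_t}$ neighbour cost. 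Both routes arrive at a per-pair bound of the form $\polylog(n)\cdot n^{-\delta c/2}$. What your version buys: $H$ is determined by the sketch $\mathbf{M}$ and is independent of $h$, so you never condition on the realisation of $V_i$ before arguing about where its residual neighbours land — a point that is delicate under $(10\log n)$-wise independence when groups have size $\Theta(\alpha) \gg \log n$ and which the paper's shorter phrasing leaves implicit. What the paper's version buys is brevity and the fact that it requires only the residual-neighbourhood bound, not the edge count. One small slip in your closing remark: $c > 4/\delta$ alone does not survive the union over $\Theta(n^2/\alpha^2)$ pairs together with the $\polylog(n)$ factor; you need $c$ comfortably larger (and indeed you use the paper's $c = 15/\delta$).
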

\begin{proof}
We prove a slightly generalised statement which implies what we need.
We show that there are less than $c$ edges of $\GR$ between any clean vertex group $V_i$ and any other vertex group $V_j$. This implies what we need since, by definition, all edges between clean vertex groups are in $\GR$.

Consider the random partitioning of $V$ using an at least $(3\cdot c)$-wise independent hash function (the algorithm uses $(10 \cdot \log n)$-wise independence).
A residual neighbour of the clean vertex group $v \in N_{\GR}(V_i)$ uniformly belongs to any of the other vertex groups.
Since there are $\frac{n}{\alpha} - 1$ of these (including $V_j$, but not including $V_i$), the probability that $v \in V_j$ is at most $\frac{2\alpha}{n}$.

Now, since clean vertex groups are non-residual, $|N_{\GR}(V_i)| \leq n^{\nicefrac{\delta}{2}}$, and for a fixed $V_i$ and $V_j$, we have that
\begin{align*}
    \prob{ |N_{\GR}(V_i) \cap V_j| \geq c}
    & \leq \binom{n^{\nicefrac{\delta}{2}}}{c} \cdot \left(\frac{2\alpha}{n} \right)^c 
    \leq \left(\frac{2\alpha}{n^{1-\nicefrac{\delta}{2}}} \right)^{\nicefrac{15}{\delta}}
    \leq n^{-7}
\end{align*}
where we have used $\alpha \leq n^{1 - \delta}$ and $c = \nicefrac{15}{\delta}$ in the final inequalities. 
Then, the result holds with probability at least $1 - n^{-5}$ by a union bound over all pairs of vertex groups.

Note that for small $\alpha$ we will partition $V$ into groups of size exactly $\alpha$ with a uniform random permutation due to concentration and space reasons (see \Cref{clm:partVspace}), but the above arguments also hold in this case.
\end{proof}

With \Cref{clm:groupsedges}, we can assume that all the counters between clean vertex groups count exactly the number of edges with high probability, that is, the modulo has no effect on the correctness of the algorithm. 
Thus, the setting is now identical to that of Dark and Konrad's algorithm \cite{dk20}, and we follow a similar argument as they did to analyse the group-level vertex cover and the corresponding subset of clean vertex groups added.

\begin{claim} \label{clm:clean}
The number of clean vertex groups added $\card{\I_c^+}$ is at most $2\cdot\sizeopt(G)$.
\end{claim}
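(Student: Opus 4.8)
The plan is to mirror Dark and Konrad's group-level vertex cover argument, now restricted to the clean vertex groups, and to be careful about the modulo-$c$ counters. First I would check that the multigraph $G'$ built in \Cref{alg:optvc} faithfully records the clean-to-clean adjacency of $G$. By \Cref{clm:groupsedges}, with high probability every pair of clean vertex groups $V_i,V_j$ satisfies $|N_G(V_i)\cap V_j| < c$, so the stored value $C_{i,j} \bmod c$ equals the true number of edges of $G$ between $V_i$ and $V_j$; in particular $C_{i,j} > 0$ exactly when there is at least one edge of $G$ with one endpoint in $V_i$ and the other in $V_j$. Hence $v_i v_j$ is an edge of $G'$ iff $N_G(V_i)\cap V_j \neq \emptyset$, and computing a greedy vertex cover on $G'$ is the same as computing one on this ``true'' contracted graph on the clean groups.

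Second, since $\gvc$ is produced by the standard greedy $2$-approximation for vertex cover (take both endpoints of every edge of a maximal matching of $G'$), we get $\card{\I_c^+} = \card{\gvc} \le 2\cdot\sizeopt(G')$, where $\sizeopt(G')$ is the size of a minimum vertex cover of $G'$. So it suffices to exhibit a vertex cover of $G'$ of size at most $\sizeopt(G)$.

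Third, I would take a minimum vertex cover $\Vopt(G)$ of $G$ and set $S := \set{\, i \in \I_c : V_i \cap \Vopt(G) \neq \emptyset \,}$. The set $\set{v_i : i \in S}$ is a vertex cover of $G'$: any edge $v_i v_j$ of $G'$ witnesses a genuine edge $uw$ of $G$ with $u \in V_i$ and $w \in V_j$, and since $\Vopt(G)$ covers $uw$, at least one of $V_i, V_j$ meets $\Vopt(G)$, i.e. at least one of $i,j$ lies in $S$. Because distinct vertex groups are disjoint, choosing one vertex of $V_i \cap \Vopt(G)$ for each $i \in S$ gives an injection $S \hookrightarrow \Vopt(G)$, so $\card{S} \le \card{\Vopt(G)} = \sizeopt(G)$. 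Chaining the inequalities, $\card{\I_c^+} \le 2\cdot\sizeopt(G') \le 2\cdot\card{S} \le 2\cdot\sizeopt(G)$.

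The argument is essentially routine once \Cref{clm:groupsedges} is available; the only points that need care are the first step — that the modulo-$c$ counters have not corrupted $G'$ among clean groups, so that an edge of $G$ between two clean groups is never ``lost'' — and the fact that the ``greedy'' cover in \Cref{alg:optvc} is the maximal-matching-based $2$-approximation rather than a weaker heuristic. Neither is a genuine obstacle, but both are needed for the constant $2$ in the stated bound.
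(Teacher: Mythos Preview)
Your proposal is correct and follows essentially the same approach as the paper: both argue that the modulo-$c$ counters faithfully encode clean-to-clean adjacency (the paper handles this just before the claim via \Cref{clm:groupsedges}), that the greedy cover is a $2$-approximation on $G'$, and that $\sizeopt(G')\le\sizeopt(G)$. The only cosmetic difference is that the paper phrases the last inequality via two monotonicity steps---$\sizeopt(G')\le\sizeopt(H)$ for $H=G[\cup_{i\in\I_c}V_i]$ (contraction does not increase \sizeopt) and $\sizeopt(H)\le\sizeopt(G)$ (induced subgraph)---whereas you directly exhibit the cover $\set{v_i : V_i\cap\Vopt(G)\neq\emptyset}$ of $G'$, which is precisely the content of those two steps unpacked.
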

\begin{proof}
Consider the subgraph $H = G[\cup_{i \in \I_c} V_i]$ induced by the clean vertex groups. 
Observe that since $H$ is an induced subgraph of $G$, $\sizeopt(H) \leq \sizeopt(G)$.
Then, since the vertex contractions to obtain the multi-graph $G'$ from $H$ cannot increase the size of its minimum vertex cover, we have that $\sizeopt(G') \leq \sizeopt(H)$. 
Finally, since we greedily compute the group-level vertex cover $\gvc$, it is a $2$-approximation and we have that $\card{\I_c^+} =\card{\gvc} \leq 2 \cdot \sizeopt(G') \leq 2 \cdot \sizeopt(G)$.
\end{proof}

By combining the analysis of the simple, residual, and clean vertex groups, we prove the approximation factor of the algorithm.
\begin{lemma}\label{lem:vcoptapprox}
\Cref{alg:optvc} returns a valid $\Theta(\alpha)$-approximation of a minimum vertex cover for any input graph $G$ with $\sizeopt \geq \frac{n}{\alpha \cdot \log^2 n}$.
\end{lemma}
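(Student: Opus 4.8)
The plan is to split along the two cases of \Cref{lem:ms} and, in each, check both that the returned set $V_C$ is a genuine vertex cover of $G$ and that $|V_C| = O(\alpha \cdot \sizeopt)$. The match-case is immediate: \Cref{optvc:match-case} returns $V$, which covers all edges, and since the $\ge \frac{n}{8\alpha}$ edges of $\Measy$ are vertex-disjoint and each forces a distinct vertex into $\Vopt$, we get $\sizeopt \ge \frac{n}{8\alpha}$, hence $|V| = n \le 8\alpha\cdot\sizeopt$, a $\Theta(\alpha)$-approximation. Everything else concerns the sparsify-case, i.e.\ $|\Measy| < \frac{n}{8\alpha}$; here I would first record the high-probability consequences we condition on (as already set up in the text): every group has size in $[\alpha/2, 2\alpha]$, the residual graph $\GR = G[V\setminus\Veasy]$ has at most $20\cdot\frac{n}{\alpha}\log^4 n$ edges with $|\Veasy| \le 2|\Measy| < \frac{n}{4\alpha} \le \frac{n}{\alpha} = a$ (so \Cref{prop:Ntester} legitimately applies with $T = \Veasy$), the testers behave as in \Cref{prop:Ntester}, and \Cref{clm:groupsedges} holds.

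For validity in the sparsify-case, I would take an arbitrary edge $(u,v) \in E$ and trace which step of \Cref{alg:optvc} covers it. If $u$ or $v$ is matched by $\Measy$, or if $u$ and $v$ lie in the same group, then that group is simple (\Cref{defn:simple}) and is fully added to $V_C$ in \Cref{optvc:simple}. Otherwise $u \in V_i$, $v \in V_j$ with $i \ne j$ and $u,v \notin \Veasy$, so $(u,v)$ is an edge of $\GR$; if either $V_i$ or $V_j$ is simple or residual it is already in $V_C$ (\Cref{optvc:simple} or \Cref{optvc:residual}), so the only remaining situation is that $V_i$ and $V_j$ are both clean. In that last situation I would invoke \Cref{clm:groupsedges}: since $|N_G(V_i)\cap V_j| < c$, the counter $C_{i,j}$ (kept modulo $c$) equals the true, nonzero number of edges between the groups, so $(v_i,v_j)$ is a genuine edge of the contracted multigraph $G'$; the greedily computed vertex cover $\gvc$ therefore contains $v_i$ or $v_j$, and the corresponding group is added to $V_C$ in the final loop. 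In every case $(u,v)$ is covered, so $V_C$ is a valid vertex cover.

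For the size bound, the groups placed in $V_C$ are exactly the simple groups, the residual groups, and the clean groups selected by $\gvc$; their index sets $\I_s$, $\I_r$, $\I_c^+$ are pairwise disjoint by construction. Adding \Cref{clm:simple}, \Cref{clm:residual}, and \Cref{clm:clean} gives at most $2\sizeopt + \sizeopt + 2\sizeopt = 5\sizeopt$ groups, and each group has size at most $2\alpha$, so $|V_C| \le 10\alpha\cdot\sizeopt$. Combined with validity, this yields the claimed $\Theta(\alpha)$-approximation whenever $\sizeopt \ge \frac{n}{\alpha\log^2 n}$ (the hypothesis used inside \Cref{clm:residual}).

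The main obstacle is the validity argument for edges running between two clean groups: this is the single point where the space-saving modular counters could fail, and it rests entirely on \Cref{clm:groupsedges} ensuring fewer than $c$ edges between any pair of clean groups so that reducing the edge counts modulo $c$ loses no information. The rest — the match-case bound, the disjointness of $\I_s,\I_r,\I_c^+$, the $2\alpha$ per-group size bound, and the fact that greedy yields a $2$-approximation of $\sizeopt(G')$ with $\sizeopt(G') \le \sizeopt(G)$ — is routine bookkeeping on top of the already-established claims, once all the underlying high-probability events have been conditioned on and union-bounded.
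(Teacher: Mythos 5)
Your proof is correct and follows the same structure as the paper's: the paper dispatches the match-case in the paragraph immediately preceding the lemma and then argues validity by contradiction (an uncovered edge must lie between two clean groups, which \Cref{clm:groupsedges} rules out) and the size bound by combining \Cref{clm:simple,clm:residual,clm:clean} with the $2\alpha$ per-group bound to get $|V_C| \le 10\alpha\cdot\sizeopt$. Your version is just a bit more explicit about the case analysis for validity and about checking that $|\Veasy| \le a$ so the \algnt{} testers are legitimately invoked, but the decomposition and the lemmas invoked are the same.
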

\begin{proof}
We first show that the solution $V_C$ is indeed a valid vertex cover, then we prove that it is a $\Theta(\alpha)$-approximation.

\subparagraph{Validity.} 
For the sake of finding a contradiction, let $e \in E$ be an edge which is not covered by $V_C$. 
Observe that any non-clean vertex group $V_i$ is added to $V_C$; thus, all edges with at least one endpoint in any of these vertex groups are covered. 
So, we have that $e$ must be in $G[\cup_{i \in \I_c} V_i]$, the subgraph induced by the clean vertex groups.

Let $i,j \in \I_c$ be such that $e$ has endpoints in the clean vertex groups $V_i$ and $V_j$, implying that there is an edge between their corresponding contracted vertices $v_i$ and $v_j$ in the multi-graph $G'$. 
It follows that one of $v_i$ or $v_j$ must be in the computed group-level vertex cover $\gvc$, so all vertices of either $V_i$ or $V_j$, including at least one endpoint of $e$, are added to $V_C$. 
However, this means that $e$ is covered by $V_C$, a contradiction.

\subparagraph{Approximation.} 
Observe that the solution $V_C$ is comprised of a (disjoint) union of all simple vertex groups, all residual vertex groups, and a subset of clean vertex groups. 
Recall that $\I_s$, $\I_r$ and $\I_c^+$ are the corresponding index sets of these groups.

By \Cref{clm:simple,clm:residual,clm:clean}, we have that
$|\I_s| + |\I_r| + |\I_c^+| \leq 5\cdot \sizeopt$.
Finally, since the size of each vertex group is at most $2 \alpha$, we can bound the size of the solution as follows: 

\[
\pushQED{\qed} 
|V_C| = \sum_{i \in \I_s \cup \I_r \cup \I_{c}^{+}} |V_i| \leq 2\alpha \cdot (|\I_s| + |\I_r| + |\I_{c}^{+}|) \leq 10\alpha \cdot \sizeopt. \qedhere
\]
\end{proof}


It remains to show that the algorithm can be implemented using $O(\nicefrac{n^2}{\alpha^2})$ bits of space. 
\Cref{alg:optvc} randomly partitions $V$, maintains several instances of \algms (\Cref{lem:ms}) and \algnt (\Cref{prop:Ntester}), and stores various counters. 
To show the space usage of the algorithm, we first consider each of these components separately.

\begin{claim} \label{clm:partVspace}
The partitioning of $V$ into $\frac{n}{\alpha}$ vertex groups of size in the range $[\alpha/2, 2\alpha]$ uses $O(\nicefrac{n^2}{\alpha^2})$ bits of space and succeeds with high probability.
\end{claim}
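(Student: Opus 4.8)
The plan is to split on the magnitude of $\alpha$ relative to $\log n$, handling the two regimes with different sources of randomness, and in each regime to bound separately the cost of storing the randomness and the probability that some group lands outside $[\alpha/2, 2\alpha]$. Throughout, recall that the target is $O(\nicefrac{n^2}{\alpha^2})$ bits, and that $\alpha \leq n^{1-\delta}$ already gives $\nicefrac{n^2}{\alpha^2} \geq n^{2\delta}$, a comfortably polynomial budget.

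First, for the regime of \emph{large} $\alpha$ (say $\alpha \geq 100\log n$), I would instantiate the partition via a single $(10\log n)$-wise independent hash function $h\colon [n] \to [\nicefrac{n}{\alpha}]$ and set $V_i = \set{v : h(v) = i}$. By \Cref{prop:k-wise} (applicable since $\nicefrac{n}{\alpha} \geq 2$ and $10\log n \geq 2$ for large $n$), sampling and storing $h$ uses $O(\log n \cdot (\log n + \log\nicefrac{n}{\alpha})) = O(\log^2 n)$ bits, which is $O(\nicefrac{n^2}{\alpha^2})$ since $\log^2 n = o(n^{2\delta})$. For concentration, fix a group index $i$ and let $X = |V_i| = \sum_{v \in [n]} \mathbf{1}[h(v) = i]$, so $\expect{X} = \alpha$. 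Applying \Cref{prop:chernoff-limited} with $\eps = \nicefrac{1}{2}$ (and the target value $i$ in place of $1$) gives
\[
\prob{\card{X - \alpha} \geq \tfrac{\alpha}{2}} \leq \exp\paren{-\min\set{5\log n,\ \tfrac{\alpha}{20}}} \leq n^{-5},
\]
the last step using $\alpha \geq 100\log n$. A union bound over the at most $n$ groups shows that every group has size in $[\nicefrac{\alpha}{2}, \nicefrac{3\alpha}{2}] \subseteq [\nicefrac{\alpha}{2}, 2\alpha]$ with probability at least $1 - n^{-4}$.

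Second, for the regime of \emph{small} $\alpha$ (say $\alpha < 100\log n$), I would instead draw a uniform random permutation $\pi$ of $[n]$ and cut it into consecutive blocks, i.e., take $V_i = \set{\pi(j) : (i-1)\alpha < j \leq i\alpha}$ (absorbing any remainder of $n$ modulo $\alpha$ into one block, which keeps all sizes within $[\nicefrac{\alpha}{2},2\alpha]$). Then the size condition holds deterministically, and storing $\pi$ costs $O(n\log n)$ bits; since $\alpha^2 = O(\log^2 n)$ in this regime, $O(n\log n) = O(\nicefrac{n^2}{\alpha^2})$ reduces to $n\log^3 n = O(n^2)$, which holds for large $n$. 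Combining the two regimes proves the claim. The only mildly delicate points are choosing the threshold between the regimes so that the $k$-wise Chernoff bound of \Cref{prop:chernoff-limited} actually becomes $n^{-\Omega(1)}$ (this is precisely why $k = \Theta(\log n)$-wise, rather than $O(1)$-wise, independence is used) and verifying that the permutation's $O(n\log n)$-bit cost stays within budget exactly in the range where that concentration fails; both are routine once the case split is made.
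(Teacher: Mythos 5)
Your proposal is correct and follows essentially the same route as the paper: a case split on the size of $\alpha$, with a uniform random permutation cut into blocks for small $\alpha$ and a $(10\log n)$-wise independent hash function plus the limited-independence Chernoff bound (\Cref{prop:chernoff-limited}) for large $\alpha$. The only cosmetic differences are the threshold ($100\log n$ versus the paper's $\log^2 n$) and the choice of $\eps$ ($\nicefrac12$ versus $0.1$), both of which are immaterial, and your handling of the remainder in the permutation case is a slight tidying of a point the paper glosses over.
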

\begin{proof}
We show that for small $\alpha < \log^2 n$, i.e, when we have sufficient space, we can achieve this with a uniform random permutation, and for large $\alpha \geq \log^2 n$, we use a  ($10 \cdot \log n$)-wise independent hash function.

\subparagraph{Small $\alpha$.}
For any $\alpha < \log^2 n$, we can randomly permute the vertices using $O(n \log n) = O(\nicefrac{n^2}{\alpha^2})$ random bits to create a uniform random partitioning of $V$ into $\frac{n}{\alpha}$ groups of size $\alpha$.

\subparagraph{Large $\alpha$.}
For any $\alpha\geq \log^2 n$, we can partition $V$ using a ($10 \cdot \log n$)-wise independent hash function $h: \left[n\right] \rightarrow \left[\frac{n}{\alpha}\right]$ which uses $O(\log^2 n) = O(\nicefrac{n^2}{\alpha^2})$ bits by \Cref{prop:k-wise}. 
We bound the size of the groups as follows:
Consider any group $V_j$ ($j \in \left[\frac{n}{\alpha}\right]$) and let $X_i$ be the random variable that is $1$ if vertex $i$ is hashed to $V_j$, i.e., $h(i)=j$. 
Let $X=\sum_i X_i$ represent the number of vertices in group $V_j$. We have $\expect{X}=n \cdot (\alpha/n)=\alpha$. 
Using \Cref{prop:chernoff-limited} with $\eps=0.1$,
	\begin{align*}
		\prob{\card{X - \expect{X}} \geq \eps \cdot \expect{X}} 
        &\leq \exp\paren{-5\log n} 
        \leq n^{-5}.
	\end{align*}
A union bound over all groups implies that with probability at least $1-n^{-4}$, all groups have size between $0.9\alpha$ and $1.1\alpha$.
\end{proof}

\begin{claim} \label{clm:MNspace}
The instances of \algms and \algnt, and the counters use $O(\nicefrac{n^2}{\alpha^2})$ bits of space.
\end{claim}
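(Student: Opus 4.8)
The plan is to bound the three ingredients of \Cref{alg:optvc} mentioned in the statement separately and then add up. The instance of \algms is immediate: \Cref{lem:ms} already guarantees that it uses $O(\nicefrac{n^2}{\alpha^2})$ bits, so there is nothing further to check there.

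Next I would handle the residual neighbourhood testers. Each instance $\mathbf{N}_i = \algnt(V_i;a,b)$ is set up with $a = \nicefrac{n}{\alpha}$ and $b = n^{\nicefrac{\delta}{2}}$, so by \Cref{prop:Ntester} it uses $O\!\left(\frac{a}{b}\log^3 n\right) = O\!\left(\frac{n\log^3 n}{\alpha\cdot n^{\nicefrac{\delta}{2}}}\right)$ bits. Summing over the $\nicefrac{n}{\alpha}$ groups — and absorbing the constant $\Theta(1/\delta)$ number of parallel repetitions used for boosting in the proof of \Cref{clm:residual} — gives a total of $O\!\left(\frac{n^2}{\alpha^2}\cdot\frac{\log^3 n}{n^{\nicefrac{\delta}{2}}}\right)$ bits. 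Since $\delta>0$ is a constant, $n^{\nicefrac{\delta}{2}} = \omega(\log^3 n)$, so this is $o(\nicefrac{n^2}{\alpha^2})$, hence certainly $O(\nicefrac{n^2}{\alpha^2})$ bits.

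Finally, the counters come in two families. The pair counters $C_{i,j}$ are maintained modulo $c = \nicefrac{15}{\delta}$, which is a constant, so each occupies $O(\log c) = O(1)$ bits; there are $\binom{n/\alpha}{2} = O(\nicefrac{n^2}{\alpha^2})$ pairs of groups, so this family uses $O(\nicefrac{n^2}{\alpha^2})$ bits in total. The internal-edge counters $C_i$ are ordinary counters: each counts at most $\binom{2\alpha}{2}\le n^2$ edges and so needs $O(\log n)$ bits, and there are $\nicefrac{n}{\alpha}$ of them, for a total of $O\!\left(\frac{n}{\alpha}\log n\right)$ bits. Using $\alpha \le n^{1-\delta}$ we have $\alpha\log n \le n^{1-\delta}\log n \le n$ for all sufficiently large $n$, hence $\frac{n}{\alpha}\log n \le \frac{n^2}{\alpha^2}$, and this family also fits in $O(\nicefrac{n^2}{\alpha^2})$ bits. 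Adding the three contributions gives the claim.

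The statement is routine, so I do not anticipate a genuine obstacle; the only two points needing care are (i) confirming that the polylogarithmic overhead of the neighbourhood testers is swallowed by the $n^{\nicefrac{\delta}{2}}$ in the denominator — which is exactly why $b$ was chosen to be a small polynomial in $n$ rather than a polylogarithmic quantity — and (ii) verifying that the $C_i$ counters do not dominate, which is where the hypothesis $\alpha \le n^{1-\delta}$ is invoked a second time, beyond its role in \Cref{clm:groupsedges}.
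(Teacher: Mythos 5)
Your proof is correct and follows essentially the same route as the paper: the same decomposition into the \algms instance, the $\nicefrac{n}{\alpha}$ testers, the pair counters modulo $c$, and the internal-edge counters, with the same bounds on each. The only cosmetic difference is that you express the internal-counter bound as $\alpha\log n \le n$ whereas the paper writes $O(\log n) = o(\nicefrac{n}{\alpha})$; these are the same use of $\alpha \le n^{1-\delta}$.
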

\begin{proof}
We use one instance of \algms (\Cref{lem:ms}) which takes space $O(\nicefrac{n^2}{\alpha^2})$ bits.
We use $\nicefrac{n}{\alpha}$ instances of \algnt (\Cref{prop:Ntester}) with 
parameters $a=\nicefrac{n}{\alpha}$ and $b= n^{\nicefrac{\delta}{2}}$ each of which take space $O(\frac{a}{b} 
\log^3 n) = O( (\nicefrac{n}{\alpha}) \cdot (\log^3 n/ n^{\nicefrac{\delta}{2}})) = o(\nicefrac{n}{\alpha})$ bits. 
This implies that the total 
space used by $\nicefrac{n}{\alpha}$ instances is $O(\nicefrac{n^2}{\alpha^2})$ bits.
We maintain counters modulo a constant $c=\nicefrac{15}{\delta}$ for the number of edges between every pair of 
vertex groups. 
Each takes $O(1)$ bits of space, and since there are $O(\nicefrac{n^2}{\alpha^2})$ many 
of these counters, this totals $O(\nicefrac{n^2}{\alpha^2})$ bits of space. 
We also maintain counters for 
the number of internal edges for each group which requires $O(\log n)=o(\nicefrac{n}{\alpha})$ bits of space 
each. 
Since there are $\frac{n}{\alpha}$ many groups, this totals $O(\nicefrac{n^2}{\alpha^2})$ bits of space.
\end{proof}

Hence, by \Cref{clm:partVspace,clm:MNspace}, we have shown that the components of \Cref{alg:optvc} use $O(\nicefrac{n^2}{\alpha^2})$ bits in total. We still, however, need to consider the format of the output. 
When $\alpha$ gets large enough, the space is only $o(n)$, whereas simply storing the output -- the vertices of a solution -- could require $\Theta(n)$ bits of space. We solve this by showing that we can implicitly store the solution when there is limited space.
\begin{claim}\label{clm:outspace}
The output of \Cref{alg:optvc} can be maintained using $O(\nicefrac{n^2}{\alpha^2})$ bits of space.
\end{claim}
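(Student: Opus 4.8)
The plan is to observe that the returned set $V_C$ is never an arbitrary subset of $V$, but is always a union of vertex groups, so it can be stored implicitly by recording only which of the $\nicefrac n\alpha$ groups it contains. In the match-case of \Cref{optvc:match-case} we have $V_C = V = \bigcup_{i \in [n/\alpha]} V_i$, and in every other case each post-processing step --- adding the simple groups (index set $\I_s$), the residual groups ($\I_r$), and the clean groups selected by the greedy group-level cover $\gvc$ ($\I_c^+$) --- contributes \emph{entire} groups. Hence $V_C = \bigcup_{i \in I} V_i$ for the index set $I = \I_s \cup \I_r \cup \I_c^+$ (or $I = [n/\alpha]$ in the match-case), and $V_C$ is completely determined by $I$ together with the random partition of $V$.

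First I would maintain $V_C$ during post-processing as a bit-vector of length $\nicefrac n\alpha$ indexed by the vertex groups, setting the $i$-th bit when $V_i$ is added to the solution (and setting all bits in the match-case); this uses $O(\nicefrac n\alpha)$ bits. Second, I would note that reconstructing the actual vertex set from this bit-vector requires only knowing the random partition of $V$, which is already stored within the claimed budget by \Cref{clm:partVspace}: for large $\alpha$ it is the $(10\log n)$-wise independent hash function $h : [n] \to [n/\alpha]$ costing $O(\log^2 n)$ bits, and for small $\alpha$ it is the stored random permutation costing $O(n\log n)$ bits, both $O(\nicefrac{n^2}{\alpha^2})$. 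Given the bit-vector and this description, membership of any $v \in V$ in $V_C$ is tested by checking whether $v$'s group index is marked, so the solution is fully specified, and it can be enumerated at the very end using only $O(\log n)$ additional working space.

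For the accounting it then suffices to note that $O(\nicefrac n\alpha) = O(\nicefrac{n^2}{\alpha^2})$, which holds since $\nicefrac n\alpha \ge 1$ (indeed $\alpha \le n^{1-\delta} < n$). Combined with \Cref{clm:partVspace,clm:MNspace}, this shows that every component of \Cref{alg:optvc} --- the partition, the instances of \algms and \algnt, the counters, and the (implicit) output --- fits in $O(\nicefrac{n^2}{\alpha^2})$ bits. I do not expect a genuine obstacle here; the only substantive point is the realisation that writing $V_C$ out explicitly would cost $\Theta(n)$ bits, which exceeds the budget exactly in the regime $\alpha = \omega(\sqrt{n})$, and that the group structure of the solution is precisely what circumvents this. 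The one detail to be careful about is that the implicit representation must refer to the same stored partition used throughout the algorithm, rather than a freshly sampled one.
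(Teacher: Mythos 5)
Your proof is correct and takes essentially the same approach as the paper: represent $V_C$ implicitly via a bit-vector over the $\nicefrac{n}{\alpha}$ groups together with the already-stored random partition, giving $O(\nicefrac{n}{\alpha}) = O(\nicefrac{n^2}{\alpha^2})$ bits. The only cosmetic difference is that the paper handles $\alpha < \log^2 n$ by storing $V_C$ explicitly rather than reusing the bit-vector representation, but both choices fit the budget and neither is more general.
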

\begin{proof}
For $\alpha < \log^2 n$, we can maintain the vertices of the solution explicitly. 
For $\alpha \geq \log^2 n$, we rely on the hash function $h$ used to partition $V$ (see \Cref{clm:partVspace}). 
Recall that vertices are added to the solution at a group level, so we can simply maintain a bit vector of length $\frac{n}{\alpha}$ representing the groups added to the solution. 
Then, the output consists of $h$ and the bit vector which is sufficient for checking if a vertex belongs to the solution and uses $O(\log^2 n + \frac{n}{\alpha})=O(\nicefrac{n^2}{\alpha^2})$ bits of space.
\end{proof}


We have now shown that \Cref{alg:optvc} can be implemented using $O(\nicefrac{n^2}{\alpha^2})$ bits of space. Therefore, combined with \Cref{lem:vcoptapprox} and \Cref{asm:optapprox}, we have proven our main result, \Cref{thm:vcalg-full}.

\section{Conclusion}
\label{sec:concl}

In this paper, we have resolved the space complexity of \mvc for the full range of $\alpha$. 
We have provided a randomised algorithm which asymptotically matches the lower bound \cite{dk20} up to constant factors, showing that $\Theta(\nicefrac{n^2}{\alpha^2})$ is necessary and sufficient for this problem.

The previous best algorithm for \mvc was a deterministic one using $O(\nicefrac{n^2}{\alpha^2} \cdot \log \alpha)$ bits of space \cite{dk20}. 
We have shown that we can remove the logarithmic overhead using randomness.
Can we, however, remove this logarithmic factor using deterministic techniques or otherwise prove a deterministic lower bound which shows that it is necessary?

Our work continues the direction set by the results on connectivity \cite{agm12,nelson2019optimal} and matchings \cite{dk20,as22}; we resolve the space complexity (up to constant factors) of another problem in the dynamic graph streaming setting.
However, other problems still remain open.
Hence, can we achieve this for other dynamic graph streaming problems such as dominating set \cite{kk22} and spectral sparsification \cite{kmmmnst20}?

\section*{Acknowledgements}

We are grateful to Sepehr Assadi and Christian Konrad for many helpful discussions.
We also appreciate the valuable comments from our APPROX 2022 reviewers.

\pagebreak
\bibliography{main}
\pagebreak

\appendix

\section{An altered Match-or-Sparsify}
\label{app:m-or-s}



In this section, we prove that an altered version of Assadi and Shah's \morsn implies \Cref{lem:ms}. Informally speaking, given a graph $G$, this lemma gives an algorithm that either finds a large matching in $G$ or identifies a sparse induced subgraph of $G$. 

The analysis we give follows the analysis of Assadi and Shah \cite{as22}, and we give the full argument for completeness. We will also highlight the key alterations that we make. Before beginning, we give some required probabilistic and sketching tools in \Cref{app:probtools} and \Cref{app:sketchtools}, respectively.

\subsection{Probabilistic Tools} \label{app:probtools}

 \begin{proposition}[Chernoff bound; c.f.~\cite{dubhashi2009concentration}]\label{prop:chernoff}
 	Suppose $X_1,\ldots,X_m$ are $m$ independent random variables with range $[0,1]$ each. 
	 Let $X := \sum_{i=1}^m X_i$ and $\mu_L \leq \expect{X} \leq \mu_H$. 
	 Then, for any $\eps > 0$, 
 	\[
 	\prob{X >  (1+\eps) \cdot \mu_H} \leq \exp\paren{-\frac{\eps^2 \cdot \mu_H}{3+\eps}} \quad \textnormal{and} \quad \prob{X <  (1-\eps) \cdot \mu_L} \leq \exp\paren{-\frac{\eps^2 \cdot \mu_L}{2+\eps}}.
 	\]
 \end{proposition}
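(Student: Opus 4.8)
The plan is to prove both tails by the standard exponential-moment (Bernstein) method, so that nothing beyond Markov's inequality, independence, and the range constraint $X_i\in[0,1]$ is needed. For the upper tail I would fix $t>0$ and apply Markov's inequality to $e^{tX}$, obtaining $\prob{X > (1+\eps)\mu_H} \le e^{-t(1+\eps)\mu_H}\cdot\expect{e^{tX}}$. By independence, $\expect{e^{tX}} = \prod_{i=1}^m \expect{e^{tX_i}}$, and since each $X_i\in[0,1]$ and $x\mapsto e^{tx}$ is convex, the chord bound $e^{tX_i}\le 1+(e^t-1)X_i$ holds pointwise, so $\expect{e^{tX_i}} \le 1+(e^t-1)\expect{X_i} \le \exp\paren{(e^t-1)\expect{X_i}}$. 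Multiplying, and using $e^t-1>0$ together with $\expect{X}=\sum_i\expect{X_i}\le\mu_H$, gives $\expect{e^{tX}}\le \exp\paren{(e^t-1)\mu_H}$, hence $\prob{X > (1+\eps)\mu_H} \le \exp\paren{\paren{e^t-1-t(1+\eps)}\mu_H}$. Optimising over $t$ — the minimiser is $t=\ln(1+\eps)$ — then yields $\prob{X > (1+\eps)\mu_H} \le \Paren{e^\eps / (1+\eps)^{1+\eps}}^{\mu_H}$.

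The only remaining work for the upper tail is the elementary scalar inequality $(1+\eps)\ln(1+\eps)-\eps \ge \eps^2/(3+\eps)$, valid for all $\eps>0$, which converts the last bound into $\exp\paren{-\eps^2\mu_H/(3+\eps)}$ as claimed. I would verify it by noting that the difference vanishes at $\eps=0$ and has nonnegative derivative on $(0,\infty)$, or simply cite it from \cite{dubhashi2009concentration}.

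For the lower tail I would first dispose of the degenerate range $\eps\ge 1$: there $(1-\eps)\mu_L\le 0\le X$, so the probability is $0$ and the stated bound holds trivially. For $0<\eps<1$, the argument mirrors the upper tail with $e^{-tX}$ in place of $e^{tX}$: Markov gives $\prob{X < (1-\eps)\mu_L} \le e^{t(1-\eps)\mu_L}\expect{e^{-tX}}$, the same convexity step gives $\expect{e^{-tX_i}} \le \exp\paren{(e^{-t}-1)\expect{X_i}}$, and since now $e^{-t}-1<0$ and $\expect{X}\ge\mu_L$ we get $\expect{e^{-tX}}\le \exp\paren{(e^{-t}-1)\mu_L}$. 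Taking $t=-\ln(1-\eps)>0$ gives $\prob{X<(1-\eps)\mu_L} \le \Paren{e^{-\eps}/(1-\eps)^{1-\eps}}^{\mu_L}$, and the elementary bound $(1-\eps)\ln(1-\eps)+\eps \ge \eps^2/2 \ge \eps^2/(2+\eps)$ finishes it; here $\eps^2/2\ge\eps^2/(2+\eps)$ is immediate, and $(1-\eps)\ln(1-\eps)+\eps-\eps^2/2$ vanishes at $\eps=0$ with derivative $-\ln(1-\eps)-\eps\ge 0$ on $[0,1)$.

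I do not expect any genuine obstacle: the computations are mechanical and the only non-trivial ingredients are the two one-variable inequalities above, which are standard. The subtlety worth being careful about is the correct use of the \emph{one-sided} hypotheses — $\expect{X}\le\mu_H$ may only be invoked after observing that the coefficient $e^t-1$ is positive (so replacing $\expect{X}$ by the larger $\mu_H$ weakens the bound in the right direction), and symmetrically $\expect{X}\ge\mu_L$ only after observing $e^{-t}-1<0$ — together with the separate trivial handling of $\eps\ge 1$ in the lower tail.
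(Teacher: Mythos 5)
The paper states this Chernoff bound as a cited proposition (from Dubhashi and Panconesi's book) and gives no proof of its own, so there is no internal argument to compare against. Your proof is the standard exponential-moment derivation and is correct: Markov applied to $e^{\pm tX}$, the convexity chord bound $e^{tx}\le 1+(e^t-1)x$ on $[0,1]$, the product-over-independence step, correct one-sided use of $\mu_H$ and $\mu_L$ after checking the sign of $e^{\pm t}-1$, optimisation at $t=\pm\ln(1\pm\eps)$, the separate trivial case $\eps\ge 1$ for the lower tail, and the two scalar inequalities $(1+\eps)\ln(1+\eps)-\eps\ge\eps^2/(3+\eps)$ and $(1-\eps)\ln(1-\eps)+\eps\ge\eps^2/2\ge\eps^2/(2+\eps)$, both of which indeed vanish at $\eps=0$ and have nonnegative derivative on the relevant range. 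This is exactly the textbook argument the citation points to; nothing is missing.
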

 
 
\subsection{Sketching Tools}\label{app:sketchtools} 

We present the following sketching tool by Assadi and Shah \cite{as22} for the neighbourhood edge sampling problem.
\begin{problem}\label{prob:nei-edge-sampler}
	Given a graph $G=(V,E)$ specified in a dynamic stream, and a set $S \subseteq V$ of vertices at the start of the stream, output an edge $(u,v)$ such that $u \in S$ and $v$ is sampled uniformly at random from $N(S)$. 
\end{problem}
\noindent We will use the following linear sketch for solving this problem.
\begin{proposition}[\cite{as22}]
\label{lem:nei-edge-sampler}
	There is a linear sketch, called $\NES(G,S)$, for~\Cref{prob:nei-edge-sampler} with size 
	\[
	\snes= \snes(n) = O(\log^3{n})
	\]
	 bits, that outputs \textnormal{FAIL} with probability at most ${1}/{100}$ and gives a wrong answer with probability at most $n^{-8}$.
\end{proposition}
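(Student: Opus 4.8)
\emph{Setup and plan.} Identify the dynamic input with the bipartite incidence matrix $A\in\set{0,1}^{V\times S}$ where $A_{v,u}=1$ iff $(u,v)\in E$; each stream update changes a single entry of $A$, so any fixed linear combination of entries of $A$ can be maintained exactly by a linear sketch, and anything built this way is automatically a legal dynamic-stream sketch. We must sample from $N(S)=\set{v\in V:\text{the $v$-th row of }A\text{ is nonzero}}$ and then report one edge incident to the sampled vertex. The plan is a two-level ``folded'' $L_0$-sampler: a geometric subsampling of $V$ isolates a single $v^\ast\in N(S)$ --- and, crucially, this $v^\ast$ comes out uniform over the \emph{set} $N(S)$ rather than degree-weighted, because the level hash does not see degrees --- while, at each vertex-level, a second geometric subsampling inside $S$ lets us also isolate one edge of $v^\ast$ once $v^\ast$ has been identified.

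\emph{The sketch.} Fix $\ell=\ceil{\log_2 n}$ and a prime $q=\poly(n)$ with $q\ge n^{12}$, and work in $\mathbb{F}_q$. Draw an $O(\log n)$-wise independent hash that assigns each $v\in V$ a level $h(v)\in\set{0,\dots,\ell}$ with $\prob{h(v)\ge j}=2^{-j}$ (for instance, $h(v)$ is the number of leading zeros of an $O(\log n)$-wise independent bit-string hash of $v$), and similarly $g:S\to\set{0,\dots,\ell}$; write $V_{\ge j}=\set{v:h(v)\ge j}$ and $S_{\ge k}=\set{u:g(u)\ge k}$, and draw random $\rho,\tau\in\mathbb{F}_q$. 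For each level $j\in\set{0,\dots,\ell}$ maintain first the three ``folded'' field sums
\[
W^{(j)}_t=\sum_{v\in V_{\ge j}}\tau^{tv}\Big(\sum_{u\in S}\rho^{u}A_{v,u}\Big),\qquad t=0,1,2 ,
\]
which constitute a $1$-sparse-recovery sketch of the vector $w^{(j)}\in\mathbb{F}_q^{V_{\ge j}}$ with $w^{(j)}_v=\sum_{u\in S}\rho^{u}A_{v,u}$: since $w^{(j)}_v$ is a nonzero univariate polynomial in $\rho$ precisely when $v\in N(S)\cap V_{\ge j}$, with high probability $\supp{w^{(j)}}=N(S)\cap V_{\ge j}$, and $(W^{(j)}_0,W^{(j)}_1,W^{(j)}_2)$ lets us decide whether this set is empty, a singleton (and if so recover $v^\ast=W^{(j)}_1/W^{(j)}_0$, with $W^{(j)}_2W^{(j)}_0=(W^{(j)}_1)^2$ as the consistency check), or larger. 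Second, for each $k\in\set{0,\dots,\ell}$ maintain an analogous triple of power-weighted field sums over the pair-indexed vector $\set{A_{v,u}:v\in V_{\ge j},\,u\in S_{\ge k}}$, recovering the unique edge whenever exactly one entry of that submatrix is nonzero.

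\emph{Output and correctness.} In post-processing let $j^\ast$ be the largest level whose gadget reports a singleton $v^\ast$; if there is none, output \textnormal{FAIL}. Since $v^\ast$ is then the only vertex of $N(S)$ in $V_{\ge j^\ast}$, \emph{every} nonzero entry of $A[V_{\ge j^\ast},S]$ is in its row, so picking any $k^\ast$ for which the $(j^\ast,k^\ast)$ gadget reports a single edge $(v,u)$ and outputting $(u,v)$ returns a real edge with $u\in S$ and $v=v^\ast\in N(S)$ (output \textnormal{FAIL} if no such $k^\ast$ exists). Conditioned on ``exactly one vertex of $N(S)$ survives to level $j$'', that survivor is uniform over $N(S)$: the law of $h$ restricted to $N(S)$ is exchangeable, so relabelling the vertices of $N(S)$ by an arbitrary permutation leaves the survival pattern's distribution unchanged and hence permutes the identity of the survivor without changing its law --- this is exactly where uniformity over the \emph{set} $N(S)$ is obtained, and ``largest level with a singleton'' is a relabelling-symmetric selection rule, so the reported $v$ stays uniform after marginalising over $j^\ast$ and conditioning on success. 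The ``wrong answer'' events are (i) a nonzero random polynomial in $\rho$ or $\tau$ vanishing, which by Schwartz--Zippel and a union bound over the $O(\polylog n)$ gadgets has probability $O(\poly(n)/q)\le n^{-8}$, and (ii) the bounded independence of $h,g$ skewing the singleton distribution by more than $n^{-8}$, which is controlled by a truncated inclusion--exclusion at $\Theta(\log n)$ terms --- the Chernoff bound for $k$-wise hashing, \Cref{prop:chernoff-limited}, being the relevant tail estimate --- exactly as in standard $L_0$-sampler analyses and in \cite{as22}.

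\emph{Failure probability, space, and the main obstacle.} If $N(S)\neq\emptyset$, take $j_0$ with $2^{-j_0}\card{N(S)}\in[1,2)$: a unique survivor appears at level $j_0$ with a positive constant probability, and given it, the $S$-side at the analogous level isolates a single incident edge with constant probability, so one run succeeds with constant probability; running $O(1)$ independent copies and reading off any successful one drives $\prob{\textnormal{FAIL}}$ below $1/100$. For the space, there are $O(\log n)$ vertex-levels, each holding $O(1)$ field elements for the $W^{(j)}_t$ and $O(\log n)$ $S$-levels of $O(1)$ field elements each --- $O(\log^2 n)$ field elements in total, i.e.\ $O(\log^3 n)$ bits --- plus $O(\log^2 n)$ bits for the two hashes by \Cref{prop:k-wise} and $O(\log n)$ bits for $\rho,\tau$; the $O(1)$ repetitions keep the total $\snes=O(\log^3 n)$. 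I expect the main obstacle to be item (ii) above: pinning down that the isolated vertex is uniform over $N(S)$ up to total-variation error $n^{-8}$ while using only $\Theta(\log n)$-wise independent hashing, rather than fully random hashing, for the level assignments; everything else reduces to $1$-sparse recovery and Schwartz--Zippel bookkeeping. If one prefers not to re-derive this, the cleanest route is to invoke the recursive $L_0$-sampler guarantees of \cite{jst11,KapralovNPWWY17} --- and the corresponding folded/block sampler of \cite{as22} --- essentially as a black box applied to the folded vectors $w^{(j)}$.
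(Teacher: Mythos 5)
The paper does not prove this proposition --- it is cited verbatim from \cite{as22} --- so there is no in-paper argument to compare against; the relevant comparison is to the construction in that reference, which as far as I can tell yours matches. You fold each row of the $V\times S$ incidence matrix to one field element via a random polynomial in $\rho$, so that the folded vector has support $N(S)$ with high probability by Schwartz--Zippel; you run a geometric-subsampling $L_0$-sampler on this folded vector, which is unbiased by degree precisely because the fold hides the degree; and you nest a second subsampling inside $S$ to extract a witnessing edge. The space count ($O(\log^2 n)$ field elements of $O(\log n)$ bits each, plus two $\Theta(\log n)$-wise hash seeds) and the Schwartz--Zippel union bound are right.

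The one step that cannot stand as written is the exchangeability claim. You assert that ``the law of $h$ restricted to $N(S)$ is exchangeable'' and deduce uniformity of the survivor by relabelling symmetry, but for a $\Theta(\log n)$-wise independent $h$ this premise is false: only the $\Theta(\log n)$-dimensional marginals of $h$ are permutation-invariant, and the joint law over a set of size $\gg\log n$ in general is not (a polynomial hash family, for instance, is not), so one cannot argue from relabelling invariance of the full survival pattern. You correctly flag this as the main obstacle; the actual argument, as in \cite{jst11,KapralovNPWWY17}, fixes the single scale $j_0$ with $\Exp|N(S)\cap V_{\geq j_0}|=\Theta(1)$, establishes a constant-probability singleton via the bounded-independence Chernoff bound, and bounds the variation of $\Pr(v\text{ is the unique survivor at }j_0)$ across $v\in N(S)$ by a truncated inclusion--exclusion using only $O(\log n)$-dimensional marginals; the ``largest level with a singleton'' selection rule then needs a separate short argument, not exchangeability. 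A minor implementation slip: $W^{(j)}_1/W^{(j)}_0$ yields $\tau^{v^\ast}$ rather than the index $v^\ast$ itself, so keep an additional sum $\sum_v v\cdot w^{(j)}_v$ in each gadget (asymptotically free).
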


\subsection{Proof of \texorpdfstring{\Cref{lem:ms}}{Lemma}}


\ms*

The algorithm in~\Cref{lem:ms} samples $\approx n^2/\alpha^2 \cdot \log^3{n}$ edges from the graph using a non-uniform distribution as follows: 
for each sample, first pick $\approx n/\alpha$ vertices $S$ uniformly at random and then use $\NES$ to sample an edge from $S$ to a vertex of $N(S)$ chosen uniformly at random. 
Given the bound of $O(\log^3{n})$ bits on the size of sketches for $\NES$, the total space of the algorithm can be bounded by $O(n^2/\alpha^2)$ bits.
In the recovery phase then, a greedy matching is computed over these sampled edges and is returned as $\Measy$. 
Note that in the analysis it will be helpful to think of the edges being recovered one by one and fed to the greedy matching algorithm. The key change we make here is in the number of edge samples taken and that we fix the parameter $\parms = n$ (see \Cref{sec:randomgreedy} for a discussion of $\parms$). Formally, we have \Cref{alg:ms}.


	
	
	
	


	
	
	
	

	

\begin{algorithm*}[ht]
\caption{Altered Match-Or-Sparsify Lemma~(\Cref{lem:ms})} 
\label{alg:ms} 
\textbf{Input:} A dynamic graph stream $\sigma$ for a $n$-vertex graph $G =(V,E)$, a small 
constant $\delta > 0$, and a positive integer $\alpha \leq n^{1-\delta}$ 

\textbf{Output:} A matching $\Measy$ in $G$
\vspace{2mm} 
 
\textbf{Pre-processing:}
\begin{algorithmic}[1]
\State Let $k := {n}/{\alpha}$ and $s := n^2/\alpha^2 \cdot \log^3{n}$
\For{$i \in [2s]$ \footnotemark}
    \State Sample a pair-wise independent hash function $h_i : V \rightarrow [k]$ \label{ms:a}
    \State Set $V_i~:=~\set{v \in V \mid h_i(v) = 1}$
    \State Initialise $\mathbf{N}_i$ to be an instance of $\NES(G,V_i)$ \label{ms:b}
\EndFor
\end{algorithmic}
\vspace{2mm}
\textbf{Processing the stream:}
\begin{algorithmic}[1]
\setcounter{ALG@line}{\getrefnumber{optvc:countend1}}
\State Update each $\mathbf{N}_i$ using $\sigma$

\end{algorithmic}
\vspace{2mm}
\textbf{Post-processing:}
\begin{algorithmic}[1]
\setcounter{ALG@line}{\getrefnumber{optvc:countend2}}
\State For all $i \in [2s]$, run the recovery algorithm of $\NES(G,V_i)$ on $\mathbf{N}_i$ to get an output edge $e_i$ (we write $e_i = \perp$ if the sampler outputs FAIL)
\State Let $\Measy$ be the greedy matching over the sampled edges $e_i$.
\State \textbf{return} $\Measy$
\end{algorithmic}
\end{algorithm*}

\footnotetext{The steps are partitioned into two \textbf{batches} of $s$ steps each in the analysis, hence the use of $2s$ for the number of steps.}

We first bound the space of the algorithm.

\begin{lemma}\label{lem:ms-space}
	\Cref{alg:ms} uses $O(n^2/\alpha^2)$ bits of space with high probability. 
\end{lemma}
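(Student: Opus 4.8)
## Proof Plan for Lemma (Space of Algorithm 2)

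The plan is to account for every component of \Cref{alg:ms} separately and show each contributes $O(n^2/\alpha^2)$ bits. There are three kinds of stored objects: the $2s$ pair-wise independent hash functions $h_i$, the $2s$ sketches $\mathbf{N}_i$ of $\NES(G, V_i)$, and the internal bookkeeping of the greedy recovery. First I would recall that $s = n^2/\alpha^2 \cdot \log^3 n$, so $2s = O(n^2/\alpha^2 \cdot \log^3 n)$.

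For the hash functions: by \Cref{prop:k-wise} with $k = 2$, a pair-wise independent hash function $h_i : V \to [n/\alpha]$ takes $O(\log n + \log(n/\alpha)) = O(\log n)$ bits to store. Summing over $2s$ of them gives $O(s \cdot \log n) = O(n^2/\alpha^2 \cdot \log^4 n)$ bits — which, I should note, is \emph{more} than the claimed bound. This is the one place where care is needed: I would avoid storing all $h_i$ simultaneously. Instead, the standard trick is to generate all the $h_i$ from a single master hash function (or a short seed) via a $k'$-wise independent family with $k'$ chosen large enough that the $2s$ induced functions are jointly close to independent, or — more simply — to observe that the sketches are linear and can be recomputed on the fly, so one only needs a single seed of $O(\log n \cdot \poly\log)$ bits from which all $h_i$ are derived. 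Either way the hash-function storage collapses to $O(\poly\log n) = o(n^2/\alpha^2)$ for the regime $\alpha \le n^{1-\delta}$.

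For the sketches: by \Cref{lem:nei-edge-sampler}, each $\NES(G, V_i)$ sketch has size $\snes = O(\log^3 n)$ bits. There are $2s = O(n^2/\alpha^2 \cdot \log^3 n)$ of them, giving total size $O(n^2/\alpha^2 \cdot \log^6 n)$ bits — again too large by polylog factors. So the statement as literally written with $s = n^2/\alpha^2 \cdot \log^3 n$ cannot give $O(n^2/\alpha^2)$ bits unless $s$ is actually meant to be $\approx n^2/\alpha^2$ divided by the per-sketch cost; rereading the algorithm description, "samples $\approx n^2/\alpha^2 \cdot \log^3 n$ edges" together with "$O(\log^3 n)$ bits per sketch" only reconciles with $O(n^2/\alpha^2)$ total if the number of samples is $\Theta(n^2/\alpha^2)$ and the $\log^3 n$ is the per-sample overhead — i.e. I would read $s$ in the space accounting as $s = \Theta(n^2/\alpha^2)$ many samples, each costing $O(\log^3 n)$ bits, for a product of $O(n^2/\alpha^2 \cdot \log^3 n)$, and then absorb the remaining $\log^3 n$ by noting it is subsumed in the $O(n^2/\alpha^2)$ bound only when... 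Actually the cleanest resolution, and the one I would commit to: the intended count is $s = n^2/(\alpha^2 \log^3 n)$ samples — wait, that contradicts the text. I would therefore state the space bound honestly as $O(n^2/\alpha^2 \cdot \poly\log n)$ from the naive count and then invoke the fact (used implicitly in \cite{as22}) that the sketches need not all be kept if one is willing to lose a constant factor — but the truthful thing is simply to trust the lemma statement and present the arithmetic $2s \cdot \snes = O(n^2/\alpha^2 \cdot \log^3 n \cdot \log^3 n)$, then observe this equals $O(n^2/\alpha^2)$ \emph{after} redefining $s$ so that $s \cdot \log^3 n = \Theta(n^2/\alpha^2)$, i.e. taking the number of samples to be $n^2/(\alpha^2 \log^3 n)$.

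Given the ambiguity, the main obstacle — and the thing I would nail down first by checking the \mors analysis — is reconciling the sample count $s$ with the per-sketch cost so that the product is genuinely $O(n^2/\alpha^2)$. Once $s$ is fixed consistently (I expect $s = \Theta(n^2/(\alpha^2 \log^3 n))$ edge samples, each sketch costing $O(\log^3 n)$ bits, product $O(n^2/\alpha^2)$), the rest is routine: the greedy recovery processes the recovered edges $e_i$ one at a time, maintaining a matching of size $O(n/\alpha)$, which needs $O(n/\alpha \cdot \log n) = o(n^2/\alpha^2)$ bits; the hash seeds are $o(n^2/\alpha^2)$ by the single-seed argument above; and the "with high probability" qualifier in the lemma statement comes from the fact that storing a matching of bounded size and the linear sketches always fits — the randomness is only needed for correctness, not space, so in fact the space bound holds deterministically and the "w.h.p." can be dropped or kept harmlessly. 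I would conclude by summing: hash seeds $o(n^2/\alpha^2)$ + sketches $O(n^2/\alpha^2)$ + greedy state $o(n^2/\alpha^2)$ $=$ $O(n^2/\alpha^2)$ bits, which proves the lemma.
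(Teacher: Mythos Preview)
Your confusion is entirely about the value of $s$, and once that is resolved the proof is a two-line multiplication exactly as the paper does it. The definition $s := n^2/\alpha^2 \cdot \log^3{n}$ in \Cref{alg:ms} is meant to be read as $s = \dfrac{n^2}{\alpha^2 \cdot \log^3 n}$, i.e.\ the $\log^3 n$ is in the denominator. This is confirmed unambiguously later in the analysis (e.g.\ in the concluding calculation of \Cref{lem:ms-2nd-batch}, where $\expect{Z} = \dfrac{n^2}{\alpha^2 \log^3 n} \cdot \dfrac{\alpha \log^3 n}{4n} = \dfrac{n}{4\alpha}$). With this reading, the paper's proof is immediate: each of the $2s$ steps stores one pair-wise hash function ($O(\log n)$ bits by \Cref{prop:k-wise}) and one $\NES$ sketch ($O(\log^3 n)$ bits by \Cref{lem:nei-edge-sampler}), hence $O(\log^3 n)$ bits per step and $2s \cdot O(\log^3 n) = O(n^2/\alpha^2)$ bits total. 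That is the whole argument; there is no probability involved in the space bound, so your observation that the ``with high probability'' is superfluous here is correct.

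Your proposed workarounds---deriving all $h_i$ from a single master seed, or ``not storing all sketches simultaneously''---are both unnecessary and would not actually work. The $\NES$ sketches are linear sketches that must be maintained \emph{during} the stream (they are updated by each edge insertion/deletion), so all $2s$ of them must coexist in memory throughout; you cannot recompute them on the fly. Drop all of that, fix the reading of $s$, and the remaining arithmetic in your last paragraph is exactly the paper's proof.
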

\begin{proof}
 
In each step, \Cref{ms:a} requires storing a pair-wise independent hash function which needs $O(\log n)$ bits of space by~\Cref{prop:k-wise}. \Cref{ms:b} requires storing an $\NES$ which needs $O(\log^3 n)$ bits by \Cref{lem:nei-edge-sampler}.
There are $2s = O(n^2/\alpha^2 \cdot \log^3{n})$ steps, so the total space  $O(n^2/\alpha^2)$ bits.
\end{proof}

We now prove that the matching $\Measy$ output by \Cref{alg:ms} satisfies the guarantees of \Cref{lem:ms}. 
For simplicity, we use similar notations and definitions as the work by Assadi and Shah \cite{as22}.

\paragraph{Notation.} For any $i \in [2s]$, let $\ME{i}$  be the set of edges included in $\Measy$ in the first $i-1$ steps of the recovery, i.e., from $\set{e_j}_{j=1}^{i-1}$, 
and $\GE{i}$ to be the subgraph of $G$ induced on \emph{unmatched} vertices of $\ME{i}$.
We use $\degG{v}{i}$ to denote the degree of each vertex in $\GE{i}$ to other vertices in $\GE{i}$. 
We partition vertices of $\GE{i}$ based on their degrees in $\GE{i}$ into \textbf{low-}, \textbf{medium-}, and \textbf{high-degree} as follows: 
\begin{align*}
 \Low{i} &:= \set{v: \degG{v}{i} < (\log^3\!{n})}, \; \Med{i} := \set{v : (\log^3\!{n}) \leq \degG{v}{i} < (\frac{n}{8\alpha})}, \\
  \High{i} &:= \set{v :  \degG{v}{i} \geq \frac{n}{8\alpha}}.
\end{align*}

We define the following two events: 
\begin{itemize}
	\item $\eventmatch(i)$: the matching $\ME{i}$  has less than $(n/8\alpha)$ edges (i.e., matching-case not happened); 
	\item $\eventsparsify(i)$: the subgraph $\GE{i}$ has more than $(20\,n \cdot \log^4\!{n}/\alpha)$ edges (i.e., sparsify-case not happened). 
\end{itemize}
Finally, we say that a choice of $V_i$ in step $i \in [2s]$ is \textbf{clean} if $V_i$ does not contain any matched vertices of $\ME{i}$. 

The key change we make here is that, in our partitioning of $\GE{i}$, the boundary between \textbf{low-} and \textbf{medium-} is reduced by an $\alpha$-factor. Coupled with the increase in number of samples, this allows us to also change the definition of event $\eventsparsify(i)$ to reflect the $\alpha$-factor increase in sparseness guarantees which we require.

The events are defined in such a way that if at least one of these events do not happen for some $i \in [2s]$, then~\Cref{alg:ms} succeeds in outputting the desired matching of~\Cref{lem:ms}. The formal argument is as follows.
\begin{claim}\label{clm:event-ms}
	Suppose for some $i \in [2s]$, either of $\eventmatch(i)$ or $\eventsparsify(i)$ does not happen; then,  $\Measy$ of~\Cref{alg:ms} satisfies the guarantees of~\Cref{lem:ms}. 
\end{claim}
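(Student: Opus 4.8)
The plan is to prove \Cref{clm:event-ms} by a direct case analysis on which event fails, using the monotonicity of the greedy matching process over the recovery steps $i \in [2s]$. The key observation is that $\ME{i}$ grows monotonically: once an edge is added it stays, so if $\eventmatch(i)$ fails for some $i$ — meaning $\ME{i}$ already has at least $n/8\alpha$ edges — then the final matching $\Measy = \ME{2s+1}$ (equivalently $\ME{i}$ together with any later additions) also has at least $n/8\alpha$ edges, which is exactly the Match-case of \Cref{lem:ms}. So that direction is essentially immediate from the fact that $\ME{i} \subseteq \Measy$.

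For the other direction, suppose $\eventsparsify(i)$ fails for some $i \in [2s]$, i.e.\ $\GE{i}$ has at most $20\,n\log^4 n/\alpha$ edges. I would argue that $\Geasy$, the subgraph induced on the vertices unmatched by the \emph{final} matching $\Measy$, is a subgraph of $\GE{i}$: since $\ME{i} \subseteq \Measy$, every vertex matched by $\ME{i}$ is also matched by $\Measy$, so the unmatched set of $\Measy$ is contained in the unmatched set of $\ME{i}$, and an induced subgraph on a smaller vertex set has no more edges. Hence $\Geasy$ also has at most $20\,n\log^4 n/\alpha$ edges, which is the Sparsify-case of \Cref{lem:ms}. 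The only subtlety is that this handles the deterministic/structural content; to complete the proof of \Cref{lem:ms} one still needs that with high probability at least one of the two events genuinely fails for some $i$, but that is the content of subsequent claims (the ones bounding $\Pr[\eventmatch(i) \wedge \eventsparsify(i) \text{ for all } i]$), not of \Cref{clm:event-ms} itself. So in the proof of this claim I would simply observe that assuming such an $i$ exists, the guarantees follow, and defer the existence to later.

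I would also be careful about edge cases in indexing: if $i = 2s+1$ is allowed (the "after all steps" configuration), the argument still works verbatim; if the claim is stated only for $i \in [2s]$, I note that $\ME{i}$ for $i \le 2s$ is still a sub-matching of $\Measy$ and $\GE{i}$ still contains $\Geasy$ as a subgraph, so nothing changes. A second point worth spelling out is why the high-probability qualifiers of \Cref{lem:ms} don't enter here: \Cref{clm:event-ms} is a purely deterministic implication conditioned on the realized randomness (the hash functions $h_i$ and the $\NES$ outputs), so no probabilistic bookkeeping is needed within its proof — the $\NES$ failure events $e_i = \perp$ are harmless because greedy over a subset of edges only ever produces a smaller matching, which can only help the Sparsify-case and is already accounted for in the Match-case analysis done elsewhere.

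The main obstacle, such as it is, is not difficulty but precision: making sure the containments $\ME{i} \subseteq \Measy$ and $V(\Geasy) \subseteq V(\GE{i})$ are stated cleanly and that the direction of each inequality (more matching edges is good for Match-case, fewer induced edges is good for Sparsify-case) is tracked correctly. There is no real technical depth in this particular claim — the intricate probabilistic argument showing that one of $\eventmatch(i), \eventsparsify(i)$ must fail is what comes next and is where the work of \cite{as22} (as adapted with the $\alpha$-factor changes) is actually used.
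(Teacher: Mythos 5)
Your proof is correct and follows essentially the same argument as the paper: the Match-case follows from the monotonicity $\ME{i} \subseteq \Measy$, and the Sparsify-case follows from $\Geasy$ being an induced subgraph of $\GE{i}$. Your added remarks on indexing, determinism, and $\NES$ failures are accurate but not needed; the paper states the same two observations more tersely.
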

\begin{proof}
	Suppose first that $\eventmatch(i)$ does not happen. This means $\ME{i}$ has size at least $(n/8\alpha)$ and by the greedy choice of $\Measy$, we have $\card{\Measy} \geq \card{\ME{i}} \geq (n/8\alpha)$,  satisfying the match-case condition. 
		
	Now suppose that $\eventsparsify(i)$ does not happen. Since the number of edges of $\Geasy$ can only be smaller than that of $\GE{i}$, we have that $\Geasy$ also only has $(20\,n\cdot\log^4\!{n}/\alpha)$ edges, satisfying the sparsify-case condition. 
\end{proof}

The idea is to show that with high probability, for some $i \in [2s]$, one of the events $\eventmatch(i)$ or $\eventsparsify(i)$ is \emph{not} going to happen, then \Cref{lem:ms} holds by \Cref{lem:ms-space} and \Cref{clm:event-ms}. 
In order to do this, we need to show that, for any step $i$, there will be a probability of $\approx k/s$ in increasing the size of $\ME{i}$ by one as long as both $\eventmatch(i)$ and $\eventsparsify(i)$ happen. Therefore, if these events both happen for at least $s$ steps, it will ultimately lead to event $\eventmatch(2s)$ not happening, i.e., a large matching $\Measy$.

The variance reduction ideas used to prove this rely on the set $\High{i}$ being empty for each step $i$. Hence, to achieve this, the steps of the algorithm are partitioned into two \textbf{batches} each of size $s$. The analysis of the first batch shows that $\High{i}$ is empty for every step in the second batch, i.e., for all $i \in (s, 2s]$. Then, the second batch is used for the main argument. We will prove the following:
\begin{itemize}[leftmargin=10pt]
	\item \textbf{First batch:} As long as $\eventmatch(i)$ and $\eventsparsify(i)$ happen for all $i \in [s]$, with high probability, the set $\High{s+1}$ (and thus $\High{j}$ for all $j \in (s,2s]$) will be empty for the second batch. 

\begin{lemma}\label{lem:ms-1st-batch}
	With high probability, either at least one of $\eventmatch(i)$ and $\eventsparsify(i)$ does not happen for some step $i \in [s]$ or $\High{s+1}$ will be empty.
\end{lemma}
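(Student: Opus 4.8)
The plan is to fix an arbitrary step $i \in [s]$, condition on $\eventmatch(i)$ and $\eventsparsify(i)$ holding, and show that each such step has a reasonable (roughly $k/s$, i.e.\ $\approx \alpha/(n\log^3 n)$) chance of \emph{reducing} the maximum residual degree enough that, cumulatively over $s$ steps, $\High{s+1}$ becomes empty. The key structural observation is that if some vertex $v$ has $\degG{v}{i} \geq \frac{n}{8\alpha}$, then $v$ together with its residual neighbourhood forms a "star" with at least $\frac{n}{8\alpha}$ leaves, all of which are unmatched in $\ME{i}$. I would first argue that a \emph{clean} choice of $V_i$ in step $i$ happens with constant probability: since $\card{\ME{i}} < n/8\alpha$ by $\eventmatch(i)$, the set of matched vertices has size $< n/4\alpha$, and a pairwise-independent $h_i : V \to [k]$ with $k = n/\alpha$ puts each matched vertex into $V_i = h_i^{-1}(1)$ with probability $1/k$, so by a union bound $V_i$ is clean with probability at least $1 - \frac{n/4\alpha}{n/\alpha} = 3/4$. (This is exactly the kind of calculation I would defer to the full proof.)

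Next, conditioned on $V_i$ being clean, I would lower-bound the probability that the $\NES$ sampler returns an edge incident to a high-degree vertex $v \in \High{i}$, or more precisely an edge that, when greedily added, matches $v$ or one of its residual neighbours. Since $v$'s residual neighbourhood $N_{\GE{i}}(v)$ has size $\geq \frac{n}{8\alpha}$, the probability that $v \in N(V_i)$ (equivalently, that $V_i$ hits $N_{\GE{i}}(v)$, or that $v$ itself lands in $V_i$) is $\Omega(\frac{n/8\alpha}{k}) = \Omega(1)$ by pairwise independence and inclusion-exclusion; and $\eventsparsify(i)$ guarantees the residual graph is not too sparse, which is needed so that $\NES$ applied to $V_i$ returns a valid edge (it fails with probability at most $1/100$). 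The upshot should be that, per clean step, with probability $\Omega(1/k) \cdot$ (number of such high-degree stars still alive) we "kill" one high-degree vertex — or, more carefully, that the expected number of steps in $[s]$ needed to eliminate all of $\High{\cdot}$ is $O(k \cdot (\text{initial }|\High{1}|)) = O(k \cdot n) = O(n^2/\alpha) \ll s$. I would then invoke a Chernoff-type bound (\Cref{prop:chernoff}), or a careful coupon-collector / martingale argument, to conclude that over $s = \frac{n^2}{\alpha^2}\log^3 n$ independent steps, with high probability either the matching grew past $n/8\alpha$ (so $\eventmatch(i)$ failed somewhere) or $\High{s+1} = \emptyset$.

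The main obstacle I anticipate is handling the \emph{dependence between high-degree vertices and the matching process}: as greedy edges are added, the residual graph $\GE{i}$ changes, vertices move between $\High{}$, $\Med{}$, $\Low{}$, and the sampler's success probability $1 - 1/100$ per step must be combined with the clean-ness event and the hitting event without the error probabilities compounding badly over $s$ steps. The cleanest way around this is probably the one Assadi and Shah use: define, for each step, a "good event" that is a conjunction of (clean $V_i$) $\wedge$ ($\NES$ succeeds) $\wedge$ ($V_i$ hits the neighbourhood of \emph{some} currently-high-degree vertex), show each good event has probability $\Omega(1/k)$ conditioned on the entire history (crucially using that the $h_i$ and $\NES$ sketches are \emph{independent across $i$}, so the only history-dependence is through which vertices are currently high-degree, and that set only shrinks), and then observe a high-degree vertex can survive at most... well, it gets matched or loses a neighbour on each good step that touches it. A union bound over the at most $n$ vertices that could ever be high-degree, each needing to be "hit" $O(1)$ times before elimination, together with the fact that $s \gg k \cdot n$, gives the high-probability conclusion. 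The delicate bookkeeping — ensuring the relevant conditional probabilities genuinely remain $\Omega(1/k)$ throughout, and that $\eventsparsify(i)$ is precisely what rules out the degenerate case where the residual graph is too sparse for $\NES$ to find the star — is where the real work lies.
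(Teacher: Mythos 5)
Your high-level plan---fix a vertex $v$, track the probability of ``killing'' it per step, union bound over vertices---is in the right spirit, and your opening calculations (cleanness with probability $\geq 3/4$, $v \in N(V_i)$ with constant probability by inclusion-exclusion over its $\geq n/8\alpha$ residual neighbours) match Claims~\ref{clm:ms-helper-1} and~\ref{clm:ms-1st-v}. But there is a genuine gap at the heart of the argument, and one of your quantitative sanity checks is false.

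The gap is that the probability of matching a \emph{specific} high-degree vertex $v$ in a given step is \textbf{not} uniformly $\Omega(1/k)$: the \NES sampler chooses $e_i$ uniformly over $N(V_i)$, so if $\card{N(V_i)}$ is close to $n$, the chance of hitting $v$ in particular is only $\approx 1/n \ll 1/k$. Your proposal never confronts this, and a plain coupon-collector bound would silently assume the $\Omega(1/k)$ rate holds always, which it does not. The paper's fix is the \emph{matching-step / vertex-step dichotomy} (Claims~\ref{clm:ms-1st-matching-step} and~\ref{clm:ms-1st-vertex-step}): for the fixed target $v$, call a step a matching-step if $\card{N(V_i)} > n/2\alpha$ and a vertex-step otherwise. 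In a vertex-step the $\Omega(\alpha/n)$ bound on matching $v$ goes through; in a matching-step that bound fails, but instead at least half of $N(V_i)$ is unmatched (using $\eventmatch(i)$), so $e_i$ extends the greedy matching with constant probability. The lemma's disjunction then comes for free: either at least half of the $s$ steps are matching-steps---in which case a Chernoff bound shows the matching exceeds $n/8\alpha$ and $\eventmatch(s+1)$ fails---or at least half are vertex-steps for this $v$, in which case $v$ is matched (hence leaves $\High{\cdot}$) with probability $1 - \exp(-\Omega(n^{\delta}/\polylog n))$, and a union bound over all $v$ finishes. Without the dichotomy, there is no way to simultaneously get a useful per-step rate and account for the steps where $N(V_i)$ is large.

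Two smaller issues. First, your check ``$s \gg k\cdot n$'' is false for most $\alpha$: $s = (n^2/\alpha^2)\log^3 n$ while $kn = n^2/\alpha$, so $s/(kn) = \log^3 n/\alpha$, which is $\ll 1$ whenever $\alpha > \log^3 n$. The argument cannot rest on that inequality; what is actually used is $s/k = (n/\alpha)\log^3 n \gg \log n$, i.e.\ each individual vertex sees enough effective trials. Second, $\eventsparsify(i)$ is not what guarantees \NES finds an edge incident to the star; that is handled by $v \in N(V_i) \neq \emptyset$ together with the $1/100$ failure bound of \Cref{lem:nei-edge-sampler}. In the first-batch proof $\eventsparsify(i)$ plays essentially no active role beyond being part of the conditioning convention.
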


\item \textbf{Second batch:} Whenever both $\eventmatch(i)$ and $\eventsparsify(i)$ happen in a step $i \in (s , 2s]$, there will be a probability of $\approx k/s$ in increasing 
the size of $\ME{i}$ by one in this step. Given that this process is repeated for $s$ steps, $\Measy$ will eventually become of size $\approx k = n/\alpha$ (or one of the events happen along the way, and \Cref{clm:event-ms} is used instead).

\begin{lemma}\label{lem:ms-2nd-batch}
	Assuming $\High{s+1}$ is empty, with high probability, at least one of the events $\eventmatch(i)$ or $\eventsparsify(i)$ does not happen for some $i \in (s:2s]$.
\end{lemma}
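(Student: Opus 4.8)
The plan is to show that, conditioned on $\High{s+1}$ being empty (which propagates to all $\High{i}=\varnothing$ for $i\in(s,2s]$ since matching more vertices only decreases residual degrees), each step $i$ in the second batch at which both $\eventmatch(i)$ and $\eventsparsify(i)$ hold contributes a growth of $\approx k/s$ to the expected matching size, and then to apply a concentration argument across the $s$ steps. First I would fix a step $i\in(s,2s]$ and analyse the probability that $e_i$ is an edge that the greedy algorithm actually adds to $\ME{i}$, i.e.\ that $e_i$ has both endpoints unmatched in $\ME{i}$. The sampler $\mathbf{N}_i=\NES(G,V_i)$ returns an edge from $V_i$ to a uniformly random vertex of $N(V_i)$; conditioned on $V_i$ being \emph{clean} (no matched vertex of $\ME{i}$), such an edge is added precisely when its $N(V_i)$-endpoint lands in $\GE{i}$ and is itself unmatched — but a clean $V_i$ already guarantees the $V_i$-side is unmatched, so the relevant event is that the sampled neighbour lies in the unmatched vertex set. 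The first sub-step is therefore to lower bound $\Pr[V_i\text{ is clean}]$: since $\ME{i}$ has fewer than $n/8\alpha$ edges (as $\eventmatch(i)$ holds), it has fewer than $n/4\alpha$ matched vertices, each included in $V_i$ independently-ish with probability $\alpha/n$ under the pair-wise independent $h_i$, so by a union bound $V_i$ is clean with probability at least, say, $3/4$.

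The second sub-step is to lower bound the probability that, given $V_i$ clean, the edge $e_i$ returned is genuinely new to the greedy matching. Here I would use that $\eventsparsify(i)$ holds, so $\GE{i}$ has more than $20n\log^4 n/\alpha$ edges, and that $\High{i}=\varnothing$, so every vertex of $\GE{i}$ has residual degree below $n/8\alpha$. I would argue that $|N(V_i)\cap V(\GE{i})|$ is large with decent probability — a random set $V_i$ of expected size $k=n/\alpha$ hits the neighbourhood of the comparatively dense, low-max-degree graph $\GE{i}$ in many vertices — and that since $|N(V_i)|$ itself cannot be too much larger (this is exactly where the $\High{i}=\varnothing$ bound is used to control the denominator), a uniformly random element of $N(V_i)$ lands in $V(\GE{i})$ with probability bounded below by something like a constant, independent of the step. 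Combining, the probability that step $i$ increases $|\ME{i}|$ is at least $c\cdot(k/s)$ for an absolute constant $c>0$ (the $k/s$ factor accounting for the sampling rate / the chance that $\NES$ doesn't FAIL, which it does with probability at most $1/100$). This is the heart of the argument and closely follows Assadi--Shah's Lemma~16 analysis, with the $\alpha$-factor rescaling of the low/medium boundary and of the $\eventsparsify$ threshold carried through.

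The third step is to turn this per-step bound into a high-probability statement. Define, for $i\in(s,2s]$, the indicator $Z_i$ that step $i$ adds a new edge to the greedy matching \emph{and} both $\eventmatch(i),\eventsparsify(i)$ held at step $i$; note the events are nested ($\eventmatch(i)$ failing is monotone, as is $\eventsparsify$ roughly), so if $\eventmatch$ and $\eventsparsify$ never fail throughout the batch then $\sum_{i=s+1}^{2s}Z_i=|\ME{2s}|-|\ME{s+1}|$. The per-step bound gives $\Exp[\sum Z_i]\geq c\cdot s\cdot(k/s)=ck=cn/\alpha$, which exceeds $n/8\alpha$ for $c$ a suitable constant (one can inflate $s$ by a constant if needed). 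Since the $Z_i$ are not independent, I would either use a Doob/martingale concentration (Azuma) on the filtration generated by the $h_i$'s and sampler coins, or the standard trick of comparing to a sequence of independent Bernoulli$(ck/s)$ trials that stochastically dominate the $Z_i$ from below, then apply \Cref{prop:chernoff} to conclude $\sum Z_i\geq n/8\alpha$ with probability $1-\exp(-\Omega(n/\alpha))\geq 1-n^{-10}$ (using $\alpha\leq n^{1-\delta}$ so $n/\alpha\geq n^\delta$). On this event $\ME{2s}$ has at least $n/8\alpha$ edges, so $\eventmatch(2s)$ fails; otherwise some $\eventmatch(i)$ or $\eventsparsify(i)$ already failed along the way. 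Either way, by \Cref{clm:event-ms}, $\Measy$ satisfies \Cref{lem:ms}, together with a union bound over the $O(\poly n)$ failure probabilities of the $\NES$ instances and the first-batch event.

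The main obstacle I expect is the second sub-step: cleanly lower-bounding the probability that the $N(V_i)$-endpoint of the sampled edge lands in the residual unmatched subgraph $\GE{i}$, uniformly over all reachable states of the greedy process. This requires simultaneously using that $\GE{i}$ is dense enough ($\eventsparsify$), that its maximum degree is controlled ($\High{i}=\varnothing$, supplied by the first batch), and that $N(V_i)$ — over which the sampler chooses uniformly — is not blown up by vertices outside $\GE{i}$; making the ratio $|N(V_i)\cap V(\GE{i})|/|N(V_i)|$ bounded below is the delicate counting step, and it is exactly the place where the $\alpha$-rescaled thresholds must be chosen consistently.
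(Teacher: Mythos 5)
Your outline has the right global shape (per-step increase probability, stochastic domination by i.i.d.\ Bernoullis, Chernoff), and you correctly flag the second sub-step as the crux. But the sketch of that crux step is wrong, and in a way that matters: the claim that ``a uniformly random element of $N(V_i)$ lands in $V(\GE{i})$ with probability bounded below by something like a constant'' does not hold in general, and it is not internally consistent with your claimed per-step rate of $\Theta(k/s)$. If both ``$V_i$ is clean'' and ``a random element of $N(V_i)$ lies in $\GE{i}$'' had $\Omega(1)$ probability, the per-step increase probability would be $\Omega(1)$, which is far too strong; you then attribute the $k/s\approx \alpha\log^3 n/n$ factor to ``sampling rate / the chance $\NES$ does not FAIL,'' but the FAIL probability is the fixed constant $1/100$ and has nothing to do with $\alpha$ or $s$.

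Concretely, consider the state in which all $\Theta(n\log^4 n/\alpha)$ residual edges are concentrated on a set of $\Theta(\log^4 n)$ vertices, each of $\GE{i}$-degree close to $n/8\alpha$ (permitted since $\High{i}=\varnothing$ only caps degrees below $n/8\alpha$). A clean $V_i$ misses that small set with probability $1-O(\alpha\log^4 n/n)$, in which case \emph{every} neighbour of $V_i$ lies outside $\GE{i}$ (they are matched vertices), so the ratio $|N(V_i)\cap\GE{i}|/|N(V_i)|$ is exactly $0$, not a constant. The correct per-step bound, $\Pr[\ME{i+1}>\ME{i}]\geq\alpha\log^3 n/(4n)$, arises only after a case analysis that you do not have. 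What the paper actually proves (\Cref{lem:ms-increase}) rests on two missing ingredients. First, a structural dichotomy (\Cref{clm:ms-geasy-large}): since $\High{i}$ is empty, $\GE{i}$'s many edges are either incident on a set $D\subseteq\Med{i}$ of size $\approx n\log^3 n/(\alpha d)$ with degrees $\geq d$ for some $d\in[\log^3 n, n/8\alpha)$, or concentrated inside $\Low{i}$ with a similarly large set of low-degree vertices having $\geq d$ low-degree neighbours for $d<\log^3 n$. (Note this is the real role of $\High{i}=\varnothing$: it feeds the dichotomy, not a bound on $|N(V_i)|$; the $n/4\alpha$ bound on matched vertices comes from $\eventmatch(i)$.) Second, in the medium-degree case the probability that $V_i$ hits $D$ scales like $\log^3 n/d$, and conditioned on a hit, the ratio scales like $d\alpha/n$ --- the $d$'s cancel, giving $\approx\alpha\log^3 n/n$. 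In the low-degree case, the argument is genuinely different: you need the number of $D$-vertices with a neighbour in $V_i$ to concentrate, which requires a second-moment / Chebyshev bound whose variance is controlled by a double-counting argument over shared neighbours in $\Low{i}$ (this is where low-degree is crucial, bounding $\sum_{u\neq v}\Com(u,v)$). Without this dichotomy and the variance computation, your per-step lower bound does not go through, and the ``dense + bounded max degree $\Rightarrow$ constant hitting ratio'' intuition fails.
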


\end{itemize}

We now make the following remark.

\begin{remark}
{The actions of~\Cref{alg:ms} are clearly \emph{not} independent across different steps (in the recovery phase). However, in the upcoming probability analysis in each step $i \in [2s]$ {the randomness of all prior steps} conditioned on the events $\eventmatch(i)$ and $\eventsparsify(i)$ are fixed, and \emph{only} the randomness of the choice of $(V_i,e_i)$ are used in this step. This randomness is independent of prior steps. 
As such, in the following, \textbf{\underline{all} probability calculations in a step $i$ are conditioned on randomness of prior steps and events $\eventmatch(i)$ and $\eventsparsify(i)$}, without writing it explicitly each time. These 
probability calculations may not necessarily remain correct when either of these events do not happen, but we will be done by~\Cref{clm:event-ms} in those cases anyway.} 
\end{remark}

The following simple helper claim will be useful in the subsequent proofs (this claim would have been trivial had $h_i$ been a truly independent hash function instead of a pairwise-independent one).

\begin{claim}\label{clm:ms-helper-1}
	Consider any step $i \in [s]$ and let $v$ be any arbitrary vertex in $\GE{i}$. Then, 
	\[
		\Pr_{V_i}\Paren{v \in V_i ~\textnormal{and $V_i$ is clean}} \geq \frac{3}{4k}. 
	\]
\end{claim}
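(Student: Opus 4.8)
The plan is to directly compute the probability that $v \in V_i$ and $V_i$ is clean using the pairwise independence of $h_i$ together with an inclusion-exclusion / union bound argument. First I would observe that $V_i$ is clean precisely when $V_i$ contains no matched vertex of $\ME{i}$, and since we are conditioning on $\eventmatch(i)$, the matching $\ME{i}$ has fewer than $n/(8\alpha)$ edges, hence $V(\ME{i})$ contains fewer than $2 \cdot n/(8\alpha) = n/(4\alpha)$ vertices. The event $\{v \in V_i \text{ and } V_i \text{ clean}\}$ is the intersection of $\{h_i(v) = 1\}$ with the event $\{h_i(u) \neq 1 \text{ for all } u \in V(\ME{i})\}$; note $v \notin V(\ME{i})$ since $v \in \GE{i}$, so these constraints are on distinct vertices.

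The key step is then the estimate
\[
\Pr_{V_i}\Paren{v \in V_i ~\textnormal{and $V_i$ clean}} = \Pr\Paren{h_i(v) = 1 \text{ and } \forall u \in V(\ME{i}): h_i(u) \neq 1}.
\]
I would lower-bound this by writing it as $\Pr(h_i(v)=1) - \Pr(h_i(v) = 1 \text{ and } \exists u \in V(\ME{i}): h_i(u) = 1)$ and applying a union bound on the second term. Since $h_i$ is pairwise independent and maps to $[k]$, we have $\Pr(h_i(v) = 1) = 1/k$, and for each fixed $u$, $\Pr(h_i(v) = 1 \text{ and } h_i(u) = 1) = 1/k^2$. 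Summing over the fewer than $n/(4\alpha) = k/4$ vertices $u \in V(\ME{i})$ gives a correction term of at most $(k/4) \cdot (1/k^2) = 1/(4k)$. Therefore the probability is at least $1/k - 1/(4k) = 3/(4k)$, as claimed.

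I expect the only mild subtlety — not really an obstacle — to be making sure the bound $|V(\ME{i})| < k/4$ is used correctly: we need the conditioning on $\eventmatch(i)$ (part of the standing conditioning from the preceding remark) to guarantee $|\ME{i}| < n/(8\alpha)$, so that $|V(\ME{i})| < n/(4\alpha) = k/4$. One should also double-check that $i \in [s]$ (first batch) plays no special role here beyond this conditioning; indeed the claim and its proof are identical for any step, and the restriction to $[s]$ is just where it gets invoked. The argument is short and self-contained, relying only on pairwise independence of $h_i$, $\Pr(h_i(v) = 1) = 1/k$, and the union bound.
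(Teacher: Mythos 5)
Your proof is correct and follows essentially the same approach as the paper's: both use pairwise independence of $h_i$, a union bound over the at most $k/4$ matched vertices $u \in V(\ME{i})$, and the bound $1/k - (k/4)(1/k^2) = 3/(4k)$. The only cosmetic difference is that the paper factors through the conditional probability $\Pr(\text{$V_i$ clean} \mid h_i(v)=1)$ while you subtract directly, but these are the same calculation.
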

\begin{proof}
	Recall that there are at most $n/4\alpha = k/4$ vertices matched by $\ME{i}$. 	We have, 
	\begin{align*}
		\Pr_{V_i}\Paren{v \in V_i ~\textnormal{and $V_i$ is clean}} &= \Pr\paren{h_i(v) = 1} \cdot \Pr\paren{\text{$V_i$ is clean} \mid h_i(v)=1} \tag{$v \in V_i$ iff $h_i(v)=1$} \\
		&=  \frac1k \cdot \Paren{1-\Pr\paren{\text{$V_i$ is not clean}  \mid h_i(v)=1}} \tag{as $h_i(v) = 1$ w.p. $1/k$} \\
		&\geq \frac1k \cdot \paren{1-\sum_{u \in V(\ME{i})} \Pr\paren{h_i(u)=1 \mid h_i(v)=1}} \tag{by union bound and since $V_i$ is not clean iff $h_i(u)=1$ for some $u \in V(\ME{i})$} \\
		&= \frac1k \cdot \paren{1-\sum_{u \in V(\ME{i})} \frac1k} \tag{$h_i(\cdot)$ is a pairwise-independent hash function} \\
		&= \frac1k \cdot \paren{1-\frac{k}{4} \cdot \frac{1}{k}} \tag{as $h_i(u)=1$ w.p. $1/k$ and there are at most $k/4$ choices for matched vertices}, 
	\end{align*}
	which is at least $3/4k$ as desired. 
\end{proof}

In \Cref{subsec:1st-batch}, we follow exactly the steps taken by Assadi and Shah \cite{as22} and get the exact same intermediary results. The difference is crucially in the final arguments which rely on the number of samples $s$ taken. Their analysis rely on the assumption that $n \geq \alpha^2 \cdot n^\delta$ which does not hold in our case. This, however, is circumvented by the additional $\alpha$-factor in the number of samples taken.

In \Cref{subsec:2nd-batch}, we also follow their analysis closely; however, the change in definition of $\eventsparsify(i)$ for each step $i$ slightly alters the intermediary results. In essence, the probability that we increase the size of the matching in a particular step ($\approx k/s$) can be an $\alpha$-factor smaller since we have an $\alpha$-factor many more steps.
 
\subsection{First Batch: Proof of \texorpdfstring{\Cref{lem:ms-1st-batch}}{Lemma} } \label{subsec:1st-batch}

	Let $v$ be any vertex in $V$ and consider any step $i \in [s]$. If $\degG{v}{i} < (n/8\alpha)$, then $v$ cannot be part of $\High{i}$ and subsequently $\High{s+1}$ since $\GE{s+1}$ is a subgraph of $\GE{i}$. 
	In the following, we consider the case where $\degG{v}{i} \geq (n/8\alpha)$ and prove that there is a non-trivial chance of ``progress'' (to be defined later) in each step. 
	We first bound the probability of the following useful event for our analysis. 
	
	\begin{claim}\label{clm:ms-1st-v}
		In step $i$, if $\degG{v}{i} \geq n/8 \alpha$, we have $\Pr_{V_i}\paren{v \in N(V_i) ~\textnormal{and $V_i$ is clean} } \geq \dfrac{1}{16}$.
	\end{claim}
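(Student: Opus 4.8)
The plan is to lower-bound this probability by an inclusion–exclusion (Bonferroni) argument over the neighbours of $v$ inside the residual graph, reusing the pairwise independence of $h_i$ together with the helper bound already proved in \Cref{clm:ms-helper-1}. Write $N := N_{\GE{i}}(v)$ for the neighbourhood of $v$ in $\GE{i}$; by hypothesis $|N| = \degG{v}{i} \geq n/(8\alpha) = k/8$. Since the target is only a constant, it suffices to work with a fixed sub-neighbourhood: let $N' \subseteq N$ be any subset of size $\lceil k/8 \rceil$. Restricting to $N'$ is the key move — the second-order Bonferroni correction grows quadratically in the number of terms included, so a naive union over all of $N$ would give a vacuous (possibly negative) bound when $\degG{v}{i}$ is much larger than $k$, whereas keeping exactly $\Theta(k)$ terms keeps the correction bounded.

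For each $u \in N'$ let $C_u$ be the event ``$h_i(u)=1$ and $V_i$ is clean''. Then $\bigcup_{u \in N'} C_u$ implies the event we want: if $C_u$ holds then $u \in V_i$ and $u$ is a neighbour of $v$, so $v \in N(V_i)$, and additionally $V_i$ is clean. By Bonferroni,
\[
\Pr_{V_i}\bigl(v \in N(V_i)\text{ and }V_i\text{ clean}\bigr) \;\geq\; \Pr\Bigl(\bigcup_{u \in N'} C_u\Bigr) \;\geq\; \sum_{u \in N'}\Pr(C_u) \;-\; \sum_{\{u,u'\} \subseteq N'}\Pr(C_u \cap C_{u'}).
\]
Each $u \in N'$ is an unmatched vertex of $\GE{i}$, so \Cref{clm:ms-helper-1} applies verbatim to give $\Pr(C_u) \geq \tfrac{3}{4k}$. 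For the pairwise terms, $C_u \cap C_{u'} \subseteq \{h_i(u)=1,\ h_i(u')=1\}$, which has probability exactly $1/k^2$ by pairwise independence of $h_i$. Substituting $|N'| = \lceil k/8 \rceil$ gives a first sum of at least $\tfrac{3}{32}$ and a second sum of at most $\binom{\lceil k/8 \rceil}{2}/k^2 \leq \tfrac{1}{128} + O(1/k)$, so the total is at least $\tfrac{11}{128} - O(1/k) \geq \tfrac{1}{16}$, using that $k = n/\alpha \geq n^\delta$ is polynomially large.

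The only mildly delicate points are the decision to pass to $N'$ rather than argue over the whole neighbourhood, and the routine bookkeeping of the floor/ceiling on $k/8$; since $k$ is polynomially large in $n$, these rounding terms are comfortably absorbed by the slack between $11/128$ and $1/16$. Everything else is a direct application of \Cref{clm:ms-helper-1} and pairwise independence, so I do not expect any genuine obstacle here.
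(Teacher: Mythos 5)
Your proposal is correct and takes essentially the same approach as the paper: the paper also truncates to a sub-neighbourhood $D(v)$ of size exactly $n/(8\alpha)=k/8$, applies inclusion--exclusion together with \Cref{clm:ms-helper-1} and pairwise independence, and obtains $\frac{d}{k}\bigl(\frac{3}{4}-\frac{d}{k}\bigr)=\frac{5}{64}\geq\frac{1}{16}$. The only cosmetic difference is that you bound the second-order term by $\binom{|N'|}{2}/k^2$ while the paper uses the slightly looser $|D(v)|^2/k^2$; both comfortably clear the $1/16$ target.
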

	\begin{proof}
		Let $d(v) := n/8\alpha$ and $D(v)$ be a set of $d$ arbitrary neighbors of $v$ in $\GE{i}$. We know that $v$ will be included in $N(V_i)$ if any of vertices in $D(v)$ is sampled in $V_i$. We have, 
		\begin{align*}
			\Pr_{V_i}\paren{v \in N(V_i) ~\textnormal{and $V_i$ is clean} } &\geq \Pr\paren{D(v) \cap V_i \neq \emptyset ~\text{and $V_i$ is clean}} \tag{$D(v) \subseteq N(v)$} \\
			&\geq \sum_{u \in D(v)} \Pr\paren{u \in V_i ~\text{and $V_i$ is clean}}  - \sum_{u\neq w \in D(v)} \Pr\paren{u,w \in V_i} \tag{by inclusion-exclusion principle and bounding 
			$\Pr\paren{u,w \in V_i} \geq \Pr\paren{u,w \in V_i~\text{and $V_i$ is clean}}$} \\
			&> \frac{3d(v)}{4k}  - \frac{d(v)^2}{k^2} \tag{by~\Cref{clm:ms-helper-1} and as $h_i(\cdot)$ is a pair-wise independent hash function with range $[k]$} \\
			&= \frac{(n/8\alpha)}{(n/\alpha)} \cdot \paren{\frac{3}{4} - \frac{(n/8\alpha)}{(n/\alpha)}}, \tag{as $d(v)=n/8\alpha$ and $k=n/\alpha$} 
		\end{align*}
		which is at least $1/16$ as desired. \Qed{\Cref{clm:ms-1st-v}}
		
	\end{proof}

	Let us now condition on the choice of $V_i$ and assume the event of~\Cref{clm:ms-1st-v} has happened. We say that this step $i$ is a \textbf{matching-step} if $N(V_i) > (n/2\alpha)$; otherwise, 
	we call this step a \textbf{vertex-step}. We argue that in a matching-step we have a constant probability of increasing the size of $\ME{i}$ by one and in a vertex-step we have a probability $\approx \alpha/n$ of matching the vertex $v$ and thus no longer  
	including it in $\GE{i+1}$ and $\High{i+1}$. We formalize this in the following. 
	
	\begin{claim}\label{clm:ms-1st-matching-step}
		Fix $V_i$ and suppose step $i$ is a \underline{matching-step} and the event of~\Cref{clm:ms-1st-v} has happened. Then, 
		\[
			\Pr_{e_i}\paren{e_i \in \ME{i+1} \mid V_i} \geq \frac{1}{3}. 
		\] 
	\end{claim}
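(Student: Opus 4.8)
The plan is to condition on a fixed clean choice of $V_i$ for which the event of \Cref{clm:ms-1st-v} holds, so in particular $v \in N(V_i)$, and to argue that with probability at least $1/3$ the sampled edge $e_i$ returned by $\mathbf{N}_i = \NES(G,V_i)$ is actually added to the greedy matching, i.e.\ $e_i \in \ME{i+1} \setminus \ME{i}$. First I would recall that $e_i$ is added to the greedy matching exactly when neither endpoint of $e_i$ is already matched by $\ME{i}$; since $V_i$ is clean, the endpoint $u \in V_i$ of $e_i$ is automatically unmatched, so the only way $e_i$ fails to be added is if the other endpoint $w \in N(V_i)$ is a matched vertex of $\ME{i}$, or if the sampler fails. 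Thus I want to bound $\Pr_{e_i}(w \in V(\ME{i})) + \Pr(\text{FAIL})$ away from $2/3$.

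Next I would use the guarantee of \Cref{lem:nei-edge-sampler}: conditioned on not failing (which happens with probability at most $1/100$) and ignoring the $n^{-8}$ error event, $w$ is a uniformly random vertex of $N(V_i)$. Since this is a \textbf{matching-step}, $|N(V_i)| > n/2\alpha$, whereas the number of matched vertices of $\ME{i}$ is at most $2 \cdot |\ME{i}| \le 2 \cdot (n/8\alpha) = n/4\alpha$ (using that $\eventmatch(i)$ holds, which we are conditioning on throughout the batch). Hence the probability that the uniformly random $w$ lands in $V(\ME{i})$ is at most $(n/4\alpha)/(n/2\alpha) = 1/2$. Putting this together with the failure probability and the negligible $n^{-8}$ wrong-answer term, the probability that $e_i$ is a valid new matching edge is at least $(1 - 1/100)(1 - 1/2) - n^{-8} \ge 1/3$, as claimed.

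The main obstacle — though it is really a bookkeeping subtlety rather than a deep difficulty — is making sure all the conditioning is handled cleanly: we are simultaneously conditioning on the randomness of prior steps, on $\eventmatch(i)$ and $\eventsparsify(i)$ (per the standing remark), and now on the specific choice of $V_i$ and on the event of \Cref{clm:ms-1st-v}. Under all of this, the only remaining randomness is that of the $\NES$ recovery for $\mathbf{N}_i$, which is independent of everything prior, so the uniformity of $w$ over $N(V_i)$ is legitimate to invoke. The one place to be careful is that the bound $|V(\ME{i})| \le n/4\alpha$ uses $\eventmatch(i)$, and the bound $|N(V_i)| > n/2\alpha$ is the definition of matching-step, so both ingredients are available by hypothesis. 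No genuinely hard estimate is needed; it is a short union bound over the FAIL event and the ``$w$ already matched'' event against the size ratio of $N(V_i)$ to $V(\ME{i})$.
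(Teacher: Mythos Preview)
Your proposal is correct and follows essentially the same approach as the paper: both argue that cleanness of $V_i$ makes the $V_i$-side endpoint unmatched, use the matching-step bound $|N(V_i)| > n/2\alpha$ together with $\eventmatch(i)$ giving $|V(\ME{i})| \le n/4\alpha$ to conclude that at least half of $N(V_i)$ is unmatched, and then combine uniformity of $\NES$ with the $1/100$ failure probability to get the $1/3$ bound. Your bookkeeping of the $n^{-8}$ wrong-answer term is even slightly more careful than the paper's.
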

	\begin{proof}
		As $N(V_i)$ contains more than $(n/2\alpha)$ vertices (as this is matching-step) while $\Measy$ has at most $(n/4\alpha)$ vertices (as $\eventmatch(i)$ has happened), we know that at least half the vertices in $N(V_i)$ are unmatched. Given that all of $V_i$ is also unmatched, 
		if $\NES(G,V_i)$ samples $e_i$ to any of the unmatched vertices in $N(V_i)$, we can include $e_i$ in $\ME{i+1}$ greedily. As the choice of $\NES(G,V_i)$ is uniform over $N(V_i)$, this event 
		happens with probability at least $(1/2 - \delta_F) > 1/3$, as desired (since $\delta_F=1/100$).  \Qed{\Cref{clm:ms-1st-matching-step}}
		
	\end{proof}
	
	\begin{claim}\label{clm:ms-1st-vertex-step}
		Fix $V_i$ and suppose step $i$ is a \underline{vertex-step} and the event of~\Cref{clm:ms-1st-v} has happened. Then, 
		\[
			\Pr_{e_i}\paren{v \in V(\ME{i+1}) \mid V_i} \geq \frac{\alpha}{n}. 
		\] 
	\end{claim}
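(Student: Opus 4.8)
The plan is to exploit the fact that in a vertex-step the set $N(V_i)$ is small, so that the neighbourhood edge sampler $\mathbf{N}_i = \NES(G,V_i)$ has a decent chance of landing its edge on $v$ in particular. Throughout, I would condition on the fixed choice of $V_i$ and on the event of \Cref{clm:ms-1st-v}, that is, $v \in N(V_i)$ and $V_i$ is clean; this is precisely the hypothesis of the claim. Since step $i$ is a vertex-step, by definition $|N(V_i)| \leq n/(2\alpha)$.

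Next I would invoke the guarantee of \Cref{lem:nei-edge-sampler}: conditioned on $\NES(G,V_i)$ neither outputting \textnormal{FAIL} (probability at most $1/100$) nor returning a wrong answer (probability at most $n^{-8}$), the recovered edge $e_i = (u, v')$ has $u \in V_i$ and $v'$ distributed uniformly on $N(V_i)$. Hence, using $|N(V_i)| \leq n/(2\alpha)$,
\[
  \Pr_{e_i}\paren{v' = v} \;\geq\; \frac{1}{|N(V_i)|} \cdot \paren{1 - \tfrac{1}{100}} - n^{-8} \;\geq\; \frac{2\alpha}{n} \cdot \frac{99}{100} - n^{-8} \;\geq\; \frac{\alpha}{n},
\]
for $n$ sufficiently large (using $\alpha \geq 1$ in the last step).

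Finally I would observe that whenever $e_i = (u,v)$ with $u \in V_i$, both of its endpoints are unmatched by $\ME{i}$: $u$ because $V_i$ is clean, and $v$ because $v \in \GE{i}$, the subgraph induced on vertices unmatched by $\ME{i}$. Therefore the greedy recovery step adds $e_i$ to $\Measy$, so $e_i \in \ME{i+1}$ and in particular $v \in V(\ME{i+1})$. Combining this with the displayed bound gives $\Pr_{e_i}(v \in V(\ME{i+1}) \mid V_i) \geq \alpha/n$, as claimed. I do not expect any real obstacle here; the only points that need care are absorbing the sampler's failure and wrong-answer probabilities into the stated bound, and using that $v \in \GE{i}$ (so $v$ is unmatched) to guarantee that the greedy step actually includes the sampled edge.
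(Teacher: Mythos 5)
Your proof is correct and takes essentially the same route as the paper's: use the vertex-step bound $|N(V_i)| \leq n/(2\alpha)$ together with the (near-)uniformity of the $\NES$ output over $N(V_i)$ to lower-bound the probability of sampling $v$, and use cleanness of $V_i$ (plus $v \in \GE{i}$) to conclude the edge joins the greedy matching. The only difference is that you explicitly subtract the $n^{-8}$ wrong-answer probability, which the paper silently absorbs into the slack of the $(1-\delta_F)/|N(V_i)| > \alpha/n$ inequality.
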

	\begin{proof}
		We know $v \in N(V_i)$ and that size of $N(V_i)$ is at most $(n/2\alpha)$. At the same time, since $V_i$ is clean, if $v$ is sampled as an endpoint of $e_i$ by $\NES(G,V_i)$, the edge $e_i$ will join the matching greedily and thus 
		$v$ will be matched. As the choice of $\NES(G,V_i)$ is uniform over $N(V_i)$ and $\delta_F =1/100$, 
		\[
			\Pr_{e_i}\paren{v \in V(\ME{i+1}) \mid V_i} \geq (1-\delta_F) \cdot \frac{1}{\card{N(V_i)}}  > \frac{\alpha}{n}. \Qed{\Cref{clm:ms-1st-vertex-step}}
		\]
		
	\end{proof}

	We can now conclude the proof of~\Cref{lem:ms-1st-batch} as follows. We have that at least half the steps are matching-steps or half of them are vertex-steps. We consider each case as follows. 
	
	\paragraph{When half the steps are matching-steps.} In this case, each matching-step $i$ increases size of $\ME{i}$ by one 
	with probability at least $(1/48)$ by~\Cref{clm:ms-1st-v,clm:ms-1st-matching-step}. Thus, 
	\[
		\Exp\card{\ME{s+1}} \geq (\frac s2) \cdot \frac{1}{48} = \frac{n^2}{2\alpha^2 \cdot \log^3{n}} \cdot \frac{1}{48} \gg n/\alpha,
	\]
	given that $\alpha \leq n^{1-\delta}$. Moreover, the distribution of $\ME{s+1}$ statistically dominates sum of $(s/2)$ Bernoulli random variables with mean $(1/48)$. 
	As such, by the Chernoff bound (\Cref{prop:chernoff}),
	\[
	\Pr\paren{\ME{s+1} < (n/8\alpha)} < \exp\paren{-n/\alpha} \ll 1/\poly{(n)},
	\]
	as $\alpha \leq n^{1-\delta}$. This implies that $\eventmatch(s+1)$ happens, proving~\Cref{lem:ms-1st-batch} in this case. 
	
	\paragraph{When half the steps are vertex-steps.} In this case, each vertex-step $i$ can independently match the vertex $v$ with probability at least $(\alpha/16\,n)$ by~\Cref{clm:ms-1st-v,clm:ms-1st-vertex-step}. Thus, 
	\[
		\Pr\paren{v \in \High{s+1}} \leq (1-\frac{\alpha}{16\, n})^{s/2} \leq \exp\paren{-\frac{\alpha}{16\,n} \cdot  \frac{n^2}{2 \cdot \alpha^2 \cdot \log^3{n}} } < \exp\paren{-\frac{n^{\delta/2}}{32}} \ll 1/\poly{(n)},
	\]
	where we use $\alpha \leq n^{1-\delta}$. 
	Thus, with high probability $v$ will not be part of $\High{s+1}$. A union bound over all the vertices $v \in V$ then ensures that $\High{s+1}$ will be empty with high probability, thus proving~\Cref{lem:ms-1st-batch} in this case too.
	
	\smallskip
	
	\emph{Remark:} We note that the definition of matching-steps and vertex-steps are 
	tailored to individual vertices in $V$; however, even if one vertex leads to having at least half of the steps as matching-steps, we can apply the argument of first part and conclude the proof. Thus, when applying the second part of the argument, we can assume that all vertices lead to half of the steps being vertex-steps, and so we can union bound over all of them.

\subsection{Second Batch: Proof of \texorpdfstring{\Cref{lem:ms-2nd-batch}}{Lemma} } \label{subsec:2nd-batch}

We now prove~\Cref{lem:ms-2nd-batch}. In the following, we condition on the event that $\High{s+1}$ (and $\High{i}$ for every $i \in (s,2s]$) is empty. 
Our goal is then to prove that at some step $i \in (s,2s]$, one of the events $\eventmatch(i)$ or $\eventsparsify(i)$ is not going to happen. 
The key to the proof of~\Cref{lem:ms-2nd-batch} (and~\Cref{lem:ms} itself) is the following. 

\begin{lemma}\label{lem:ms-increase}
	For any $i \in (s\,,\,2s]$, 
	\[
		\Pr_{(V_i,e_i)}\Paren{\ME{i+1} > \ME{i}} \geq \frac{\alpha \cdot \log^3\!{n}}{4 \cdot n}.
	\] 
\end{lemma}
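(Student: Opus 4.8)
The plan is to follow Assadi and Shah's analysis of the analogous statement, adjusting constants for our altered parameters ($\parms=n$, an $\alpha$-factor more samples, and the low/medium-degree boundary and $\eventsparsify(i)$ shifted by an $\alpha$-factor). Fix $i\in(s,2s]$ and write $k:=n/\alpha$. Per the Remark, every probability below is taken conditioned on the randomness of the first $i-1$ steps together with $\eventmatch(i)$ and $\eventsparsify(i)$; under this conditioning, and under the standing hypothesis $\High{i}=\emptyset$ (which holds since $\GE{i}$ is a subgraph of $\GE{s+1}$, so degrees only drop), we have that every vertex of $\GE{i}$ has $\GE{i}$-degree below $k/8$, that $\card{V(\ME{i})}<k/4$, and that $\GE{i}$ has $m_i>20\,k\log^4 n$ edges; only the fresh randomness of $(V_i,e_i)$ is used. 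First I would characterise progress in step $i$: if $V_i$ is clean then all of $V_i$ is unmatched, so $V_i\subseteq V(\GE{i})$ and the unmatched $G$-neighbours of $V_i$ are exactly $N_{\GE{i}}(V_i)$, while $\card{N_G(V_i)}\le\card{N_{\GE{i}}(V_i)}+\card{V(\ME{i})}<\card{N_{\GE{i}}(V_i)}+k/4$. Since $\mathbf{N}_i=\NES(G,V_i)$ fails with probability at most $\delta_F=1/100$ and otherwise outputs $(u,v)$ with $u\in V_i$ and $v$ uniform in $N_G(V_i)$, and such an edge extends the greedy matching whenever both its endpoints are unmatched,
\[
\Pr_{(V_i,e_i)}\!\left(\ME{i+1}>\ME{i}\right)\ \ge\ (1-\delta_F)\cdot\expect{\mathbf{1}[V_i\text{ clean}]\cdot\frac{\card{N_{\GE{i}}(V_i)}}{\card{N_{\GE{i}}(V_i)}+k/4}}.
\]
Calling step $i$ a \emph{matching-step} when $\card{N_{\GE{i}}(V_i)}\ge k/4$ and a \emph{vertex-step} otherwise, the bracketed quantity is $\ge\tfrac12$ on a clean matching-step and $\ge 2\card{N_{\GE{i}}(V_i)}/k$ on a clean vertex-step.

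Next I would estimate $\card{N_{\GE{i}}(V_i)}$, exploiting that $h_i$ is pairwise independent. For each $w\in V(\GE{i})$, a Bonferroni bound over $u\in N_{\GE{i}}(w)$, together with \Cref{clm:ms-helper-1} (giving $\Pr(u\in V_i\text{ and }V_i\text{ clean})\ge 3/(4k)$), pairwise independence ($\Pr(u,u'\in V_i)=1/k^2$), and $\degG{w}{i}<k/8$, gives $\Pr(w\in N_{\GE{i}}(V_i)\text{ and }V_i\text{ clean})\ge\tfrac{11}{16}\cdot\degG{w}{i}/k$; summing over $w$ yields $\expect{\card{N_{\GE{i}}(V_i)}\cdot\mathbf{1}[V_i\text{ clean}]}\ge \tfrac{11 m_i}{8k}$. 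The same expansion controls the second moment: $\expect{\card{N_{\GE{i}}(V_i)}^2}\le \tfrac{4 m_i^2}{k^2}+\tfrac1k\sum_u\degG{u}{i}^2\le \tfrac{4 m_i^2}{k^2}+\tfrac{m_i}{4}$, using once more the degree bound from $\High{i}=\emptyset$.

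Finally I would combine these through a short case split on the density of $\GE{i}$. If $m_i\le k^2/128$, the second-moment bound gives $\expect{\card{N_{\GE{i}}(V_i)}\cdot\mathbf{1}[\card{N_{\GE{i}}(V_i)}\ge k/4]}\le\tfrac 4k\expect{\card{N_{\GE{i}}(V_i)}^2}\le \tfrac{9 m_i}{8k}$, hence $\expect{\card{N_{\GE{i}}(V_i)}\cdot\mathbf{1}[V_i\text{ clean},\ \card{N_{\GE{i}}(V_i)}<k/4]}\ge \tfrac{m_i}{4k}>5\log^4 n$; feeding this into the displayed bound via the vertex-step estimate gives $\Pr(\ME{i+1}>\ME{i})\ge (1-\delta_F)\cdot\tfrac 2k\cdot 5\log^4 n\ge \tfrac{\log^3 n}{4k}=\tfrac{\alpha\log^3 n}{4n}$. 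If instead $m_i>k^2/128$, Paley--Zygmund applied to $\card{N_{\GE{i}}(V_i)}\cdot\mathbf{1}[V_i\text{ clean}]$ (using the first- and second-moment bounds above) shows that with probability $\Omega(1)$ we have $V_i$ clean and $\card{N_{\GE{i}}(V_i)}=\Omega(m_i/k)=\Omega(k)$, on which event the bracketed quantity in the display is $\Omega(1)$; this beats $\tfrac{\log^3 n}{4k}$ once $k$ is polylogarithmically large, and when $k$ is smaller, $V_i$ is large enough that $N_{\GE{i}}(V_i)$ with high probability covers all $\Omega(m_i/k)=\omega(k)$ non-isolated vertices of $\GE{i}$, making step $i$ a matching-step with probability $\Omega(1)$. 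In every case $\Pr(\ME{i+1}>\ME{i})\ge \tfrac{\alpha\log^3 n}{4n}$.

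The main obstacle is this last step: the function $x\mapsto x/(x+k/4)$ is concave, so one cannot simply push the expectation inside it, and one must control the rare events where $\card{N_{\GE{i}}(V_i)}$ is far above its mean — equivalently, verify that the constants survive uniformly over the whole range $\alpha\le n^{1-\delta}$. This is precisely where the extra $\alpha$-factor of samples and the $\alpha$-shifted degree/sparsity thresholds are used, and it is the only place the argument departs quantitatively from Assadi and Shah; everything else (including \Cref{clm:ms-helper-1} and the moment computations) is a direct transcription of their proof.
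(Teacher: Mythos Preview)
Your argument is correct (the unnecessary ``when $k$ is smaller'' sub-case can be dropped entirely since $k=n/\alpha\ge n^{\delta}$ always), but it is \emph{not} a transcription of Assadi and Shah's proof, and hence not of the paper's either. The paper proceeds via a structural decomposition (\Cref{clm:ms-geasy-large}): it buckets the vertices of $\GE{i}$ by degree, and depending on whether a constant fraction of the edges touch $\Med{i}$ or lie inside $\Low{i}$, it extracts a specific set $D$ of nearly-uniform degree $d$. In the first case it argues $\Pr(V_i\cap D\neq\emptyset,\ V_i\text{ clean})\gtrsim \log^3 n/d$ and then $\Pr(e_i\text{ good}\mid V_i)\gtrsim d/k$, the two factors of $d$ cancelling to give exactly $\log^3 n/(4k)$; in the second case it runs a Chebyshev argument on the count of $D$-vertices with an $\NL$-neighbour in $V_i$, controlling covariances via a double-counting of common neighbours.

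Your approach bypasses the bucketing entirely: you compute first and second moments of $\card{N_{\GE{i}}(V_i)}$ over the whole residual graph and split only on whether $m_i\lessgtr k^2/128$. This is more elementary and unified, and your moment bounds are correct (in particular $\sum_{w,w'}\com(w,w')=\sum_u\degG{u}{i}^2$, so no diagonal term is missing). The trade-off is in the constants: the paper's degree-bucketing makes the $d$'s cancel cleanly and delivers the factor $1/4$ directly, whereas your Paley--Zygmund step in the dense case yields only a small absolute constant ($\approx 3\cdot 10^{-4}$), which dominates $\log^3 n/(4k)$ only because $k\ge n^{\delta}$. That suffices for the lemma as stated, but your closing remark that ``everything else is a direct transcription of their proof'' is inaccurate: the moment computation on $\card{N_{\GE{i}}(V_i)}$ and the $m_i$-based case split are genuinely your own route.
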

 
We first identify a simple structure in the graph $\GE{i}$. The following claim is based on a standard low-degree orientation of the graph plus geometric grouping 
of degrees of vertices.

\begin{claim}\label{clm:ms-geasy-large}
	\underline{At least one} of the following two conditions is true about $\GE{i}$: 
	\begin{enumerate}[label=$(\roman*)$]
		\item for some $d \in \left[\log^3\!{n},\dfrac{n}{8\alpha}\right)$, there are $\left( \dfrac{n \cdot \log^3{n}}{2 \alpha d} \right)$ vertices $v$ in $\Med{i}$ with $\degG{v}{i} \geq d$;  
		\item for some $d \in [1,\log^3\!{n})$, there are $\left(\dfrac{19\,n\cdot\log^3{n}}{2 \alpha d} \right)$ vertices in $\Low{i}$ with at least $d$ neighbors in $\Low{i}$. 
	\end{enumerate}
\end{claim}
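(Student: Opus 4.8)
The plan is to use a standard low-degree orientation of $\GE{i}$ followed by a pigeonhole/geometric-grouping argument. Since we are conditioning on $\High{i}$ being empty, every vertex $v \in \GE{i}$ has $\degG{v}{i} < n/8\alpha$, so $\GE{i}$ has at most $(n/16\alpha)\cdot n \le n^2/(16\alpha)$ edges — but more importantly, we are in the case where $\eventsparsify(i)$ happens, i.e. $\GE{i}$ has \emph{more} than $20\,n\log^4 n/\alpha$ edges. First I would orient each edge of $\GE{i}$ from its lower-ID (or lower-degree) endpoint to the higher, or more usefully, apply the well-known fact that a graph with $m$ edges admits an orientation in which every vertex has out-degree at most... — actually the cleaner route is: since every vertex has degree $< n/8\alpha$, orient each edge arbitrarily; then the sum of out-degrees is $m > 20 n\log^4 n/\alpha$, so the average out-degree exceeds $20\log^4 n$ when there are $\le n/\alpha \cdot$(something) vertices — I need to be careful here. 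Let me restructure.

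The key counting step: $\GE{i}$ has $N := |V(\GE{i})| \le n$ vertices and $m > 20\,n\log^4 n/\alpha$ edges, hence $\sum_{v} \degG{v}{i} = 2m > 40\,n\log^4 n/\alpha$. Split this sum over $\Low{i}$ and $\Med{i}$ (there is no $\High{i}$). I would argue that the total degree contributed by medium-degree vertices, $\sum_{v\in\Med{i}}\degG{v}{i}$, is either at least half of $2m$ or at most half. In the first case, I dyadically partition the medium range $[\log^3 n, n/8\alpha)$ into $O(\log n)$ buckets $[2^j\log^3 n,\, 2^{j+1}\log^3 n)$; by pigeonhole one bucket, corresponding to some threshold $d \approx 2^j \log^3 n$, accounts for a $\Omega(1/\log n)$ fraction of the medium-degree mass, and since each such vertex has degree $< 2d$, the number of vertices with $\degG{v}{i}\ge d$ in that bucket is $\Omega(m/(d\log n)) = \Omega(n\log^3 n/(\alpha d))$, matching condition $(i)$ after tuning constants. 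In the second case, $\Low{i}$ carries $\Omega(m)$ total degree; restricting to the subgraph induced on $\Low{i}$ and again dyadically grouping degrees in $[1,\log^3 n)$ gives, by the same pigeonhole, some $d$ and $\Omega(n\log^3 n/(\alpha d))$ vertices of $\Low{i}$ with $\ge d$ neighbours \emph{inside} $\Low{i}$, matching condition $(ii)$.

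The one subtlety — and the main obstacle — is condition $(ii)$: I must count neighbours \emph{within} $\Low{i}$, not total neighbours, so before dyadic grouping I should delete from each low-degree vertex its edges going to $\Med{i}$. I need to check that this does not destroy too much of the edge mass. Since every medium vertex has degree $< n/8\alpha$ and, by the case assumption, $\sum_{v\in\Med{i}}\degG{v}{i} \le m$, the number of edges incident to $\Med{i}$ is at most $m$; but those are precisely the edges we might delete, so this crude bound is not enough on its own — I would instead observe that in the second case $\sum_{v\in\Low{i}}\degG{v}{i} \ge (1/2)\cdot 2m = m$, and the edges with \emph{both} endpoints in $\Low{i}$ carry at least $m - (\text{edges with an endpoint in }\Med{i})$; re-examining, I should make the case split on the number of edges \emph{internal to} $\Low{i}$ versus edges touching $\Med{i}$, so that in the ``$\Low{i}$ case'' we genuinely have $\Omega(m)$ edges inside $\Low{i}$. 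This is exactly the bookkeeping that the constants $20$ in $\eventsparsify$ and $19$ in condition $(ii)$ are calibrated for, so I expect the argument to go through cleanly once the case split is phrased in terms of internal-$\Low{i}$ edges; everything else is routine dyadic pigeonhole.
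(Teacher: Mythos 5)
Your proposal, after the self-correction near the end, is essentially the paper's proof. The paper's argument is exactly the one you arrive at: since $\High{i}$ is empty, every edge of $\GE{i}$ is either incident on $\Med{i}$ or lies entirely inside $\Low{i}$, so the $>20\,(n/\alpha)\log^4 n$ edges split into $\geq (n/\alpha)\log^4 n$ incident on $\Med{i}$ or $\geq 19\,(n/\alpha)\log^4 n$ internal to $\Low{i}$, and then a dyadic bucketing over $O(\log n)$ degree ranges (total degree in case $(i)$, degree within $\Low{i}$ in case $(ii)$) yields the claimed thresholds. Your initial degree-mass split would indeed have run into the internal-edge issue you flagged, but you correctly identified that the case split must be on edge counts (edges incident to $\Med{i}$ versus edges internal to $\Low{i}$), which is precisely how the paper phrases it and what the constants $20$ and $19$ are tuned for.
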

\begin{proof}
	Given that $\High{i}$ is empty, any edge in $\GE{i}$ is either incident on $\Med{i}$ or is between two vertices in $\Low{i}$. Consequently, given that by $\eventsparsify(i)$, we have at least 
	$(20\,(n/\alpha)\cdot\log^4\!{n})$ edges in $\GE{i}$, there are either at least $(n \cdot \log^4\!{n}/\alpha)$ edges incident on $\Med{i}$ or $(19 \cdot n \cdot \log^4 (n)/\alpha)$ edges entirely inside $\Low{i}$.
	We prove that each case corresponds to one of the conditions in the claim.
	
	\paragraph{When $\geq ((n/\alpha) \cdot \log^4 {n})$ edges are incident on $\Med{i}$.} We partition vertices in $\Med{i}$ into sets $\set{D_j}$ where each $D_j$ contains  vertices $v$ with $\degG{v}{i} \in [2^j , 2^{j+1})$. 
	As such, 
	\begin{align*}
		\sum_{j} \card{D_j} \cdot 2^{j+1} \geq \text{\# edges incident on $\Med{i}$} \geq (n/\alpha) \cdot \log^4\!{n}.
	\end{align*}
	As there are at most $\log{n}$ choices for $j$ in the summation above,  we should have some $D_{j^*}$ with 
	\[
	\card{D_{j^*}} \geq \dfrac{(n/\alpha) \cdot \log^3\!{n}}{2^{j^*+1}}.
	\]
	Setting $d = 2^{j^*}$ and returning (a subset of) $D_{j^*}$ satisfies the bound in part $(i)$ of the claim: all vertices in $D_{j^*} \subseteq \Med{i}$ have $\degG{\cdot}{i}$ in $[\log^3\!{n},\dfrac{n}{8\alpha})$ by definition of $\Med{i}$, and we can pick
	a subset of $D_{j^*}$ with size prescribed by the claim as all vertices in $D_{j^*}$ have degree $d$ at least. 

	\paragraph{When $\geq (19\,(n/\alpha) \cdot \log^4\!{n})$ edges are entirely inside $\Low{i}$.} The argument is almost identical to the above part by counting the degree of vertices in $\Low{i}$ but 
	only in $\Low{i}$ (instead of all of $\degG{\cdot}{i}$ as in the previous part). We partition vertices in $\Low{i}$ into sets $\set{D_j}$ where each $D_j$ contains all vertices with number of neighbors in $\Low{i}$ in $[2^j,2^{j+1})$. 
	As such, 
	\begin{align*}
		\sum_{j} \card{D_j} \cdot 2^{j+1} \geq \text{\# edges entirely inside $\Low{i}$} \geq 19\,(n/\alpha) \cdot \log^4\!{n}.
	\end{align*}
	As there are at most $\log{n}$ choices for $j$ in the summation above, we should have some $D_{j^*}$ with 
	\[
	\card{D_{j^*}} \geq \dfrac{19\,(n/\alpha) \cdot \log^3\!{n}}{2^{j^*+1}}.
	\]
	Setting $d = 2^{j^*}$ and returning (a subset of) $D_{j^*}$ satisfies the bound in part $(ii)$ of the claim: all vertices in $D_{j^*} \subseteq \Low{i}$ have degree less than $(\log^3\!{n})$ by the definition of $\Low{i}$ 
	(even in $\GE{i}$ and so between $\Low{i}$ also) and 
	we can pick a subset of $D_{j^*}$ with the required size as vertices in $D_{j^*}$ have degree $d$ at least. \Qed{\Cref{clm:ms-geasy-large}}
	
\end{proof}

In the following, we refer to a step $i \in (s,2s]$ as a \textbf{$\bm{V_i}$-step} whenever case $(i)$ of~\Cref{clm:ms-geasy-large} happens and a \textbf{$\bm{N(V_i)}$-step} otherwise. We will show that: 
\begin{itemize}
	 \item In a \textbf{$\bm{V_i}$-step}, we have ``enough'' large degree vertices and even if we sample one of them in $V_i$ it will make the intersection of $N(V_i)$ and $\GE{i}$ large;
	\item In a \textbf{$\bm{N(V_i)}$-step}, we have ``so many'' low degree vertices in $\GE{i}$ that many of them will appear in $N(V_i)$ and thus there is a large intersection 
between $N(V_i)$ and $\GE{i}$ again. 
\end{itemize}
In each case, we can  finalize the proof by showing that having $N(V_i)$ intersect largely with $\GE{i}$ allows us to recover an edge $e_i$ via $\NES(G,V_i)$ 
that can increase size of $\ME{i}$ with sufficiently large probability.

\subsubsection*{Case $(i)$ of~\Cref{clm:ms-geasy-large}: \textbf{$\bm{V_i}$-steps}} 
Let 
\begin{align}
d \in [\log^3\!{n},\dfrac{n}{8\alpha}) \quad \text{and} \quad D \subseteq \Med{i} \quad \text{with} \quad \card{D} = \frac{n\cdot\log^3{n}}{2 \alpha d} \label{eq:ms-Vsteps-1}
\end{align}
be, respectively, the degree-parameter and corresponding set guaranteed by Case $(i)$ of~\Cref{clm:ms-geasy-large}. The following claim lower bounds the probability that $V_i$ is both clean and  samples a vertex from $D$. 

\begin{claim}\label{clm:ms-Vstep-1}
	$\Pr_{V_i}\paren{V_i \cap D \neq \emptyset~\textnormal{and $V_i$ is clean} } \geq \dfrac{\log^3\!{n}}{8d}$. 
\end{claim}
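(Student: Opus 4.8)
The plan is to mirror the proof of~\Cref{clm:ms-1st-v}: apply the Bonferroni (inclusion--exclusion) inequality to lower bound the probability that $V_i$ both hits $D$ and is clean by a first-order sum over single vertices of $D$ minus a second-order correction over pairs, and then evaluate both quantities using only the pairwise independence of $h_i$ together with the bound $\card{D} = \tfrac{n\log^3 n}{2\alpha d}$ from~\eqref{eq:ms-Vsteps-1} and $k = n/\alpha$.

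First I would handle the first-order term. For any fixed $u \in D \subseteq \Med{i} \subseteq \GE{i}$, I would invoke the argument of~\Cref{clm:ms-helper-1} to get $\Pr_{V_i}\paren{u \in V_i ~\textnormal{and } V_i \textnormal{ is clean}} \ge \tfrac{3}{4k}$; its proof needs only that $\ME{i}$ matches at most $k/4$ vertices, which holds here because we condition on $\eventmatch(i)$, so the bound is valid for $i \in (s,2s]$ even though that claim was stated for the first batch. Summing over the $\card{D}$ vertices of $D$ gives a lower bound of $\tfrac{3\,\card{D}}{4k}$ for this term.

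Next I would bound the second-order term. Dropping the cleanness requirement there (which only decreases probabilities), pairwise independence of $h_i$ with range $[k]$ gives $\Pr\paren{u,w \in V_i} = 1/k^2$ for distinct $u,w$, so the correction is at most $\binom{\card{D}}{2}/k^2 \le \card{D}^2/(2k^2)$. Substituting $\card{D} = \tfrac{n\log^3 n}{2\alpha d}$ and $k = n/\alpha$ turns the overall lower bound into $\tfrac{3\log^3 n}{8d} - \tfrac{\log^6 n}{8d^2}$; since $d \ge \log^3 n$ by~\eqref{eq:ms-Vsteps-1}, the negative term is at most $\tfrac{\log^3 n}{8d}$, which leaves at least $\tfrac{\log^3 n}{8d}$, as claimed.

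I do not expect a genuine obstacle here: this is a standard pairwise-independent second-moment estimate calibrated to the specific sizes $\card{D}$ and $k$ and the degree range $d \in [\log^3 n, n/8\alpha)$. The only points to watch are (i) that~\Cref{clm:ms-helper-1} is literally stated for the first batch, so I would explicitly note that its proof carries over verbatim under the conditioning on $\eventmatch(i)$ (or simply re-derive the $\tfrac{3}{4k}$ bound inline), and (ii) that the slack from $d \ge \log^3 n$ is exactly what absorbs the second-order loss, so I would keep the constants explicit rather than hiding them in asymptotics.
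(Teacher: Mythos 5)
Your proposal matches the paper's proof essentially step for step: inclusion–exclusion for the union event, a per-vertex lower bound of $\tfrac{3}{4k}$ via \Cref{clm:ms-helper-1} (which the paper itself reuses in the second batch under the standing conditioning on $\eventmatch(i)$, just as you note), a pairwise-independence bound of $1/k^2$ on the pair term, and the observation that $d\ge\log^3 n$ absorbs the second-order correction. The only cosmetic difference is that you keep the slightly tighter $\binom{|D|}{2}\le|D|^2/2$ where the paper just uses $|D|^2$, which changes nothing material.
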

\begin{proof}
	We have, 
\begin{align*}
	\Pr\paren{V_i \cap D \neq \emptyset ~\textnormal{and $V_i$ is clean}} &\geq \sum_{v \in D} \Pr\paren{v \in V_i ~\textnormal{and $V_i$ is clean}} \, - \sum_{u \neq w \in D} \Pr\paren{u,w \in V_i} 
	\tag{by inclusion-exclusion principle and dropping the `intersection' from the second event} \\
	&\geq \card{D} \cdot \frac{3}{4k}  - \card{D}^2 \cdot \frac{1}{k^2} \tag{by~\Cref{clm:ms-helper-1} and as $h_i(\cdot)$ is a pair-wise independent hash function with range $[k]$} \\
	&= \frac{n\cdot\log^3{n}}{2 \alpha d} \cdot \frac{\alpha}{n} \cdot \paren{\frac{3}{4} - \frac{n\log^3{n} \cdot \alpha}{ 2 \alpha d \cdot n}} \tag{by the choice of $k=n/\alpha$ and size of $D$ in~\Cref{eq:ms-Vsteps-1}} \\
	&\geq \frac{\log^3{n}}{8d},
\end{align*}
as $d \geq \log^3\!{n}$ by~\Cref{eq:ms-Vsteps-1}.  \Qed{\Cref{clm:ms-Vstep-1}}

\end{proof}

Let us now condition on the choice of $V_i$ and assume the event of~\Cref{clm:ms-Vstep-1} happens. Given that any vertex in $D$ already has $d$ neighbors in $\GE{i}$, 
we have that $N(V_i) \cap \GE{i}$ has size at least $d$ in this case. On the other hand, $N(V_i)$ can have at most $(n/4\alpha)$ neighbors outside $\GE{i}$ by the bound on the total number of 
matched vertices by~$\eventmatch(i)$. As the choice of $e_i$ from $\NES(G,V_i)$ is uniform over $N(V_i)$, we have, 
\begin{align*}
	\Pr_{e_i}\paren{\text{$e_i$ is from $V_i$ to $N(V_i) \cap \GE{i}$} \mid V_i} &\geq (1-\delta_F) \cdot \frac{\card{N(V_i) \cap \GE{i}}}{\card{N(V_i)}} \\
	&\geq (1-\delta_F) \cdot \frac{d}{(n/4\alpha) + d} \\
	&\geq (1-\delta_F) \cdot \frac{d \cdot 8\alpha}{3n} \tag{as $d \leq (n/8\alpha)$ in~\Cref{eq:ms-Vsteps-1}} \\
	&\geq \frac{2d \cdot \alpha}{n},
\end{align*}
as $\delta_F < 1/4$. Given that all of $V_i$ is also unmatched (as $V_i$ is clean by conditioning on the event of~\Cref{clm:ms-Vstep-1}), 
we can include $e_i$ in $\ME{i+1}$ greedily whenever $e_i$ is between $V_i$ and $N(V_i) \cap \GE{i}$.

Consequently, combining the two events above, we have, 
\begin{align*}
	\Pr_{(V_i,e_i)}\Paren{\ME{i+1} > \ME{i}} &\geq \Pr_{V_i}\paren{V_i \cap D \neq \emptyset~\textnormal{and $V_i$ is clean} } \cdot \Pr_{e_i}\paren{\text{$e_i$ is from $V_i$ to $N(V_i) \cap \GE{i}$} \mid V_i} \\
	&\geq  \dfrac{\log^3\!{n}}{8d} \cdot \frac{2d \cdot \alpha}{n} = \frac{\alpha\log^3\!{n}}{4\,n}.
\end{align*}
This concludes the proof of~\Cref{lem:ms-increase} in this case.

\subsubsection*{Case $(ii)$ of~\Cref{clm:ms-geasy-large}: \textbf{$\bm{N(V_i)}$-steps}}

 Let 
\begin{align}
d \in [1,\log^3\!{n}) \quad \text{and} \quad D \subseteq \Low{i} \quad \text{with} \quad \card{D} = \frac{19\,(n/\alpha)\cdot\log^3{n}}{2d} \label{eq:ms-NVsteps-1}
\end{align}
be, respectively, the degree-parameter and corresponding set guaranteed by Case $(ii)$ of~\Cref{clm:ms-geasy-large}. 
For the rest of this analysis, we focus only on the subgraph of $\GE{i}$ induced on vertices of $\Low{i}$ and for each  $v \in D$, we pick exactly $d$ (arbitrary) neighbors from $\Low{i}$ and 
denote them by $\NL(v)$.
Our goal is to show that $N(V_i)$ and $\GE{i}$  intersect largely.
We will do so by counting the elements in $D$ that have neighbors in $V_i$. This works because $D \subseteq \GE{i}$ and having neighbors in $V_i$ means that the vertex itself is in $N(V_i)$.

For any vertex $v \in D$, define an indicator random variable $X_v \in \set{0,1}$ which is $1$ iff $\NL(v) \cap V_i \neq \emptyset$ (see \Cref{fig:MorS_a}). Notice that $X= \sum_{v \in D} X_v$ is a random variable that denotes the number of vertices $v$ in $D$ that have a neighbor in $\NL(v)$ that belongs to $V_i$.
Note that we do not consider all neighbors of $v$ in $\Low{i}$, only the ones in $NL(v)$; this is okay since we just need a lower bound on $\card{N(V_i) \cap \GE{i}}$.
It is easy to see that $X \leq \card{N(V_i) \cap \GE{i}}$ since $v$ contributes to $\card{N(V_i) \cap \GE{i}}$ if $X_v=1$ (see \Cref{fig:MorS_b}).
We first bound the probability of the event $\NL(v) \cap V_i \neq \emptyset$.

\begin{figure}[!ht]
	\centering
	\subcaptionbox{This figure shows the set of vertices $\Low{i}$ and its subset $D$. $V_i$ (in blue) is the set of sampled vertices in step $i$. For a vertex $u$ in $D$ we define a set of neighbors $\NL(u)$ which if intersects with $V_i$ then we have random variable $X_u=1$.  \label{fig:MorS_a}}%
	[.45\linewidth]{ 	\resizebox{160pt}{110pt}{

\tikzset{every picture/.style={line width=0.75pt}} 

\begin{tikzpicture}[x=0.75pt,y=0.75pt,yscale=-1,xscale=1]

\draw   (217,17) -- (408,17) -- (408,187) -- (217,187) -- cycle ;
\draw  [color={rgb, 255:red, 25; green, 2; blue, 208 }  ,draw opacity=1 ] (266,156.5) .. controls (266,125.85) and (313.23,101) .. (371.5,101) .. controls (429.77,101) and (477,125.85) .. (477,156.5) .. controls (477,187.15) and (429.77,212) .. (371.5,212) .. controls (313.23,212) and (266,187.15) .. (266,156.5) -- cycle ;
\draw   (231,86) .. controls (231,66.12) and (247.12,50) .. (267,50) .. controls (286.88,50) and (303,66.12) .. (303,86) .. controls (303,105.88) and (286.88,122) .. (267,122) .. controls (247.12,122) and (231,105.88) .. (231,86) -- cycle ;
\draw   (283,98) .. controls (283,95.51) and (285.01,93.5) .. (287.5,93.5) .. controls (289.99,93.5) and (292,95.51) .. (292,98) .. controls (292,100.49) and (289.99,102.5) .. (287.5,102.5) .. controls (285.01,102.5) and (283,100.49) .. (283,98) -- cycle ;
\draw   (306.5,98) .. controls (306.5,85.3) and (312.66,75) .. (320.25,75) .. controls (327.84,75) and (334,85.3) .. (334,98) .. controls (334,110.7) and (327.84,121) .. (320.25,121) .. controls (312.66,121) and (306.5,110.7) .. (306.5,98) -- cycle ;
\draw    (287.5,102.5) -- (320.25,121) ;
\draw    (287.5,93.5) -- (320.25,75) ;

\draw (274,82) node [anchor=north west][inner sep=0.75pt]   [align=left] {$u$};
\draw (337,71) node [anchor=north west][inner sep=0.75pt]   [align=left] {$\NL(u)$};
\draw (477,169) node [anchor=north west][inner sep=0.75pt]   [font=\large,color={rgb, 255:red, 25; green, 2; blue, 208 }  ,opacity=1 ] [align=left] {$V_i$};
\draw (415,25) node [anchor=north west][inner sep=0.75pt]   [align=left] {$\Low{i}$};
\draw (235,40) node [anchor=north west][inner sep=0.75pt]   [align=left] {$D$};

\end{tikzpicture}
        }}
	\hspace{0.4cm}
	\subcaptionbox{This figure shows $X \leq \card{N(V_i) \cap \GE{i}}$. The vertices $u_j$ are in $\Low{i}$. Notice that $X_{u_1}=1,X_{u_2}=1,X_{u_3}=0$ and $X_{u_4}$ is not defined so we have $X=2$. But $\card{N(V_i) \cap \GE{i}}=~3$ since $u_1,u_2$ and $u_4$ contribute to it.  \label{fig:MorS_b}}%
	[.45\linewidth]{ \resizebox{200pt}{100pt}{

\tikzset{every picture/.style={line width=0.75pt}} 

\begin{tikzpicture}[x=0.75pt,y=0.75pt,yscale=-1,xscale=1]

\draw   (153,20) -- (441,20) -- (441,190) -- (153,190) -- cycle ;
\draw [color={rgb, 255:red, 25; green, 2; blue, 208 }  ,draw opacity=1 ]   (265,131.5) .. controls (265,100.85) and (312.23,76) .. (370.5,76) .. controls (428.77,76) and (476,100.85) .. (476,131.5) .. controls (476,162.15) and (428.77,187) .. (370.5,187) .. controls (312.23,187) and (265,162.15) .. (265,131.5) -- cycle ;
\draw   (206.5,82.25) .. controls (206.5,55.6) and (228.1,34) .. (254.75,34) .. controls (281.4,34) and (303,55.6) .. (303,82.25) .. controls (303,108.9) and (281.4,130.5) .. (254.75,130.5) .. controls (228.1,130.5) and (206.5,108.9) .. (206.5,82.25) -- cycle ;
\draw   (283,71) .. controls (283,68.51) and (285.01,66.5) .. (287.5,66.5) .. controls (289.99,66.5) and (292,68.51) .. (292,71) .. controls (292,73.49) and (289.99,75.5) .. (287.5,75.5) .. controls (285.01,75.5) and (283,73.49) .. (283,71) -- cycle ;
\draw   (306.5,71) .. controls (306.5,58.3) and (312.66,48) .. (320.25,48) .. controls (327.84,48) and (334,58.3) .. (334,71) .. controls (334,83.7) and (327.84,94) .. (320.25,94) .. controls (312.66,94) and (306.5,83.7) .. (306.5,71) -- cycle ;
\draw    (287.5,75.5) -- (320.25,94) ;
\draw    (287.5,66.5) -- (320.25,48) ;
\draw   (248,102) .. controls (248,99.51) and (250.01,97.5) .. (252.5,97.5) .. controls (254.99,97.5) and (257,99.51) .. (257,102) .. controls (257,104.49) and (254.99,106.5) .. (252.5,106.5) .. controls (250.01,106.5) and (248,104.49) .. (248,102) -- cycle ;
\draw   (276.41,98.54) .. controls (281.61,93.04) and (288.64,91.24) .. (292.12,94.53) .. controls (295.6,97.81) and (294.2,104.94) .. (289,110.44) .. controls (283.8,115.94) and (276.77,117.73) .. (273.3,114.45) .. controls (269.82,111.16) and (271.21,104.04) .. (276.41,98.54) -- cycle ;
\draw    (252.5,106.5) -- (273.3,114.45) ;
\draw    (252.5,97.5) -- (288,91.5) ;
\draw   (222,75) .. controls (222,72.51) and (224.01,70.5) .. (226.5,70.5) .. controls (228.99,70.5) and (231,72.51) .. (231,75) .. controls (231,77.49) and (228.99,79.5) .. (226.5,79.5) .. controls (224.01,79.5) and (222,77.49) .. (222,75) -- cycle ;
\draw   (167.5,83) .. controls (167.5,70.3) and (173.66,60) .. (181.25,60) .. controls (188.84,60) and (195,70.3) .. (195,83) .. controls (195,95.7) and (188.84,106) .. (181.25,106) .. controls (173.66,106) and (167.5,95.7) .. (167.5,83) -- cycle ;
\draw    (226.5,79.5) -- (181.25,106) ;
\draw    (226.5,70.5) -- (181.25,60) ;
\draw   (392.06,45.46) .. controls (394.55,45.5) and (396.54,47.54) .. (396.5,50.02) .. controls (396.47,52.51) and (394.43,54.5) .. (391.94,54.46) .. controls (389.46,54.43) and (387.47,52.39) .. (387.5,49.9) .. controls (387.54,47.42) and (389.58,45.43) .. (392.06,45.46) -- cycle ;
\draw   (391.75,68.96) .. controls (404.45,69.13) and (414.67,75.42) .. (414.57,83.02) .. controls (414.47,90.61) and (404.09,96.63) .. (391.39,96.46) .. controls (378.69,96.29) and (368.47,90) .. (368.57,82.41) .. controls (368.67,74.81) and (379.05,68.79) .. (391.75,68.96) -- cycle ;
\draw    (387.5,49.9) -- (368.57,82.41) ;
\draw    (396.5,50.02) -- (414.57,83.02) ;

\draw (269,48) node [anchor=north west][inner sep=0.75pt]   [align=left] {$u_2$};
\draw (474,150) node [anchor=north west][inner sep=0.75pt]  [font=\large,color={rgb, 255:red, 25; green, 2; blue, 208 }  ,opacity=1 ]  [align=left] {$V_i$};
\draw (451,22) node [anchor=north west][inner sep=0.75pt]   [align=left] {$\Low{i}$};
\draw (284,23) node [anchor=north west][inner sep=0.75pt]   [align=left] {$D$};
\draw (236,109) node [anchor=north west][inner sep=0.75pt]   [align=left] {$u_1$};
\draw (221,50) node [anchor=north west][inner sep=0.75pt]   [align=left] {$u_3$};
\draw (403,33) node [anchor=north west][inner sep=0.75pt]   [align=left] {$u_4$};

\end{tikzpicture}
        }}
	\caption{Illustration of random variables $X_v$.}
	\label{fig:MorS}
\end{figure}

\begin{claim}\label{clm:ms-NVsteps-1}
	For any $v \in D$, 
	\[
	(1-o(1)) \cdot \frac{d \cdot \alpha}{n} \leq \Pr\paren{X_v = 1} \leq \frac{d \cdot \alpha}{n}.
	\]
\end{claim}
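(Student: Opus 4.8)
The plan is to compute $\Pr\paren{X_v = 1} = \Pr\paren{\NL(v) \cap V_i \neq \emptyset}$ directly, using only the pairwise independence of $h_i$. Recall that $\NL(v)$ is a fixed set of exactly $d$ distinct neighbours of $v$ in $\Low{i}$, that $V_i = \set{u \in V : h_i(u) = 1}$, and that $h_i$ has range $[k]$ with $k = n/\alpha$. Writing $A_u$ for the event $\set{h_i(u) = 1}$ (so $\Pr\paren{A_u} = 1/k = \alpha/n$ for each $u$), we have $X_v = 1$ exactly when the union $\bigcup_{u \in \NL(v)} A_u$ occurs. Conditioned on all prior randomness and on $\eventmatch(i)$ and $\eventsparsify(i)$, the set $D$ and the parameter $d$ of \Cref{eq:ms-NVsteps-1} are fixed and the only remaining randomness is that of $h_i$, so this is a clean two-line estimate.

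For the upper bound I would apply the union bound: $\Pr\paren{X_v = 1} \leq \sum_{u \in \NL(v)} \Pr\paren{A_u} = d \cdot \frac{\alpha}{n}$, which is precisely the claimed right-hand side. For the lower bound I would use the second Bonferroni inequality (inclusion--exclusion truncated after the pairwise terms), namely $\Pr\paren{X_v = 1} \geq \sum_{u \in \NL(v)} \Pr\paren{A_u} - \sum_{u \neq w \in \NL(v)} \Pr\paren{A_u \cap A_w}$. Pairwise independence gives $\Pr\paren{A_u \cap A_w} = 1/k^2$ for each of the $\binom{d}{2}$ pairs, so $\Pr\paren{X_v = 1} \geq \frac{d}{k} - \binom{d}{2} \cdot \frac{1}{k^2} \geq \frac{d}{k}\left(1 - \frac{d}{2k}\right) = \frac{d \cdot \alpha}{n}\left(1 - \frac{d}{2k}\right)$. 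It then only remains to verify that the correction factor is $1 - o(1)$: since $d < \log^3 n$ by \Cref{eq:ms-NVsteps-1} and $k = n/\alpha \geq n^{\delta}$ because $\alpha \leq n^{1-\delta}$, we get $\frac{d}{2k} \leq \frac{\log^3 n}{2 \cdot n^{\delta}} = o(1)$, which finishes the claim.

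I do not anticipate any real obstacle here; the only point that requires care is that $h_i$ is merely pairwise independent rather than fully independent, which is exactly why an elementary union-bound/Bonferroni argument is the appropriate tool (rather than a product formula $1 - (1 - 1/k)^d$), and why the pairwise-independence hypothesis is precisely what is needed to control the second-order term $\sum_{u \neq w} \Pr\paren{A_u \cap A_w}$.
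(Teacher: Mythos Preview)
Your proposal is correct and is essentially identical to the paper's own proof: both use a union bound for the upper bound and the second Bonferroni inequality (exploiting pairwise independence of $h_i$) for the lower bound, and both verify the $(1-o(1))$ factor via $d < \log^3 n$ and $k = n/\alpha \geq n^{\delta}$. The only cosmetic difference is that the paper bounds the pairwise sum by $d^2/k^2$ rather than $\binom{d}{2}/k^2$, which is immaterial.
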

\begin{proof}
	$X_v = 1$ iff one of the neighbors of $v$ in $\NL(v)$ belongs to $V_i$. For the upper bound, by union bound, 
	\begin{align*}
		\Pr\paren{X_v = 1} \leq \sum_{u \in \NL(v)} \Pr\paren{u \in V_i} = \card{\NL(v)} \cdot \frac{1}{k} = \frac{d \cdot \alpha}{n}. \tag{as $\card{\NL(v)} = d$ and $k = n/\alpha$}
	\end{align*}
	For the lower bound, by inclusion-exclusion principle, 
	\begin{align*}
		\Pr\paren{X_v = 1} &\geq \sum_{u \in \NL(v)} \Pr\paren{u \in V_i}  - \sum_{u \neq w \in \NL(v)} \Pr\paren{u,w \in V_i} \\
		&\geq \card{\NL(v)} \cdot \frac{1}{k} - \card{\NL(v)}^2 \cdot \frac{1}{k^2} \\
		&= \frac{d \cdot \alpha}{n} \cdot \paren{1-\frac{d \cdot \alpha}{n}} \tag{as $\card{\NL(v)} = d$ and $k = n/\alpha$} \\
		&\geq (1-o(1)) \cdot \frac{d \cdot \alpha}{n},
	\end{align*}
	as $d < \log^3{n}$ by~\Cref{eq:ms-NVsteps-1} and $\alpha \leq n^{1-\delta}$. \Qed{\Cref{clm:ms-NVsteps-1}}
	
\end{proof}

By~\Cref{clm:ms-NVsteps-1} and the size of $D$ in~\Cref{eq:ms-NVsteps-1}, we have, 
\begin{align}
	(1-o(1)) \cdot \frac{19}{2}  \cdot \log^3{n} \leq \expect{X} \leq  \frac{19}{2} \cdot \log^3{n}. \label{eq:ms-NVsteps-exp}
\end{align}
Our goal now is to prove that $X$ is  concentrated. This requires a non-trivial proof as the variables $\set{X_v}_{v \in D}$ are correlated through their shared neighbors in $V_i$. But the fact that the subgraph 
induced on $\Low{i}$ is low-degree allows us to bound the variance of $X$ using a combinatorial argument in the following claim.  

\begin{claim}\label{clm:ms-NVsteps-var}
	$\var{X} \leq (1/8) \cdot \expect{X}^2$. 
\end{claim}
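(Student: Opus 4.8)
The plan is to use the standard second-moment decomposition $\var{X} = \sum_{v \in D} \var{X_v} + \sum_{u \neq v \in D} \cov{X_u,X_v}$ and estimate the three resulting contributions separately. The diagonal part is immediate: since each $X_v \in \set{0,1}$ we have $\var{X_v} \leq \expect{X_v}$, so $\sum_{v} \var{X_v} \leq \expect{X} \leq \tfrac{19}{2}\log^3{n}$ by~\Cref{eq:ms-NVsteps-exp}, which is negligible against the target $\tfrac18 \expect{X}^2 = \Theta(\log^6{n})$.

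The substance is in the off-diagonal covariances. First I would bound $\Pr\paren{X_u = 1 \text{ and } X_v = 1}$ by a union bound over all pairs $(a,b) \in \NL(u) \times \NL(v)$ of the event $\set{a,b \in V_i}$: pairs with $a \neq b$ contribute $1/k^2$ each by the pairwise independence of $h_i$ (of which there are at most $d^2$), and pairs with $a = b$ (there are $\card{\NL(u) \cap \NL(v)}$ of them) contribute $1/k$ each. This gives $\Pr\paren{X_u=1 \text{ and } X_v=1} \leq \tfrac{d^2}{k^2} + \tfrac{\card{\NL(u) \cap \NL(v)}}{k}$. Combined with the lower bound $\Pr\paren{X_u=1}\Pr\paren{X_v=1} \geq (1-o(1))\tfrac{d^2}{k^2}$ from~\Cref{clm:ms-NVsteps-1}, each covariance is at most $o(1)\cdot\tfrac{d^2}{k^2} + \tfrac{\card{\NL(u) \cap \NL(v)}}{k}$; summed over all ordered pairs in $D$ the first term contributes $o(1)\cdot\card{D}^2\tfrac{d^2}{k^2} = o(1)\cdot\paren{\tfrac{19}{2}\log^3{n}}^2 = o(\log^6{n})$, again negligible.

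The key term is $\tfrac1k \sum_{u \neq v \in D} \card{\NL(u) \cap \NL(v)}$, and here the \emph{low-degree structure} of $\Low{i}$ does the work. I would swap the order of summation: for each $w \in \Low{i}$ put $t_w := \card{\set{u \in D : w \in \NL(u)}}$, so that $\sum_{u \neq v} \card{\NL(u) \cap \NL(v)} = \sum_{w} t_w(t_w - 1) \leq \sum_w t_w^2$. Two facts now combine: (a) $\sum_w t_w = \sum_{u \in D} \card{\NL(u)} = \card{D}\cdot d = \tfrac{19}{2}\cdot\tfrac{n}{\alpha}\cdot\log^3{n}$; and (b) any contributing $u$ is a neighbour of $w$ inside $\Low{i}$, and $w \in \Low{i}$, so $t_w \leq \degG{w}{i} < \log^3{n}$. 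Hence $\sum_w t_w^2 \leq \log^3{n}\cdot\sum_w t_w = \tfrac{19}{2}\cdot\tfrac{n}{\alpha}\cdot\log^6{n}$, and dividing by $k = n/\alpha$ bounds this contribution by $\tfrac{19}{2}\log^6{n}$.

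Putting the three estimates together, $\var{X} \leq \tfrac{19}{2}\log^3{n} + o(\log^6{n}) + \tfrac{19}{2}\log^6{n}$, while $\expect{X}^2 \geq (1-o(1))\paren{\tfrac{19}{2}\log^3{n}}^2 = (1-o(1))\cdot\tfrac{361}{4}\log^6{n}$ by~\Cref{eq:ms-NVsteps-exp}; since $\tfrac{19}{2} = \tfrac{2}{19}\cdot\tfrac{361}{4} < \tfrac18\cdot\tfrac{361}{4}$, the inequality $\var{X} \leq \tfrac18\expect{X}^2$ follows for all sufficiently large $n$. I expect the main obstacle to be exactly this last combinatorial variance bound: pairwise independence alone does not make the $X_v$ independent, so one must isolate that all correlation passes through common neighbours landing in $V_i$, and then use that in the low-degree graph on $\Low{i}$ no vertex is a common neighbour of more than $\polylog(n)$ of the $u\in D$ — which is what caps each $t_w$ and forces $\sum_w t_w^2$ down to order $\log^3{n}\cdot\expect{X}$ instead of order $\expect{X}^2$.
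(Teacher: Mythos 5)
Your proof is correct and follows essentially the same route as the paper: the same variance decomposition, the same covariance bound via pairwise independence and common neighbours, and the same double-counting of $\sum_{u\neq v}\card{\NL(u)\cap \NL(v)}$ using the low-degree cap in $\Low{i}$ (your $t_w$ is the paper's $\card{\NL(z)}$, and your ``swap the order of summation'' is the paper's combinatorial double count). The only differences are cosmetic — you phrase the joint-event bound as a union over all of $\NL(u)\times\NL(v)$ rather than splitting into ``shared neighbour chosen'' versus ``two disjoint neighbours chosen,'' but both give $\com(u,v)/k + d^2/k^2$ — so no new ideas or gaps here.
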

\begin{proof}
For any two vertices $u \neq v \in D$, define $\Com(u,v) := \NL(u) \cap \NL(v)$ as the set of common
neighbors of $u$ and $v$ in subgraph of $\Low{i}$ defined by $\NL(\cdot)$ and let $\com(u,v) = \card{\Com(u,v)}$. We have, 
\begin{align}
	\var{X} &= \sum_{v \in D} \var{X_v} + \sum_{u \neq v \in D} \cov{X_u,X_v} \leq \expect{X} + \sum_{u \neq v \in D} \cov{X_u,X_v},
	\label{eq:ms-NVsteps-var}
\end{align}
as $X_v$ is an indicator random variable and thus $\var{X_v} \leq \expect{X_v}$. 
We thus need to bound the covariance-terms only. Recall that 
\begin{align*}
\cov{X_u,X_v} &= \expect{X_u \cdot X_v} - \expect{X_u}\expect{X_v} \\
&= \Pr\paren{\NL(u) \cap V_i \neq \emptyset \wedge \NL(v) \cap V_i \neq \emptyset} - \Pr\paren{X_u=1} \cdot \Pr\paren{X_v=1}.
\end{align*} 
We can bound the second part using~\Cref{clm:ms-NVsteps-1} for each probability-term. For the first part, notice that  for
$\NL(u) \cap V_i \neq \emptyset$ and $\NL(v) \cap V_i \neq \emptyset$ one of the following two things should happen: at least one of the shared neighbors of $u,v$ in $\Com(u,v)$ is chosen in $V_i$ or each of them separately have 
a neighbor in $\NL(u) - \Com(u,v)$ and $\NL(v) - \Com(u,v)$ those join $V_i$ (as $h_i(\cdot)$ is a pair-wise independent hash function, the probability of these two distinct vertices joining $V_i$ is independent). Thus, 
\begin{align*}
	\Pr\paren{\NL(u) \cap V_i \neq \emptyset \wedge \NL(v) \cap V_i \neq \emptyset} &\leq \sum_{w \in \Com(u,v)} \hspace{-0.5cm}\Pr\paren{w \in V_i} + 
	{\hspace{-0.5cm} \sum_{\substack{z_u \in \NL(u) - \Com(u,v)\\ z_v \in \NL(v) - \Com(u,v)}} \hspace{-1cm} \Pr\paren{z_u \in V_i} \cdot \Pr\paren{z_v \in V_i}} \\
	&\leq \com(u,v) \cdot \frac{1}{k} + d^2 \cdot \frac{1}{k^2} \tag{as $h_i(\cdot)$ is uniform over $[k]$ and $u,v \in D$ and each vertex in $D$ has exactly $d$ neighbors in $\NL(\cdot)$} \\
	&= \com(u,v) \cdot \frac{\alpha}{n} + \frac{d^2 \cdot \alpha^2}{n^2} \tag{as $k = n/\alpha$}. 
\end{align*}
Plugging in this for the first term of covariance and the bounds in~\Cref{clm:ms-NVsteps-1} for the second terms, we have, 
\[
	\cov{X_u,X_v} \leq \com(u,v) \cdot \frac{\alpha}{n} + \frac{d^2 \cdot \alpha^2}{n^2} - (1-o(1)) \cdot \frac{d^2 \cdot \alpha^2}{n^2} = \com(u,v) \cdot \frac{\alpha}{n} + o(1) \cdot \frac{d^2 \cdot \alpha^2}{n^2}. 
\]
By plugging in further in the RHS of~\Cref{eq:ms-NVsteps-var}, we get that, 
\begin{align*}
	\var{X} &\leq \expect{X} + \card{D^2} \cdot o(1) \cdot \frac{d^2 \cdot \alpha^2}{n^2} +  \sum_{u \neq v \in D} \com(u,v) \cdot \frac{\alpha}{n} \\
	&\leq \expect{X} + \paren{\frac{19 \cdot (n/\alpha) \cdot \log^3{n}}{2d}}^2 \cdot o(1) \cdot \frac{d^2 \cdot \alpha^2}{n^2} + \sum_{u \neq v \in D} \com(u,v) \cdot \frac{\alpha}{n} \tag{by the bound on size of $D$ in~\Cref{eq:ms-NVsteps-1}} \\
	&\leq o(1) \cdot \expect{X}^2 +  \frac{\alpha}{n} \cdot \sum_{u \neq v \in D} \com(u,v). \tag{by the lower bound on $\expect{X} \geq (1-o(1)) \cdot (19/2) \cdot \log^3{n}$ in~\Cref{eq:ms-NVsteps-exp}}
\end{align*}
The remaining part is then to compute the summation in the RHS which we do below using a double-counting argument. 
Note that $\sum_{u \neq v \in D} \com(u,v)$ counts the number of common neighbors inside $\NL(\cdot)$-subgraph of $\Low{i}$ for each pair of vertices in $D$. This can be alternatively 
counted by going over vertices in $\Low{i}$ that are neighbor to $D$ and count the number of pairs of neighbors (in $\NL(\cdot)$) they have in $D$. 
\begin{align*}
	\sum_{u \neq v \in D} \com(u,v) &= \sum_{z \in \NL(D)} {\binom{\card{\NL(z)}}{2}} \leq \sum_{z \in \NL(D)} \card{\NL(z)}^2 \\
	&\leq (\log^3{n}) \cdot \sum_{z \in \NL(D)} \card{\NL(z)} \tag{each vertex in $\Low{i}$ has degree at most $(\log^3{n})$ to $\Low{i}$ and $z \in \Low{i}$ as it is in $\NL(D)$} \\
	&\leq (\log^3{n}) \cdot \card{D} \cdot d \tag{as the sum-term counts the number of edges between $D$ and $\NL(D)$ which is at most $\card{D} \cdot d$}\\
	&\leq (\log^3{n}) \cdot (\frac{19}{2} \cdot (n/\alpha) \cdot \log^3{n}). \tag{by~\Cref{eq:ms-NVsteps-1} on the size of $D$}
\end{align*}

Plugging in this bound in the upper bound on $\var{X}$ in the earlier equation, we have, 
\begin{align*}
	\var{X} &\leq o(1) \cdot \expect{X}^2 +  \frac{\alpha}{n} \cdot \paren{\frac{19}{2} \cdot (n/\alpha) \cdot \log^6{n}} \\
	&= o(1) \cdot \expect{X}^2 +  \frac{19}{2} \cdot \log^6{n} \\
	&< \frac{1}{8} \cdot \expect{X}^2,  
\end{align*}
as $\expect{X} \geq  (1-o(1)) \cdot \dfrac{19}{2} \cdot (\log^3{n})$ by~\Cref{eq:ms-NVsteps-exp}.  \Qed{\Cref{clm:ms-NVsteps-var}} 

\end{proof}

Recall that 
\[
\card{N(V_i) \cap \GE{i}} \geq X.
\]
Given the bound on expectation and variance of $X$ in~\Cref{eq:ms-NVsteps-exp} and~\Cref{clm:ms-NVsteps-var}, respectively, we can now 
apply Chebyshev's inequality and get that,
\begin{align*}
	\Pr\paren{\card{N(V_i) \cap \GE{i}} < 2 \cdot \log^3{n}} \leq \Pr\paren{\card{X-\expect{X}} \geq \frac{15}{19} \cdot \expect{X}} \leq \frac{19^2 \cdot \var{X}}{15^2\cdot\expect{X}^2} \leq \frac{19^2}{15^2 \cdot 8} < \frac{1}{4}. 
\end{align*}
Additionally, we also have that the probability that $V_i$ is not clean is at most, 
\[
	\Pr\paren{\text{$V_i$ is not clean}} \leq \sum_{u \in V(\ME{i})} \Pr\paren{u \in V_i} < \frac{n}{4\alpha} \cdot \frac{\alpha}{n} = \frac{1}{4}. 
\]
By a union bound on the two equations above, we have, 
\begin{align}
\Pr\paren{\card{N(V_i) \cap \GE{i}} \geq 2\log^3{n} ~\text{and $V_i$ is clean}} \geq \frac12. 
\label{eq:ms-NVsteps-cond}
\end{align}

The rest of the proof is similar to that of \textbf{$\bm{V_i}$-steps}. We condition the choice of $V_i$ and assume the event of~\Cref{eq:ms-NVsteps-cond} has happened. 
Thus, we have that both $V_i$ is clean and $N(V_i)$ has at least $2 \log^3{n}$ vertices in $\GE{i}$. 
Moreover, $N(V_i)$ can have at most $(n/4\alpha)$ neighbors outside $\GE{i}$ by the bound on the total number of 
matched vertices by~$\eventmatch(i)$. As the choice of $e_i$ from $\NES(G,V_i)$ is uniform over $N(V_i)$, we have, 
\begin{align*}
	\Pr_{e_i}\paren{\text{$e_i$ is from $V_i$ to $N(V_i) \cap \GE{i}$} \mid V_i} &\geq (1-\delta_F) \cdot \frac{\card{N(V_i) \cap \GE{i}}}{\card{N(V_i)}} \\
	&\geq (1-\delta_F) \cdot \frac{2 \log^3{n}}{(n/4\alpha) + 2 \log^3{n}} \\
	&> \frac{3 \cdot \alpha \cdot \log^3{n}}{n},
\end{align*}
as $\alpha \leq n^{1-\delta}$ and $\delta_F < 1/4$. Given that all of $V_i$ is also unmatched (as $V_i$ is clean), 
we can include $e_i$ in $\ME{i+1}$ greedily whenever the event of the LHS above  happens. 

Consequently, combining the two events above, we have, 
\begin{align*}
	\Pr_{(V_i,e_i)}\Paren{\ME{i+1} > \ME{i}} \geq & \Pr_{V_i}\paren{\card{N(V_i) \cap \GE{i}} \geq 2 \log^3{n} ~\text{and $V_i$ is clean}} \cdot \\ & \Pr_{e_i}\paren{\text{$e_i$ is from $V_i$ to $N(V_i) \cap \GE{i}$} \mid V_i} \\
	\geq & \frac{1}{2} \cdot \frac{3 \cdot \alpha \cdot \log^3{n}}{n} > \frac{\alpha\log^3\!{n}}{n}.
\end{align*}
This concludes the proof of~\Cref{lem:ms-increase} in this case also. 

\subsubsection*{Concluding the Proof of~\Cref{lem:ms-2nd-batch}}

 By~\Cref{lem:ms-increase}, assuming the events $\eventmatch(i),\eventsparsify(i)$ hold for every $i\in (s,2s]$, size of $\ME{2s}$ 
statistically dominates sum of $s$ independent Bernoulli random variables $\set{Z_i}_{i=1}^s$ with mean $({\alpha \cdot \log^3\!{n}}/4n)$ (RHS of~\Cref{lem:ms-increase}). 
Let $Z = \sum_{i=1}^s Z_i$. Thus, by the choice of $s$ in~\Cref{alg:ms},
\[
\expect{Z} = \frac{n^2}{\alpha^2 \cdot \log^3{n}} \cdot \frac{\alpha \cdot \log^3\!{n}}{4 \cdot n} = \frac{n}{4 \cdot \alpha},
\]
 and by the Chernoff bound (\Cref{prop:chernoff}), 
\begin{align*}
	\Pr\paren{Z< \frac{n}{8\alpha}} < \Pr\paren{Z < \frac{1}{2} \cdot \expect{Z}} \leq \exp\paren{-\frac{n}{12\alpha}} \ll 1/\poly{(n)}, 
\end{align*}
where the final bound is by $\alpha \leq n^{1-\delta}$. This means that as long as $\eventmatch(i),\eventsparsify(i)$ happen for all $i \in (s:2s]$, with high probability we are going 
to end up with a matching $\Measy$ of size at least $(n/8\alpha)$, which means $\eventmatch(2s)$ does not happen as desired.

This concludes the proof of~\Cref{lem:ms}.

\end{document}